\documentclass[11pt,oneside]{article}

\newif\ifappendixproofs
\appendixproofstrue

\usepackage[T1]{fontenc}
\usepackage[utf8]{inputenc}
\usepackage{lmodern}
\usepackage[a4paper,margin=1in]{geometry}
\usepackage[nopatch=footnote]{microtype}
\usepackage{graphicx}
\usepackage{xcolor}
\usepackage{transparent}
\usepackage{textcomp}

\usepackage{amsthm}
\usepackage{amsmath}
\usepackage{amssymb}
\usepackage{physics}
\usepackage{mathtools}
\usepackage{csquotes}
\usepackage{enumitem}
\setlist[enumerate]{label=\roman*.}

\usepackage{stmaryrd}
\SetSymbolFont{stmry}{bold}{U}{stmry}{m}{n}
\usepackage{authblk}

\usepackage[all]{xy}
\usepackage{braket}

\usepackage{tikz}
\usetikzlibrary{decorations.markings}
\usetikzlibrary{decorations.pathreplacing}
\usetikzlibrary{calc,math,shadows}

\usepackage{tikzit}
\input{style/tikz/DZX.tikzdefs}

\tikzstyle{box}=[fill=white, font={\small}, draw=black, shape=rectangle, inner sep=2.5pt]
\tikzstyle{wirelable}=[node on layer=labeltextlayer, font={\scriptsize}, scale=.9, text={wirelabelcolor}, fill=none, inner sep=1pt,  tikzit fill={rgb,255: red,255; green,191; blue,191}]

\tikzstyle{wn}=[style=spider, style=phase-thin-spider, fill = addspiderfillcolor, tikzit fill=white, tikzit draw=black, tikzit shape=circle, tikzit category=GLA]
\tikzstyle{bn}=[style=spider, style=phase-thin-spider, fill=copyspiderfillcolor, tikzit draw=black, tikzit shape=circle, tikzit category=GLA, tikzit fill={rgb,255: red,128; green,128; blue,128}]
\tikzstyle{gwn}=[style=wn, style=very thick, style=phase-thick-spider, tikzit shape=circle, tikzit draw=magenta]
\tikzstyle{gbn}=[style=bn, style=very thick, style=phase-thick-spider, tikzit shape=circle, tikzit draw=magenta, tikzit fill={rgb,255: red,128; green,128; blue,128}]
\tikzstyle{ggn}=[style=gbn]
\tikzstyle{grn}=[style=gwn]

\tikzstyle{wphase}=[bubble-base, bubble-white-node, tikzit category=ZX, tikzit fill=white, tikzit draw=blue, tikzit shape=rectangle]
\tikzstyle{whphase}=[style=wphase]
\tikzstyle{bphase}=[bubble-base, bubble-black-node, tikzit category=ZX, tikzit fill={rgb,255: red,128; green,128; blue,128}, tikzit draw=blue, tikzit shape=rectangle]
\tikzstyle{gphase}=[style=bphase]
\tikzstyle{rphase}=[style=wphase]
\tikzstyle{mphase}=[rectangle, rounded corners=1pt, draw=black!60, fill=black!15, inner sep=2pt, font={\scriptsize\boldmath}, minimum height=.3cm, minimum width=.3cm]

\tikzstyle{lmat}=[style=mat, signal to=west, signal from=east, adapt-horizontal-angle, inner xsep=\matHeadPad, tikzit fill=gray, tikzit category=GLA] 
\tikzstyle{rmat}=[style=mat, signal to=east, signal from=west, adapt-horizontal-angle, inner xsep=\matHeadPad, tikzit fill=gray, tikzit category=GLA] 
\tikzstyle{dmat}=[style=mat, signal to=south, signal from=north, outer sep=0pt, adapt-vertical-angle, inner ysep=\matHeadPad, tikzit fill=gray, tikzit category=GLA] 
\tikzstyle{umat}=[style=mat, signal to=north, signal from=south, outer sep=0pt, adapt-vertical-angle, inner ysep=\matHeadPad, tikzit fill=gray, tikzit category=GLA] 

\tikzstyle{glmat}=[style=lmat, very thick, tikzit draw=magenta, tikzit fill=white]
\tikzstyle{grmat}=[style=rmat, very thick, tikzit draw=magenta, tikzit fill=white]
\tikzstyle{gdmat}=[style=dmat, very thick, tikzit draw=magenta, tikzit fill=white]
\tikzstyle{gumat}=[style=umat, very thick, tikzit draw=magenta, tikzit fill=white]
\tikzstyle{lmatt}=[style=glmat]
\tikzstyle{rmatt}=[style=grmat]
\tikzstyle{umatt}=[style=gumat]
\tikzstyle{dmatt}=[style=gdmat]

\tikzstyle{gather}=[draw=none, fill=none, inner sep=0pt, minimum size=0pt]
\tikzstyle{divide}=[style=gather]
\tikzstyle{scalar}=[style=gather]
\tikzstyle{multiplexer}=[style=gather]

\tikzstyle{dashed}=[-, dashed]
\tikzstyle{very}=[style=thick]

\tikzstyle{had}=[fill=hadamardfillcolor, text=hadamardtextcolor, draw=black, shape=rounded rectangle, rounded rectangle east arc=none, rounded rectangle west arc=none, rounded rectangle arc length=80, inner sep=1.5pt, minimum height=5pt, minimum width=5pt, font={\scriptsize\boldmath}, tikzit category=ZX, tikzit fill=yellow, tikzit draw=black]
\tikzstyle{rhad}=[style=hadshape, path picture={\hadcapeast}, inner xsep=\hadHeadPad, tikzit category=ZX, tikzit fill=yellow, tikzit draw=black]
\tikzstyle{lhad}=[style=hadshape, path picture={\hadcapwest}, inner xsep=\hadHeadPad, tikzit category=ZX, tikzit fill=yellow, tikzit draw=black]
\tikzstyle{dhad}=[style=hadshape, path picture={\hadcapsouth}, inner ysep=\hadHeadPad, tikzit category=ZX, tikzit fill=yellow, tikzit draw=black]
\tikzstyle{uhad}=[style=hadshape, path picture={\hadcapnorth}, inner ysep=\hadHeadPad, tikzit category=ZX, tikzit fill=yellow, tikzit draw=black]
\tikzstyle{had-thick}=[outer sep=-\hadThickInset, /utils/exec={}]
\tikzstyle{ghad}=[style=had, very thick, tikzit category=ZX, tikzit fill=yellow, tikzit draw=magenta]
\tikzstyle{grhad}=[style=rhad, style=had-thick, tikzit category=ZX, tikzit fill=yellow, tikzit draw=magenta]
\tikzstyle{glhad}=[style=lhad, style=had-thick, tikzit category=ZX, tikzit fill=yellow, tikzit draw=magenta]
\tikzstyle{gdhad}=[style=dhad, style=had-thick, tikzit category=ZX, tikzit fill=yellow, tikzit draw=magenta]
\tikzstyle{guhad}=[style=uhad, style=had-thick, tikzit category=ZX, tikzit fill=yellow, tikzit draw=magenta]

\tikzstyle{groundout}=[style=groundoutshape]
\tikzstyle{groundin}=[style=groundinshape]

\tikzstyle{dwphase}=[-, pointer-white, phasetosouth, tikzit fill=white, tikzit draw=white]
\tikzstyle{uwphase}=[-, pointer-white, phasetonorth, tikzit fill=white, tikzit draw=white]
\tikzstyle{lwphase}=[-, pointer-white, phasetowest, tikzit fill=white, tikzit draw=white]
\tikzstyle{rwphase}=[-, pointer-white, phasetoeast, tikzit fill=white, tikzit draw=white]
\tikzstyle{dbphase}=[-, pointer-black, phasetosouth, tikzit fill=white, tikzit draw=gray]
\tikzstyle{ubphase}=[-, pointer-black, phasetonorth, tikzit fill=white, tikzit draw=gray]
\tikzstyle{lbphase}=[-, pointer-black, phasetowest, tikzit fill=white, tikzit draw=gray]
\tikzstyle{rbphase}=[-, pointer-black, phasetoeast, tikzit fill=white, tikzit draw=gray]

\tikzstyle{backgroundborder}=[-, fill=none, draw=backgroundbordercolour, tikzit draw=grey, on layer=midlayer, line width=.2pt]

\tikzstyle{backgroundborderless}=[-, fill=backgroundcolour, draw=none, tikzit fill=gray, on layer=backlayer]
\tikzstyle{background}=[-, preaction={fill=backgroundcolour, on layer=  backlayer}, draw=backgroundbordercolour, tikzit draw=black, tikzit fill=gray, on layer=midlayer, line width=.2pt]
\tikzstyle{cobackground}=[-, fill=white, draw=backgroundbordercolour, tikzit draw=black, tikzit fill=white, on layer=midlayer, line width=.2pt]
\tikzstyle{white}=[-, fill=white, draw=black, tikzit fill=white, tikzit draw=black]
\tikzstyle{bbox}=[-, fill=white]
\tikzstyle{stream}=[-, style=thick, tikzit draw=red, postaction=decorate, decoration={markings, mark=at position 0.5 with {\arrow[scale=1.2]{>}}}]
\tikzstyle{thin}=[-, line width=.2pt]

\tikzstyle{thick}=[-, style=very thick, tikzit draw=blue]

\usepackage{amsthm}
\usepackage{aliascnt}
\usepackage{hyperref}
\usepackage[nameinlink,capitalize,noabbrev]{cleveref}

\AtBeginDocument{%
  \pdfstringdefDisableCommands{
    \def\\{ }
  }
}

\newtheorem{theorem}{Theorem}[section]

\newaliascnt{lemma}{theorem}
\newtheorem{lemma}[lemma]{Lemma}
\aliascntresetthe{lemma}

\newaliascnt{proposition}{theorem}
\newtheorem{proposition}[proposition]{Proposition}
\aliascntresetthe{proposition}

\newaliascnt{corollary}{theorem}
\newtheorem{corollary}[corollary]{Corollary}
\aliascntresetthe{corollary}

\newaliascnt{definition}{theorem}
\theoremstyle{definition}
\newtheorem{definition}[definition]{Definition}
\aliascntresetthe{definition}

\newaliascnt{remark}{theorem}
\theoremstyle{remark}
\newtheorem{remark}[remark]{Remark}
\aliascntresetthe{remark}

\newtheorem*{remark*}{Remark}

\Crefname{lemma}{Lemma}{Lemmas}
\crefname{proposition}{proposition}{propositions}
\Crefname{proposition}{Proposition}{Propositions}
\crefname{corollary}{corollary}{corollaries}
\Crefname{corollary}{Corollary}{Corollaries}
\crefname{definition}{definition}{definitions}
\Crefname{definition}{Definition}{Definitions}
\crefname{remark}{remark}{remarks}
\Crefname{remark}{Remark}{Remarks}
\crefname{theorem}{theorem}{theorems}
\Crefname{theorem}{Theorem}{Theorems}
\crefname{section}{section}{sections}
\Crefname{section}{Section}{Sections}
\crefname{equation}{equation}{equations}
\Crefname{equation}{Equation}{Equations}
\crefname{appendix}{appendix}{appendices}
\Crefname{appendix}{Appendix}{Appendices}

\ifappendixproofs

  \usepackage[createShortEnv]{proof-at-the-end}

  \pratendSetGlobal{end}

\else

  \newcommand{\pratendSetLocal}[1]{}
  \newcommand{\printProofs}[1][]{}

  \newenvironment{theoremE}[1][]{
    \begin{theorem}[#1]
  }{
    \end{theorem}
  }

  \newenvironment{lemmaE}[1][]{
    \begin{lemma}[#1]
  }{
    \end{lemma}
  }

  \newenvironment{propositionE}[1][]{
    \begin{proposition}[#1]
  }{
    \end{proposition}
  }

  \newenvironment{corollaryE}[1][]{
    \begin{corollary}[#1]
  }{
    \end{corollary}
  }

  \newenvironment{proofE}{
    \begin{proof}
  }{
    \end{proof}
  }
  
  \newenvironment{textAtEnd}{
  }{
  }
\fi

\newcommand{\Z}{\mathbb{Z}}
\newcommand{\N}{\mathbb{N}}
\newcommand{\F}{\mathbb{F}}
\newcommand{\C}{\mathbb{C}}
\newcommand{\R}{\mathbb{R}}
\newcommand{\K}{\mathbb{K}}

\newcommand{\I}{\mathcal{I}}
\newcommand{\JJ}{\mathcal{J}}
\newcommand{\J}{\Z_{\smash{\leq}}^2}

\makeatletter

\newcommand{\seriesbracketscale}{.8}
\newcommand{\seriesbracketraise}{.12ex}
\newcommand{\seriesargraise}{.065ex}

\newcommand{\seriesdelim}[2]{%
  \raisebox{\seriesbracketraise}{%
    \scalebox{1}[\seriesbracketscale]{$\m@th#1#2$}%
  }%
}

\newcommand{\seriesarg}[2]{%
  \raisebox{\seriesargraise}{$\m@th#1#2$}%
}

\newcommand{\Laurent}[1]{\mathpalette\Laurent@{#1}}
\newcommand{\Laurent@}[2]{%
  \mathopen{\seriesdelim{#1}{(\mkern-4.7mu(}}%
  \mkern-1mu \seriesarg{#1}{#2}\mkern-1mu
  \mathclose{\seriesdelim{#1}{)\mkern-4.7mu)}}%
}

\newcommand{\PowerSeries}[1]{\mathpalette\PowerSeries@{#1}}
\newcommand{\PowerSeries@}[2]{%
  \mathopen{\seriesdelim{#1}{[\mkern-3mu[}}%
  \mkern-1mu \seriesarg{#1}{#2}\mkern-1mu
  \mathclose{\seriesdelim{#1}{]\mkern-3mu]}}%
}

\newcommand{\Polynomial}[1]{\mathpalette\Polynomial@{#1}}
\newcommand{\Polynomial@}[2]{%
  \mathopen{\seriesdelim{#1}{[}}%
  \mkern-1mu \seriesarg{#1}{#2}\mkern-1mu
  \mathclose{\seriesdelim{#1}{]}}%
}

\newcommand{\Rational}[1]{\mathpalette\Rational@{#1}}
\newcommand{\Rational@}[2]{%
  \mathopen{\seriesdelim{#1}{(}}%
  \mkern-1mu \seriesarg{#1}{#2}\mkern-1mu
  \mathclose{\seriesdelim{#1}{)}}%
}

\makeatother

\newcommand{\CoeffField}[1]{\mathop{#1}\nolimits}

\NewDocumentCommand{\fps}{O{\K} O{z}}{\CoeffField{#1}\PowerSeries{#2}}
\NewDocumentCommand{\fls}{O{\K} O{z}}{\CoeffField{#1}\Laurent{#2}}
\NewDocumentCommand{\poly}{O{\K} O{z}}{\CoeffField{#1}\Polynomial{#2}}
\NewDocumentCommand{\rat}{O{\K} O{z}}{\CoeffField{#1}\Rational{#2}}

\renewcommand{\Set}{\mathsf{Set}}
\newcommand{\Par}{\mathsf{Par}}
\newcommand{\FSet}{\mathsf{F}\Set}
\newcommand{\FinStoch}{\mathsf{FinStoch}}
\newcommand{\Borel}{\mathsf{BorelStoch}_{\leq 1}}
\newcommand{\BorelTotal}{\mathsf{BorelStoch}}

\newcommand{\CPM}{\mathsf{CPM}}
\newcommand{\CPTP}{\mathsf{CPTP}}
\newcommand{\CPTNI}{\mathsf{CPTNI}}

\newcommand{\Haus}{\mathsf{KHaus}}

\newcommand{\Rel}{\mathsf{Rel}}
\newcommand{\Aff}{\mathsf{Aff}}
\newcommand{\FAff}{\mathsf{FAff}}
\newcommand{\FVect}{\mathsf{FVect}}
\newcommand{\LR}{\mathsf{LR}}
\newcommand{\AR}{\mathsf{AR}}

\newcommand{\AffRel}[1]{\AR_{\smash{\scriptscriptstyle{#1}}}}
\newcommand{\FLinRel}[1]{\LR_{\smash{\scriptscriptstyle{#1}}}^{\fd}}
\newcommand{\FAffRel}[1]{\AR_{\smash{\scriptscriptstyle{#1}}}^{\fd}}

\newcommand{\CC}{\mathcal{C}}
\newcommand{\DD}{\mathcal{D}}
\newcommand{\EE}{\mathcal{E}}

\newcommand{\HH}{\mathcal{H}}
\newcommand{\KK}{\mathcal{K}}
\newcommand{\BB}{\mathcal{B}}

\DeclareMathOperator{\St}{St}
\DeclareMathOperator{\Obs}{Obs}
\DeclareMathOperator{\Stream}{Stream}
\DeclareMathOperator{\Lim}{Lim}
\DeclareMathOperator{\Unroll}{\mathtt{Unroll}}
\DeclareMathOperator{\Ext}{\mathtt{Ext}}
\DeclareMathOperator{\Stage}{Stage}
\DeclareMathOperator{\beh}{beh}
\DeclareMathOperator{\ZZ}{\mathcal{Z}}
\DeclareMathOperator{\rel}{Rel}

\newcommand{\shuffle}{\zeta}
\newcommand{\swap}{\sigma}

\newcommand{\discard}{\top}
\newcommand{\codiscard}{\bot}

\newcommand{\cpy}{\Lambda}
\newcommand{\mult}{\mu}

\newcommand{\fbk}{\mathsf{fbk}}

\newcommand{\shftZ}{\mathtt{Shift}}
\newcommand{\shftN}{\mathtt{Delay}}

\DeclareMathOperator{\fix}{\mathtt{fix}}

\newcommand{\Mon}{\mathsf{Mon}}
\newcommand{\Tot}{\mathsf{Tot}}
\newcommand{\Caus}{\mathsf{Caus}}

\newcommand{\fd}{{\scriptscriptstyle \mathsf{fd}}}
\newcommand{\In}{\mathsf{in}}
\newcommand{\Out}{\mathsf{out}}

\DeclareMathOperator{\Pfin}{\mathcal{P}_{\!\scriptscriptstyle\mathit{f}}}
\DeclareMathOperator{\Ob}{ob}
\DeclareMathOperator{\ord}{ord}
\let\lim\relax
\DeclareMathOperator{\lim}{\mathtt{lim}}

\newdir{ >}{{}*!/-0.6em/@{>}}

\renewcommand{\hat}[1]{\widehat{#1}}

\newcommand{\smbigotimes}{\mathop{\textstyle\bigotimes}\limits}
\newcommand{\smbigoplus}{\mathop{\textstyle\bigoplus}\limits}

\newcommand{\vdotss}{ 	\tikz[baseline, every node/.style={inner sep=0}]{ 	\node at (0,0){.}; 	\node at (0,4pt){.}; 	\node at (0,8pt){.}; 	} 	}

\newcommand{\claimitem}[2]{\item\label{item:#1}\hyperref[para:#1]{#2}}
\newcommand{\claimparagraph}[2]{\phantomsection\label{para:#1}\paragraph{\ref{item:#1} #2}}

\theoremstyle{definition}
\newtheorem{example}{Example}[section]

\newtheoremstyle{restatement}
  {\topsep}
  {\topsep}
  {\itshape}
  {}
  {\bfseries}
  {.}
  {.5em}
  {\thmname{#1}\thmnote{ #3}}
\theoremstyle{restatement}
\newtheorem*{reppropositioninner}{Proposition}
\newtheorem*{repthminner}{Theorem}
\theoremstyle{definition}

\newenvironment{repthm}[1]{%
  \begin{repthminner}[\ref{#1}]
}{%
  \end{repthminner}
}

\title{Finite Observations, Infinite Behaviour:\\
  bicategorical semantics for stateful monoidal processes.
}

\author[1]{Cole Comfort}
\author[2]{Giovanni de Felice}

\affil[1]{Universit\'e Paris-Saclay, CNRS, ENS Paris-Saclay, Inria,
  CentraleSup\'elec\\ Laboratoire M\'ethodes Formelles,
  91190 Gif-sur-Yvette, France}

\affil[2]{Relational Intelligence Ltd.}
\date{}

\begin{document}

\maketitle

\begin{abstract}
  Time-dependent processes are often described by machines with an internal state which is updated as time evolves. An external observer cannot see this state and learns about a process only through finite observations of its inputs and outputs, each of which imposes a constraint on the trajectories the process can exhibit. We introduce a semantic construction in which two stateful processes have the same \emph{behaviour} when they have the same constraints, as determined by finite observations, independent of their internal state.
The construction is defined over any preorder-enriched monoidal category with a compatible notion of discarding, which we call a \emph{discard bicategory}, capturing partial, non-deterministic, probabilistic, and quantum processes. The resulting category of behaviours provides a functorial semantics for free feedback categories in the sense of Katis, Sabadini, and Walters. For non-deterministic systems, we prove a categorified compactness theorem: every compatible family of finite observations between compact Hausdorff spaces extends uniquely and functorially to an infinite closed relation. Restricted to affine relations over finite fields, the compactness theorem recovers Willems' notion of behaviour for linear time-invariant systems.
\end{abstract}

\section{Introduction}
\label{sec:intro}

Many systems in science and engineering run indefinitely: signal flow
graphs \cite{shannon}, stochastic processes
\cite{doob1991stochastic-processes}, and quantum channels with memory
\cite{kretschmann_quantum_2005} continuously transform inputs into
outputs. A natural model for such systems is a \emph{stateful process}:
one that maintains internal memory coupling past inputs to future
outputs. A common specification is a \emph{Mealy machine}: a transition
function \(f : S \times X \longrightarrow S \times Y\) which at each time
step consumes an input in \(X\), updates an internal state in \(S\), and
produces an output in \(Y\) \cite{Mealy1955}. The time evolution of a Mealy machine 
is obtained by initialising the state, repeatedly feeding the internal state back into
the transition function, and discarding the final state:
\begin{equation}
  \label{eq:mealy_unrolling}
  \begin{tikzpicture}
	\begin{pgfonlayer}{nodelayer}
		\node [style=box] (0) at (2.5, 0.525) {\(f\)};
		\node [style=none] (12) at (8.75, 1.275) {};
		\node [style=none] (13) at (1, 1.275) {};
		\node [style=none] (14) at (8.75, -1.275) {};
		\node [style=none] (15) at (1, -1.275) {};
		\node [style=none] (16) at (-1, 1.25) {};
		\node [style=none] (17) at (-3.5, 1.25) {};
		\node [style=none] (18) at (-1, -1) {};
		\node [style=none] (19) at (-3.5, -1) {};
		\node [style=box] (20) at (-2.25, -0.25) {\(f\)};
		\node [style=none] (21) at (-3.5, -0.5) {};
		\node [style=none] (22) at (-1, -0.5) {};
		\node [style=none] (23) at (-3, 0) {};
		\node [style=none] (24) at (-1.5, 0) {};
		\node [style=none] (25) at (-3, 0.75) {};
		\node [style=none] (26) at (-1.5, 0.75) {};
		\node [style=none] (27) at (-2.375, 0.75) {};
		\node [style=none] (28) at (-2.125, 1) {};
		\node [style=none] (29) at (-2.125, 0.5) {};
		\node [style=none] (30) at (0, 0) {\(\rightsquigarrow\)};
		\node [style=none] (31) at (1.5, 0.775) {};
		\node [style=none] (32) at (2.25, 0.775) {};
		\node [style=none] (33) at (2.75, 0.775) {};
		\node [style=none] (34) at (2.25, 0.275) {};
		\node [style=none] (35) at (2.75, 0.275) {};
		\node [style=none] (36) at (3.5, 0.275) {};
		\node [style=none] (37) at (3.5, 0.775) {};
		\node [style=box] (38) at (3.75, 0.025) {\(f\)};
		\node [style=none] (40) at (4, 0.275) {};
		\node [style=none] (41) at (1, -0.225) {};
		\node [style=none] (42) at (4, -0.225) {};
		\node [style=none] (43) at (4.75, -0.225) {};
		\node [style=none] (44) at (4.75, 0.275) {};
		\node [style=none] (45) at (1, 0.275) {};
		\node [style=box] (46) at (6.75, -0.475) {\(f\)};
		\node [style=none] (47) at (7, -0.225) {};
		\node [style=none] (48) at (1, -0.725) {};
		\node [style=none] (49) at (7, -0.725) {};
		\node [style=none] (50) at (7.75, -0.725) {};
		\node [style=none] (51) at (7.75, -0.225) {};
		\node [style=none] (52) at (7.75, 0.775) {};
		\node [style=none] (53) at (7.75, 0.275) {};
		\node [style=none] (56) at (0, 0.75) {\((t)\)};
		\node [style=none] (57) at (5.375, 0.025) {\scalebox{.6}{\(t\)}};
		\node [style=none] (58) at (6.025, -0.225) {};
		\node [style=none] (59) at (5.375, -0.25) {\(\cdots\)};
		\node [style=bn] (60) at (8.25, -0.725) {};
		\node [style=none] (61) at (8.75, -0.225) {};
		\node [style=none] (62) at (8.75, 0.775) {};
		\node [style=none] (63) at (8.75, 0.275) {};
		\node [style=wn] (64) at (1.5, 0.775) {};
	\end{pgfonlayer}
	\begin{pgfonlayer}{edgelayer}
		\draw [style=background] (13.center)
			 to (12.center)
			 to (14.center)
			 to (15.center)
			 to cycle;
		\draw [style=background] (17.center)
			 to (16.center)
			 to (18.center)
			 to (19.center)
			 to cycle;
		\draw (21.center) to (22.center);
		\draw (24.center)
			 to [bend right=90, looseness=1.25] (26.center)
			 to (25.center)
			 to [bend right=90, looseness=1.25] (23.center)
			 to cycle;
		\draw (27.center) to (28.center);
		\draw (29.center) to (27.center);
		\draw (31.center)
			 to (32.center)
			 to (33.center)
			 to [in=180, out=0] (36.center)
			 to (40.center)
			 to [in=180, out=0] (43.center);
		\draw (58.center)
			 to (47.center)
			 to [in=180, out=0] (50.center)
			 to (60);
		\draw (62.center)
			 to (52.center)
			 to (37.center)
			 to [in=0, out=180] (35.center)
			 to (34.center)
			 to (45.center);
		\draw (41.center)
			 to (42.center)
			 to [in=180, out=0] (44.center)
			 to (53.center)
			 to (63.center);
		\draw (48.center)
			 to [in=180, out=0] (49.center)
			 to [in=180, out=0] (51.center)
			 to (61.center);
	\end{pgfonlayer}
\end{tikzpicture}

\end{equation}

A Mealy machine \(f\) together with an initial state \(s_0 \in S\)
determines a stream transducer \(\mathsf{run}(f, s_0) : X^\N \to Y^\N\),
defined as the final fixed point of the coinductive equation
\begin{equation*}
  \mathsf{run}(f, s_0)(x_0, x_1, \dots) =
  \bigl(y_0,\, \mathsf{run}(f, s_1)(x_1, x_2, \dots)\bigr)
  \quad\text{where}\quad
  (s_1, y_0) = f(s_0, x_0) \,.
\end{equation*}
Two Mealy machines \(f\) and \(g\), with initial states \(s_0\) and
\(t_0\), have the same \emph{behaviour} when \(\mathsf{run}(f, s_0) =
\mathsf{run}(g, t_0)\). This coinductive characterisation is the standard
definition of behavioural equivalence in the total deterministic setting.
However, as soon as the transition function is allowed to be partial,
nondeterministic, probabilistic, or quantum, the coinductive fixed point need not exist,
and this notion of behaviour breaks down. This raises the fundamental
question that motivates the present paper:
\[
  \textit{When do two stateful processes produce the same infinite behaviour?}
\]

We seek a \emph{denotational} answer to this question, in the tradition initiated by
Scott~\cite{scott1970outline}. Our approach is process-theoretic \cite{resources}: 
we model processes as morphisms in symmetric monoidal
categories, where different choices of base category
accommodate partial \cite{cockett_restriction_2007}, nondeterministic \cite{carboni_cartesian_1987}, probabilistic \cite{fritz_synthetic_2020}, and quantum \cite{abramsky2009categorical} processes.
Following Katis, Sabadini and Walters
~\cite{Katis1997}
, we construct feedback categories of stateful processes over these theories and seek a functorial semantics that identifies two such processes whenever they have the same observable behaviour. 

\paragraph{What is behaviour?}
In Willems' behavioural approach to systems theory, the behaviour
of a time-dependent system is defined to be the set of infinite trajectories or signals satisfying some global constraints \cite{Willems}. 
\emph{Global} constraints, however, are not directly observable as any observation must be made in finite time.
The central goal of this paper is therefore to define an analogous
notion of behaviour from finite, local observations alone, with no a priori appeal to global constraints.

\paragraph{What is an observation?}
We model observations as morphisms in an underlying monoidal category,
on which we require two further pieces of structure. We first must be able
to \emph{compare} observations: for this we require a preorder
enrichment, which records when the constraints imposed by one
observation \(x\) approximate those imposed by another \(y\), denoted \(x\subseteq y\). We must also be able to \emph{restrict} the constraints imposed by  observations to smaller contexts: for this we require morphisms which discard information. A monoidal category equipped with both structures, suitably compatible, we call a \emph{discard bicategory} (\Cref{def:approx:discard-bicat-def}). Observing a process over time then yields a \emph{compatible family of observations}: in which each observation \(f_j\), restricted to a smaller context \(i\subseteq j\), refines the observation \(f_i\) in the smaller context,  so that \(f_j\vert_i\subseteq f_i\).

\paragraph{Infinite behaviour from finite observations.}
We define a notion of behaviour similar to that of Willems, except where constraints are determined \emph{locally} by families of finite observations.
We say that two compatible families of observations have the same
behaviour when any observation of one can be verified by an observation of the other, possibly in a larger context.
We formalise this in our central construction
(\Cref{sec:obs}): given a discard bicategory and a set of finite
contexts, we define a discard category of behaviours, whose
morphisms are equivalence classes of compatible families of finite
observations (\Cref{def:obs:obs} and \Cref{thm:obs:obs}).

\paragraph{Properties of our construction.}
Our construction possesses several desirable properties:

\begin{enumerate}
  \item \label{des:functoriality} \emph{Uniformity across process theories.} (\Cref{thm:obs:obs} and \Cref{prop:obs:functorial}) 
  Our construction applies uniformly to partial, nondeterministic, probabilistic, and quantum
  processes. Moreover, it is functorial with respect to base change.

\item \label{des:dinaturality} \emph{Dinaturality of state.} (\Cref{thm:obsfbk:trace-semantics,thm:obsfbk:obs-guarded-feedback})
  The internal state can be reparametrised dinaturally without changing the behaviour, and when the base is compact closed the construction forms a feedback category in the sense of Katis, Sabadini, and Walters~\cite{Katis1997}.

  \item \label{des:delay-naturality} \emph{Naturality of the time-delay.} (\Cref{prop:obsfbk:delay-natural-initialised} and \Cref{thm:obsfbk:obs-guarded-feedback})
  Observing a constraint at different times does not change the behaviour: the process that delays time is a natural transformation, which is moreover invertible when the base is compact closed.

  \item \label{des:inconsistency} \emph{Propagation of inconsistency.} (\Cref{thm:obsfbk:trace-semantics})
  Any two families of observations imposing inconsistent constraints have the same behaviour.

  \item \emph{Hiding of internal state.}
  By construction, internal constraints on the state which do not affect the input or output of a process are discarded.

  \item \label{des:generality-contexts} \emph{Generality of contexts.} (\Cref{def:obs:obs})
  Our construction is agnostic to the index over which a process evolves: although we mostly focus on processes in discrete time, the same construction applies to arbitrary directed families of finite contexts. Therefore, it provides a semantics for stateful processes observed over more general spaces rather than just over time.
\end{enumerate}

The first and last properties reflect the generality of our construction; whereas the other four reflect the systems-theoretic interpretation of behaviour which we generalise.

\paragraph{Recovering the global view of behaviours.}
In some cases, it is possible to recover a Willems-style global notion of behaviour from our local notion of behaviour.
Focusing on the nondeterministic setting, we prove a \emph{compactness theorem}
for closed relations between compact
Hausdorff spaces (\Cref{thm:compactness:compactness}). We show that every compatible family of closed relations functorially extends to a unique infinite closed relation between compact Hausdorff spaces.
When restricted to affine relations between finite dimensional vector spaces over finite fields, this recovers a categorified notion of Willems' global behaviour of linear discrete time invariant systems (\Cref{thm:signalflow:lti-compactness}).

\paragraph{Related work.}
The semantics of stateful processes in monoidal categories was first
studied by Katis, Sabadini, and Walters~\cite{Katis1997},
and has since been developed by several authors, including the
differentiable causal functions of Sprunger and Katsumata
\cite{sprunger}, the monoidal streams of Di Lavore et
al.\ \cite{lavore_monoidal_2022}, the effectful Mealy machines of
Bonchi et al.\ \cite{bonchi_effectful_2025}, and the delayed-trace
calculus of Carette et al.\ \cite{delayedtracememory}.
We argue in \Cref{sec:state} that none of these satisfies all of desiderata~i.--vi. above. We relate our construction to existing work in \Cref{sec:obs:dinaturality,sec:obs:compact-closed}.

Similar notions of behaviour and observational equivalence arise in
the semantics of programming languages
\cite{morris1969, Plotkin1977, abramsky1987observation} and in automata
theory \cite{Raney1958, rutten_universal_2000}.
There, the basic observable is termination: traditionally, two
programs are observationally equivalent when no context can
distinguish them by halting, so that every observation is registered
by a terminating computation. In our setting, by contrast, the
processes of interest run indefinitely, and their behaviour consists
of the constraints accumulated along the way. In other words, we
observe not whether a process halts, but how it continues. 
Nonetheless, the preorder enrichment at the heart of our construction
is inspired by Scott's domain-theoretic notion of approximation
\cite{scott1970outline}, with finite observations playing the role of
finite elements.

\newpage
\tableofcontents
\newpage

\section{Discard bicategories}
\label{sec:approx}
\pratendSetLocal{category=approx, end, restate}
In this section, we introduce the notion of \emph{discard bicategory}, which forms the structural basis for our construction of behaviours in \Cref{sec:obs}. A discard bicategory is a process theory in which we can both \emph{discard} the information held by a subsystem and compare processes by how \emph{informative} they are. After giving the abstract definition, we go through our examples of interest. In every case the approximation structure arises from a common discard bicategory structure, which we show subsumes restriction orders on partial functions, subset inclusion of relations, conditional order between probability kernels and the L\"owner order between quantum processes.

\subsection{Discard bicategories}\label{sec:approx:discard-bicat}

Recall the definition of symmetric monoidal category.

\begin{definition}\label{def:approx:smc}
  A \textbf{strict symmetric monoidal category} (SMC) is a category \(\CC\)
  equipped with a functor
  \(
  \otimes : \CC \times \CC \to \CC,
  \)
  called the \textbf{monoidal product}, and an object \(I \in \CC\), called the
  \textbf{monoidal unit}. For all objects \(X,Y,Z\), we impose that:
  \[(X \otimes Y) \otimes Z = X \otimes (Y \otimes Z), \quad\text{and}\quad I \otimes X = X = X \otimes I,\]
  in addition to a natural isomorphism, the \textbf{symmetry}:
  \[
    \swap_{X,Y} : X \otimes Y \to Y \otimes X\quad\text{such that}\quad \swap_{X,Y} = \swap_{Y,X}^{-1}.
  \]
\end{definition}
Non-strict SMCs are defined by replacing the equalities above with coherent natural isomorphisms.
Since every SMC is symmetric monoidally equivalent to a strict SMC \cite{Joyal1993}, we can work in the strict setting without loss of generality.

SMCs have a process-theoretic interpretation.
An object of \(\CC\) is a \emph{system}, and a morphism \(f : X \to Y\) is a \emph{process} taking the system \(X\) to the system \(Y\). The monoidal structure equips this with a notion of composition: processes may be composed \emph{in sequence} using \(\circ\), and \emph{in parallel} using \(\otimes\). Processes of the form \(e : X \to I\) into the monoidal unit are called \textbf{effects}; we interpret these as the \emph{observations} that can be made on the system \(X\). Among all observations on the system, there is a distinguished effect which simply forgets all information about \(X\).

\begin{definition}\label{def:approx:discard-category}
  A \textbf{discard category} is a SMC where every object \(X\) is equipped with an effect \(\discard_X: X \to I\), called the \textbf{discard}, such that \(\discard_{X \otimes Y} = \discard_X \otimes \discard_Y\) and \(\discard_I = 1_I\).
  A \textbf{discard functor} is a symmetric monoidal functor preserving the discard morphisms. Moreover, a morphism \(f:X\to Y\) is \textbf{total} in case
  \(\discard_Y \circ f = \discard_X\). A discard category in which all morphisms are total is called an \textbf{affine symmetric monoidal category}.
\end{definition}

The compatibility conditions say that discarding a composite system is the same as discarding its parts independently, so that the discard is precisely what makes \(X\) decompose into \emph{subsystems}.

To compare processes by how informative they are, we enrich in a preorder.
\begin{definition}\label{def:approx:preorder}
  A \textbf{preorder} on a set \(X\) is a binary relation \(\subseteq\) on \(X\) which is reflexive so that for all \(x\in X\), \(x\subseteq x\); and transitive so that for all \(x,y,z\in X\) such that \(x\subseteq y\) and \(y \subseteq z\) then \(x\subseteq z\).
  When a preorder is moreover antisymmetric, so that  \(x \subseteq y\) and \(y \subseteq x\) implies \(x = y\), then it is called a \textbf{partial order}, and \(X\) equipped with this order is called a \textbf{poset}.  Every set \(X\) is canonically equipped with the \textbf{discrete partial order} where \(x \subseteq y\) if and only if \(x = y\).
\end{definition}
\begin{definition}\label{def:approx:preorder-enrichment}
  A \textbf{preorder enrichment} for a SMC \(\CC\) is a preorder structure \(\subseteq\) on every hom-set \(\CC(X, Y)\)
  such that for all morphisms \(f,f',g,\) and \(g'\) of the appropriate type:
  \[
    \text{if \(f \subseteq f'\) and \(g \subseteq g'\) then \(f \otimes g \subseteq f' \otimes g',\)}
    \quad\text{moreover}\quad
    \text{if \(f \subseteq f'\) and \(g \subseteq g'\) then \(f \circ g \subseteq f' \circ g'\).}
  \]
  A \textbf{symmetric monoidal preorder-enriched functor} is a symmetric monoidal functor between preorder-enriched categories which is monotone on hom-sets.
  A \textbf{preorder-enriched discard category} is a discard category equipped with a preorder enrichment.
\end{definition}

We interpret \(f \subseteq g\) as denoting that \(g\) is an \emph{approximation} of \(f\): the two processes are compatible, but \(g\) is possibly less informative than \(f\). The two enrichment conditions say exactly that this notion of approximation is respected by composing processes in sequence and in parallel.

Combining these two structures, a discard bicategory is a preorder-enriched discard category in which the discard is compatible with approximation, in the following sense:
\begin{definition}\label{def:approx:discard-bicat-def}
  A \textbf{discard bicategory} is a preorder-enriched discard category with a \emph{laxly natural discard}, so that for any map \(f: X \to Y\),  we have that 
  \(\discard_Y \circ f \subseteq \discard_X\).
  A \textbf{discard bicategory functor} is a preorder-enriched discard functor between discard bicategories.
\end{definition}

The lax naturality axiom says that running a process and then discarding is approximated by discarding the input directly. In other words, \emph{the discard is the least informative effect}, it approximates any other process. 
  For example, given processes \(f:X\to Y \otimes S\) and \(g:X\to Y \otimes T\) 
  such that \((1_Y \otimes \discard_T) \circ g \subseteq (1_Y \otimes \discard_S) \circ f\), 
  we interpret \(g\) as giving more information about the system \(Y\) than \(f\).

Many process theories have a bottom element in the approximation order called \emph{zero morphism} and corresponding to the most informative effect: denoting that a particular event is impossible or \emph{inconsistent}.

\begin{definition}\label{def:state:zero-morphisms}
  A SMC \(\CC\) has
  \textbf{monoidal zero morphisms} if for all objects \(X,Y\) there exists
  a morphism \(0_{X,Y}\colon X\to Y\) such that for all morphisms \(f\colon X'\to X\) and \(g\colon Y\to Y'\),
  \[
    0_{X,Y}\circ f = 0_{X',Y}, \quad
    g\circ 0_{X,Y} = 0_{X,Y'},
    \quad \text{and} \quad
    1_Z\otimes 0_{X,Y} = 0_{Z\otimes X,\,Z\otimes Y}.
  \]
\end{definition}

\begin{lemmaE}\label{lem:approx:zero}
  If a discard bicategory has zero morphisms then zero is the bottom element in every hom-set.
\end{lemmaE}
\begin{proofE}
  By definition of discard bicategory we have $\discard_I \circ 0_{I, I} \subseteq \discard_I$, 
  then using the definition of zero morphism, for any $f: X \to Y$
  \begin{equation*}
    0_{X, Y} = f \otimes 0_{I, I} = f \otimes (\discard_I \circ 0_{I, I}) \subseteq f \otimes \discard_I = f. \tag*{\qedhere}
  \end{equation*} 
\end{proofE}

The notion of discard bicategory can be seen as a generalisation of Carboni and Walters' notion of a Cartesian bicategory of relations \cite{carboni_cartesian_1987}. In this setting, \emph{compact closed} structure induces a form of nondeterminism in the computation. We recall the definition.

\begin{definition}
  A \textbf{compact-closed category} (CCC) is a SMC where every object \(X\) has a \emph{dual}  \(X^\ast\), such that for all objects \(X\) and \(Y\), \((X\otimes Y)^\ast = Y^\ast \otimes X^\ast\); for all objects \(X\), there are morphisms
  \(\cup_X: I \to X \otimes X^\ast\)
  and
  \(\cap_X: X^\ast \otimes X \to I\),
  such that:
  \[
    (1_X \otimes \cap_X) \circ (\cup_X \otimes 1_X) = 1_X, \quad\text{and}\quad
    (\cap_X \otimes 1_{X^\ast}) \circ (1_{X^\ast} \otimes \cup_X) = 1_{X^\ast}.
  \]
\end{definition}

Given any object \(X\) in a compact-closed discard bicategory, there is a canonical morphism \(\codiscard_X:I\to X\), which we interpret as the \emph{state of no knowledge} about the system \(X\). Moreover, the two maps \(\discard_X\) and \(\codiscard_X\) form an adjoint pair (in the monoidal bicategory sense):

\begin{lemmaE}\label{lem:approx:duality}
  If \(\CC\) is a compact-closed discard bicategory, then for any object \(X\) of \(\CC\), denoting \(\codiscard_X \coloneqq (1_X \otimes \discard_{X^\ast}) \circ \cup_X\), we have that
  \(\discard_X \circ \codiscard_X \subseteq 1_I\) 
  and 
  \(1_X \subseteq \codiscard_X \circ \discard_X\).
\end{lemmaE}
\begin{proofE}
  The first inequation follows directly by lax naturality of \(\discard_X\).
  The second inequation follows by the snake equation and lax naturality:
  \begin{align*}
    1_X = (1_X \otimes \cap_X) \circ (\cup_X \otimes 1_X)
    \subseteq (1_X \otimes \discard_X \otimes \discard_X) \circ (\cup_X \otimes 1_X)
    = \codiscard_X \circ \discard_X. \tag*{\qedhere}
  \end{align*}
\end{proofE}

Therefore, compact-closed discard bicategories are very close to being Cartesian bicategories, except that we do not ask for diagonal nor codiagonal maps. This generalisation is needed to capture the partial quantum examples, which have neither diagonals nor codiagonals. In the remainder of this section, we verify that all of our examples are discard bicategories.

\subsection{Classical process theories}\label{sec:approx:classical}

The prototypical example of a process theory is the category of sets and functions with the cartesian product, which can be seen as a model of deterministic computation. It is often useful to consider computations that can fail, in which case we obtain partial structure.

\begin{example}[Partial deterministic processes]\label{ex:approx:partial}
  The category \(\Par\) of sets and partial functions has sets as objects. Morphisms \(f: X \to Y\) consist of a subset \(X_f \subseteq X\), the \emph{domain of definition}, together with a function \(f: X_f \to Y\). Given an input \(x\in X \backslash X_f\) not in the domain of definition, we say that \(f\) is \emph{undefined} at \(x\), i.e.\ it returns no output. The monoidal product is the Cartesian product and the monoidal unit is the singleton set \(\{\ast\}\).

  The discard \(\discard_X : X \to \{\ast\}\) is the total function sending every element of \(X\) to \(\ast\). The approximation order is the \textbf{restriction order}: \(f \subseteq g\) when the domain of definition of \(f\) is contained in that of \(g\), and the two agree on \(X_f\). Thus a process is approximated by any of its restrictions to a smaller domain, and the discard is the unique total effect.

  \(\Par\) is a canonical example of Robinson and Rosolini's \cite{robinson_categories_1988} notion of a \emph{p-category}, also known as a \emph{Cartesian restriction category} by Cockett and Lack~\cite[P.~21]{cockett_restriction_2007}.
\end{example}

Partial functions are instances of relations which can be seen as a model of nondeterministic computation.

\begin{example}[Nondeterministic processes]\label{ex:approx:nondeterministic}
  The category \(\Rel\) of sets and relations has sets as objects and subsets \(R \subseteq X \times Y\) as morphisms \(R: X \to Y\). Composition and identities are given by
  \(R \circ S \coloneqq \{ (x,z) \ | \ \exists y: (x,y) \in S, (y,z) \in R\}\) and \(1_X \coloneqq \{ (x,x)\in X^2\}\).
  The monoidal structure is again given by the Cartesian product and the singleton set.

  The discard \(\discard_X : X \to \{\ast\}\) is the total relation \(\{(x, \ast) \mid x \in X\}\). The approximation order is \emph{subset inclusion}: \(S \subseteq T\) when \(S\) is a subset of \(T\). A relation is therefore approximated by any relation containing it, and the empty relation is the most informative process of its type, while the discard relates everything to the point \(\ast\).

  More generally, given a regular category \(\CC\) one can form the category of \emph{internal relations} \(\rel(\CC)\); in particular \(\rel(\Set)\simeq\Rel\). Relations internal to finite-dimensional vector spaces are a categorical semantics for linear control theory~\cite{Bonchi2021,Bonchi2015,Bonchi2017,baez,gaa,IH,lagrel,gsa}, as we investigate further in \Cref{sec:signalflow}; linear relations also model mechanical networks such as electrical circuits~\cite{networks,BaezFong}. Categories of internal relations are canonical examples of Carboni and Walters' notion of a \textbf{Cartesian bicategory} \cite[\S~1.]{carboni_cartesian_1987}, which are compact-closed with dual \(X^\ast = X\).
\end{example}

Several systems feature randomness and uncertainty. 
We recall the concrete categories of interest in categorical probability theory.

\begin{example}[Probabilistic processes]\label{ex:approx:probabilistic}
  Functions with random outcomes are commonly modelled in the category of measurable spaces and probability kernels \cite{Giry1982}. We mostly consider the subcategory \(\BorelTotal\) whose objects are standard Borel spaces \((X, \Sigma_X)\), where \(\Sigma_X\) is the Borel \(\sigma\)-algebra on a Polish space \(X\). The morphisms \(f: (X, \Sigma_X) \to (Y, \Sigma_Y) \) are Markov kernels; that is, functions \(f: X \times \Sigma_Y \to [0, 1]\) measurable in \(X\) and a probability measure over \(\Sigma_Y\). The monoidal structure is the product of Borel spaces. For discrete probability, one may also consider the full subcategory \(\FinStoch \rightarrowtail \BorelTotal\) of finite sets and stochastic matrices.

  The discard \(\discard_X(x, \{\ast\}) = 1\) is the Markov kernel sending every state to the unique probability measure on \(\{\ast\}\). Since every Markov kernel is total, \(\BorelTotal\) is affine symmetric monoidal, and is equipped with the discrete partial order. The category \(\BorelTotal\) is the canonical example of a \textbf{Markov category with conditionals} \cite[Ex.~11.3.]{fritz_synthetic_2020}.
\end{example}

To obtain a notion of approximation between probabilistic processes, we add effects to our theory corresponding to observations or conditions that the distribution satisfies. When conditioning a distribution on some effect we obtain a new \emph{unnormalised} distribution.

\begin{example}[Partial probabilistic processes]\label{ex:approx:partial-probabilistic}
  Panangaden's category of subprobability kernels \cite[Def.~3.1]{panangaden_category_1999} is an established model of partial probabilistic systems. The well-behaved subcategory \(\Borel\) is defined exactly as \(\BorelTotal\), except that \(f: X \times \Sigma_Y \to [0, 1]\) is allowed to be a \emph{subprobability} measure over \(\Sigma_Y\), i.e.\ \(f(x, Y) \leq 1\) for all \(x \in X\); the discard is again the kernel to the unique measure on \(\{\ast\}\), but is no longer total. The approximation order is the conditional preorder of \Cref{prop:approx:conditional-preorder} below.

  \(\Borel\) is the canonical example of Di Lavore et al.'s notion of a \emph{partial Markov category} (with conditionals) \cite[Cor.~3.13.]{lavore_evidential_2025}, which models updating and conditioning in substochastic processes.
\end{example}

In fact, all of the examples seen so far are instances of a single structure, which abstracts the copy-and-discard structure of probability theory together with the existence of conditionals.

\begin{definition}\label{def:approx:partial-markov}
  A \textbf{partial Markov category} is a symmetric monoidal category where each object \(X\) is equipped with a \emph{copy map}  \(\cpy_X: X \to X \otimes X\) and a \emph{discarding map} \(\discard_X: X \to I\) such that:
  \begin{itemize}
    \item \((\cpy_X, \discard_X)\) is a cocommutative comonoid;
    \item for all objects \(X\) and \(Y\), the comonoid structure is compatible with the monoidal structure:
    \[
      \cpy_{X \otimes Y} = (1_X \otimes \swap_{X,Y} \otimes 1_Y) \circ (\cpy_X \otimes \cpy_Y),
      \quad
      \cpy_I = 1_I,\quad
      \discard_{X \otimes Y} = \discard_X \otimes \discard_Y,
      \quad\text{and}\quad
      \discard_I = 1_I;
    \]
    \item for every morphism \(f: X \to Y \otimes Z\)  there exists some (possibly non-unique) morphism \(f|_X: Y \otimes X \to Z\) called a \emph{conditional} of \(f\) with respect to \(X\), such that
    \[ f = (1_Y \otimes f|_X) \circ (\cpy_Y \otimes 1_X) \circ \bigl(((1_Y \otimes \discard_Z) \circ f) \otimes 1_X\bigr) \circ \cpy_X. \]
  \end{itemize}
\end{definition}

It was shown by Di Lavore et al.~\cite{lavore_order_2025} that partial Markov categories carry a canonical approximation order:
\begin{proposition}\label{prop:approx:conditional-preorder}\cite[Thm.~3.3]{lavore_order_2025}
  Partial Markov categories are enriched in the \textbf{conditional partial order}, where given two maps \(f,g:X\to Y\), \(f\subseteq g\) in case there exists a map \(r:Y\otimes X\to I\) such that 
  \(f = (1_Y \otimes r)\circ (\cpy_Y \otimes 1_X) \circ  (g\otimes 1_X)\circ \cpy _X\).
\end{proposition}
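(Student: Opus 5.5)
We prove that the conditional order is a preorder, that it is compatible with sequential and parallel composition, and that it is antisymmetric. We work in string-diagram style and use freely the cocommutative comonoid laws and their compatibility with \(\otimes\). Write \(f \subseteq_r g\) to mean \(f = (1_Y \otimes r)\circ(\cpy_Y\otimes 1_X)\circ(g\otimes 1_X)\circ\cpy_X\). Intuitively, \(f\) is \(g\) reweighted by an effect \(r\) that may read both the output and a copy of the input.

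\textbf{Preorder.} For reflexivity take \(r = \discard_Y\otimes\discard_X\) and use counitality twice. For transitivity, suppose \(f\subseteq_r g\) and \(g\subseteq_s h\). Substitute the expression for \(g\) into that for \(f\). By coassociativity and cocommutativity of \(\cpy\), we can regroup so that the output of \(h\) and the input \(X\) are copied once each and fed to both \(s\) and \(r\). This gives \(f\subseteq_t h\) with
\[
t = (s\otimes r)\circ(1_Y\otimes\swap_{Y,X}\otimes 1_X)\circ(\cpy_Y\otimes\cpy_X).
\]

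\textbf{Monotonicity of \(\otimes\).} If \(f\subseteq_r f'\) and \(g\subseteq_s g'\), take \(t=r\otimes s\), suitably permuted by symmetries. This works because of the compatibility \(\cpy_{X\otimes Y}=(1\otimes\swap\otimes 1)\circ(\cpy_X\otimes\cpy_Y)\).

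\textbf{Monotonicity of \(\circ\).} It suffices to treat the two one-sided cases separately, since transitivity then combines them.
\begin{itemize}
\item Post-composition. Suppose \(f\subseteq_r g\) with \(f,g:X\to Y\), and let \(h:Y\to Z\). The effect \(r\) reads \(Y\), but after applying \(h\) only \(Z\) is available. This is where conditionals are needed. Decompose \(h\circ(\text{copy of } g)\) using a conditional, i.e.\ apply the conditional axiom to \((h\otimes 1_Y)\circ\cpy_Y\circ g\), viewed as a map \(X\to Z\otimes Y\) with its \(X\) input copied. This rewrites the joint distribution of \((Z,Y)\) as the marginal \(h\circ g\) on \(Z\), followed by a conditional generating \(Y\) from \(Z\) and \(X\). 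Composing that conditional with \(r\) yields an effect \(t : Z\otimes X\to I\) witnessing \(h\circ f\subseteq_t h\circ g\).
\item Pre-composition. Let \(k:W\to X\). Since \(k\) is not copyable in general, copying \(W\) and then applying \(k\) to each copy differs from applying \(k\) and then copying \(X\). Handle this with a conditional as well: use a conditional of \((k\otimes 1_W)\circ\cpy_W\) to re-express the copied \(X\) as generated from \(k\)'s output and \(W\), then absorb it into \(r\).
\end{itemize}

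\textbf{Antisymmetry.} We need \(f\subseteq g\subseteq f\) to imply \(f=g\). By transitivity this means \(f=\) \(f\) reweighted by some effect \(t\), so \(t\) acts as \(1\) almost surely with respect to \(f\). We would try to show that \(g\) equals \(f\) by an ``almost-sure'' argument, which requires a notion of \(f\)-a.s.\ equality of effects, and this is not obviously available in an arbitrary partial Markov category. That suggests only a preorder enrichment holds in general, with antisymmetry an additional property.

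\textbf{Main obstacles.} The composition steps are the main obstacle. Without conditionals, post-composition monotonicity genuinely fails, so the conditional axiom must be applied carefully. We also expect to need to check that the resulting effect typechecks and that the identity holds on the nose, not merely up to the order. Since this is the cited result \cite[Thm.~3.3]{lavore_order_2025}, we would finally compare our effect constructions with that source.
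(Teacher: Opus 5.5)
Your reflexivity, transitivity, tensor and post-composition steps are sound, and so is reducing composition to its two one-sided cases. The pre-composition step fails as described, however.

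In \(f\circ k\), the effect \(r\) reads the \emph{intermediate} value in \(X\), and that value is correlated with the output \(Y\) through \(g\). The witness you need has type \(t\colon Y\otimes W\to I\), so it can only see \(Y\) and \(W\). Recovering \(X\) from \((Y,W)\) is a Bayesian inversion of \(g\) along \(k\). A conditional of \((k\otimes 1_W)\circ\cpy_W\) never involves \(g\), so it cannot do this. Re-expressing \(X\) ``from \(k\)'s output'' also leaves \(r\) reading a wire that \(t\) has no access to. For instance, take \(W=I\), \(g=1_X\) in finite substochastic matrices. Regenerating \(X\) from \(k\) alone replaces \(r(y,x)\) by its \(k\)-average in \(x\). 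With \(k\) uniform on \(\{0,1\}\) and \(r(y,x)=[x=0]\), this gives \(\tfrac12 k\) instead of \(f\circ k=\tfrac12\delta_0\).

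The fix mirrors your post-composition argument. Let \(\chi\coloneqq (g\otimes 1_X)\circ\cpy_X\circ k\colon W\to Y\otimes X\); its \(Y\)-marginal is \(g\circ k\). Take a conditional \(c\colon Y\otimes W\to X\) of \(\chi\) with respect to \(W\). Since \(f\circ k=(1_Y\otimes r)\circ(\cpy_Y\otimes 1_X)\circ\chi\), substitute the conditional decomposition of \(\chi\) and apply coassociativity of \(\cpy_Y\). This gives \(f\circ k\subseteq g\circ k\) with witness \(t\coloneqq r\circ(1_Y\otimes c)\circ(\cpy_Y\otimes 1_W)\).

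On antisymmetry your suspicion is right: it is false for the definition in \Cref{def:approx:partial-markov}, not merely unproven. Take finite sets with \(\R_{\geq 0}\)-valued matrices, diagonal copy and summing discard. This is a partial Markov category in that sense, with conditionals obtained by normalising wherever the marginal is nonzero. On scalars the conditional order says \(a\subseteq b\) iff \(a=rb\) for some \(r\geq 0\), so \(1\subseteq 2\subseteq 1\) with \(1\neq 2\). So ``partial order'' in the statement is too strong for this definition.

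The paper gives no proof of its own here; it only cites the source. Everything it uses downstream needs only a preorder, since a discard bicategory only asks for a preorder enrichment. With the repaired pre-composition step, your proposal proves exactly that: the conditional relation is a symmetric monoidal preorder enrichment.
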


Moreover, it follows immediately from \cite[Prop.~3.5]{lavore_order_2025} that:

\begin{proposition}
  Every partial Markov category equipped with the conditional order, is a discard bicategory.
\end{proposition}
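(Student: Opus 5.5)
We need to check three things for a partial Markov category with the conditional order: that it is a discard category, that it is preorder-enriched, and that the discard is laxly natural.

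The first two are essentially given. The comonoid axioms in \Cref{def:approx:partial-markov} include \(\discard_{X\otimes Y} = \discard_X\otimes\discard_Y\) and \(\discard_I = 1_I\), so a partial Markov category is a discard category. \Cref{prop:approx:conditional-preorder} says that the conditional order is a partial order and that it is respected by \(\circ\) and \(\otimes\). This is exactly the preorder enrichment of \Cref{def:approx:preorder-enrichment}. Hence we already have a preorder-enriched discard category.

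The substantive step is lax naturality: for every \(f : X\to Y\) we need \(\discard_Y\circ f \subseteq \discard_X\) in the hom-set \(\CC(X,I)\). Unfolding the conditional order with target \(I\) (where \(\cpy_I = 1_I\) and \(1_I\otimes -\) is trivial), we must find an effect \(r : X\to I\) with
\[
  \discard_Y\circ f = r\circ(\discard_X\otimes 1_X)\circ\cpy_X .
\]
By counitality, \((\discard_X\otimes 1_X)\circ\cpy_X = 1_X\), so the right-hand side is simply \(r\). Choosing the witness \(r \coloneqq \discard_Y\circ f\) therefore makes the equation hold trivially. This is the content of \cite[Prop.~3.5]{lavore_order_2025}, namely that every effect lies below the discard, and we may cite it directly instead of recomputing.

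The main obstacle is mostly bookkeeping. We must read the formula of \Cref{prop:approx:conditional-preorder} correctly when the codomain is the unit, keeping track of strictness and of the placement of \(\discard_X\) coming from \(g = \discard_X\), so that counitality applies in the right orientation. The general fact needed is that for \(g : X\to I\) the defining composite reduces to \(e \mapsto (g\otimes e)\circ\cpy_X\). With \(g = \discard_X\), counitality turns this into \(e\) itself. Once that reduction is made, lax naturality is immediate and the proposition follows.
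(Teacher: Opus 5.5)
Your proof is correct and follows the paper's route. The paper obtains lax naturality by citing \cite[Prop.~3.5]{lavore_order_2025} (every effect lies below the discard), and takes the discard-category and enrichment structure from the definition and \Cref{prop:approx:conditional-preorder}; your counitality computation with witness \(r \coloneqq \discard_Y\circ f\) is exactly the unfolding of that cited fact.
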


Therefore, all our examples of classical processes form discard bicategories.
Indeed, Cartesian restriction categories are partial Markov categories \cite[Props.~2.13,~3.11]{lavore_order_2025}; and Cartesian bicategories are partial Markov categories \cite[Props.~2.16,~3.12]{lavore_order_2025}.

\subsection{Quantum process theories}\label{sec:approx:quantum}

We consider three notions of quantum process, capturing unnormalised quantum processes, quantum channels, and quantum channels with post-selection (see Nielsen and Chuang \cite{Nielsen2012} for reference).

\begin{example}[Quantum processes]\label{ex:approx:quantum}
  Given a finite-dimensional Hilbert space \(\HH\), let \(\BB(\HH)\) be the space of linear operators on \(\HH\). Write \(0 \preceq a\) to mean that \(a\) is \emph{positive semidefinite}, i.e.\ \(a=b^{\dagger}b\) for some \(b\in \BB(\HH)\).
  
  A linear map \(\Phi : \BB(\HH) \to \BB(\KK)\) is:
  \begin{itemize}
    \item \emph{positive} if for every \(a\in\BB(\HH)\), \(0\preceq a\) implies \(0\preceq \Phi(a)\);
    \item \emph{completely positive} if for all finite-dimensional Hilbert spaces \(\EE\),   \(\Phi\otimes \mathrm{id}_{\BB(\EE)}\) is positive;
    \item \emph{trace-preserving} if for all \(a\in\BB(\HH)\),\; \(\Tr(\Phi(a)) = \Tr(a)\);
    \item \emph{trace-nonincreasing} if for all \(a\in\BB(\HH)\) with \(0\preceq a\), \(\Tr(\Phi(a)) \le \Tr(a)\).
  \end{itemize}
  Here \(\Tr\) denotes the usual linear algebraic trace. Define the categories \(\CPM\), \(\CPTP\) and \(\CPTNI\) whose objects are of the form \(\BB(\HH)\), for finite-dimensional Hilbert spaces \(\mathcal H\); and whose morphisms are respectively completely positive; both completely positive and trace-preserving; and both completely positive and trace-nonincreasing maps. In all cases, the monoidal structure is the bilinear tensor product, and the discard is the partial trace, regarded as a map \(\Tr : \BB(\HH) \to \C\). Note that \(\CPM\) is compact-closed, whereas \(\CPTP\) and \(\CPTNI\) are not, and that \(\CPTP\) is affine.

  Interpreting \(\BB(\HH)\) as the space of unnormalised quantum states on \(\HH\), \(\CPM\) captures the unnormalised quantum processes; \(\CPTP\) captures the physically implementable processes, also known as \emph{quantum channels}; and \(\CPTNI\) captures the post-selected processes, also known as \emph{quantum operations}.
\end{example}

Our quantum examples also carry natural preorder enrichments, which can be divided into two families given by the L\"owner order and the purification order.

First, recall the following  partial order on completely positive maps (see \cite{Nielsen2012} for reference):
\begin{definition}\label{def:approx:loewner}
  Two parallel completely positive maps \(\Phi, \Psi: \BB(\HH) \to \BB(\KK)\) are related by the (completely positive) \textbf{L\"owner order}, denoted
  \(\Phi \preceq \Psi\), in case \(\Psi-\Phi\) is completely positive.
\end{definition}

It is easy to see that this gives us the following poset enrichments:
\begin{lemmaE}\label{lem:approx:loewner-enrichment}
  The L\"owner order is a symmetric monoidal poset enrichment for \(\CPTNI\) and is a discrete poset enrichment for \(\CPTP\).
\end{lemmaE}
\begin{proofE}
  Selinger remarked that the L\"owner order is a poset enrichment for \(\CPTNI\)  \cite[Lem.~6.4]{selinger2004towards}, where DCPO-enrichment is formally proved by Cho~\cite[Thm.~5.5]{cho_semantics_2014}. 

  To see that the L\"owner order between completely positive trace-preserving maps is discrete, suppose \(\Phi \preceq \Psi\), then \(\Psi - \Phi\) is completely positive. Then for any normalised state \(a \in \BB(\HH)\), \(\Psi(a) - \Phi(a)\) is  positive. Moreover since these maps preserve the trace \(\Tr(\Psi(a)) = \Tr(\Phi(a)) = \Tr(a) = 1\). Therefore \(\Tr(\Psi(a) - \Phi(a)) = 0\) by linearity of the trace. The only CP map that sends all normalised states to states of trace \(0\) is the zero map, therefore \(\Psi = \Phi\).
\end{proofE}

Second, recall the following order introduced by Carette et al. \cite[Def.~5]{delayedtracememory}:
\begin{definition}\label{def:approx:purification}
  Two parallel morphisms \(f,g:X\to Y\) in a discard category are related by the \textbf{purification order} \(f\subseteq g\) in case there exists some \(g_0:X\to S\otimes Y\) and \(f_0:S\to I\) such that
  \[f= (f_0 \otimes 1_Y) \circ g_0, \quad\text{and}\quad g = (\discard_S \otimes 1_Y) \circ g_0.\]
\end{definition}

It is a direct consequence of \cite[Lem.~6]{delayedtracememory} that:
\begin{lemma}\label{lem:approx:purification-enrichment}
  The purification order is a symmetric monoidal preorder enrichment for \(\CPM\) and a discrete poset enrichment for \(\CPTP\).
\end{lemma}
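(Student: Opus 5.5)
The statement has two parts: the purification order is a symmetric monoidal preorder enrichment on \(\CPM\), and it is discrete on \(\CPTP\). I will check the preorder axioms, then compatibility with \(\circ\) and \(\otimes\), then discreteness, all directly from \Cref{def:approx:purification}.

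\emph{Reflexivity and transitivity.} For reflexivity of \(f \subseteq f\), take \(S = I\), \(g_0 = f\) and \(f_0 = 1_I = \discard_I\). For transitivity, suppose \(f \subseteq g\) is witnessed by \((g_0 : X \to S\otimes Y,\ f_0 : S \to I)\) and \(g \subseteq h\) is witnessed by \((h_0 : X \to T \otimes Y,\ g_0' : T \to I)\). The difficulty is that \(g\) has two dilations, \(g_0\) and \((g_0'\otimes 1_Y)\circ h_0\), which need not be related. In \(\CPM\) I will use Stinespring/purification: every CP map has a minimal pure dilation, and any two dilations of the same map are related by a CP map on the environments. Concretely, the dilation \((g_0'\otimes 1)\circ h_0\) of \(g\) factors through a minimal purification, and so does \(g_0\). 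By the uniqueness of purification up to an isometry (or partial isometry) on the purifying system, there is a CP map \(e : T \to S\) with \((e \otimes 1)\circ h_0'\) related to \(g_0\), where \(h_0'\) is a suitable refinement of \(h_0\). Then \(f = ((f_0\circ e)\otimes 1)\circ h_0\) up to the effect \(g_0'\), which gives \(f\subseteq h\) with witness effect \(f_0 \circ e\) (absorbing \(g_0'\) appropriately). This is the step I expect to be the main obstacle. It is exactly the content of \cite[Lem.~6]{delayedtracememory}, and I would cite that lemma for the key fact rather than redo the Stinespring argument: if \(g = (\discard_S\otimes 1)\circ g_0\), then every effect applied to any other dilation of \(g\) can be transported onto \(g_0\).

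\emph{Compatibility with composition and tensor.} Suppose \(f \subseteq f'\) is witnessed by \((f'_0, f_0)\) on environment \(S\), and \(k \subseteq k'\) is witnessed by \((k'_0, k_0)\) on environment \(T\). For the tensor, use the witness \(f'_0 \otimes k'_0\) (with a symmetry to gather \(S\otimes T\) on the left) together with the effect \(f_0 \otimes k_0\). The required equations then follow from \(\discard_{S\otimes T} = \discard_S\otimes\discard_T\) and functoriality of \(\otimes\). For sequential composition \(k\circ f\), with \(f : X \to Y\) and \(k : Y\to Z\), take the dilation \((1_S\otimes k'_0)\circ f'_0 : X \to S\otimes T\otimes Z\) and the effect \(f_0\otimes k_0\); again both equations are immediate by interchange. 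Compatibility with the symmetry is automatic from strictness.

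\emph{Discreteness on \(\CPTP\).} In \(\CPTP\) every morphism is total, and in particular every effect \(f_0 : S \to I\) equals \(\discard_S\), since the only trace-preserving map to \(\C\) is the trace. Hence \(f = (\discard_S\otimes 1)\circ g_0 = g\) whenever \(f\subseteq g\), so the order is discrete.
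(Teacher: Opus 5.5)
The parts you argue yourself are correct. Your reflexivity witness works, as do your witnesses for compatibility with \(\otimes\) and \(\circ\) (in each case with effect \(f_0\otimes k_0\)). Your \(\CPTP\) argument is also right: the only trace-preserving effect is \(\discard_S\). All of this is more explicit than the paper, which gives no argument and defers the whole statement to \cite[Lem.~6]{delayedtracememory}.

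The gap is transitivity in \(\CPM\), the only step with real content, and your sketch of it does not hold up.

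\begin{itemize}
\item The composite \((g_0'\otimes 1_Y)\circ h_0\) is \(g\) itself, not a dilation of \(g\). The effect \(g_0'\) has already consumed \(T\), so there is no environment left on which to apply \(f_0\circ e\). ``Absorbing \(g_0'\) appropriately'' into a second effect on \(T\) is not an operation you have.
\item The key fact you attribute to the citation is false as stated: effects cannot be transported onto an arbitrary dilation \(g_0\). If \(g_0=\sigma\otimes g\) for a normalised state \(\sigma\colon I\to S\), effects on \(g_0\) produce only scalar multiples of \(g\). For a non-pure \(g\), however, another dilation gives more. Take the dephasing channel with the measured basis copied into the environment; the effect \(\bra{0}\cdot\ket{0}\) on the environment yields \(\rho\mapsto\bra{0}\rho\ket{0}\ket{0}\bra{0}\), which is not a multiple of \(g\).
\item For the same reason, two dilations of the same map are not in general related by a CP map on environments. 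That only goes from a pure dilation to an arbitrary one.
\item Even the correct version points the wrong way for your purpose. Transitivity requires moving \(f_0\) \emph{off} \(g_0\) onto something built from \(h_0\), not onto \(g_0\).
\end{itemize}

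The missing idea is that in \(\CPM\) every effect factors through the discard: \(g_0'=\Tr[E\,\cdot\,]=\discard_T\circ k\) with \(k(a)=\sqrt{E}\,a\sqrt{E}\). With this, transitivity goes as follows.
\begin{itemize}
\item Purify \(h_0\) to a pure \(H\colon X\to T\otimes R\otimes Y\) with \(h_0=(1_T\otimes\discard_R\otimes 1_Y)\circ H\).
\item Then \(W\coloneqq(k\otimes 1_R\otimes 1_Y)\circ H\) is a pure dilation of \(g\).
\item Stinespring uniqueness gives a CP map \(\Phi\colon T\otimes R\to S\) with \(g_0=(\Phi\otimes 1_Y)\circ W\).
\item Hence \(f=\bigl((f_0\circ\Phi\circ(k\otimes 1_R))\otimes 1_Y\bigr)\circ H\) while \(h=(\discard_{T\otimes R}\otimes 1_Y)\circ H\), which is exactly \(f\subseteq h\).
\end{itemize}

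A more elementary route avoids dilations entirely. In \(\CPM\), \(f\subseteq g\) holds iff \(c\,g-f\) is CP for some \(c>0\). One direction follows from \(E\preceq\|E\|\,1\). The other follows from a rescaled version of the qubit-flag construction in the proof of \Cref{prop:approx:loewner-purification}. Then \(cd\,h-f=c(dh-g)+(cg-f)\) gives transitivity at once.
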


Carette et al.\ do not formally compare the purification order with the L\"owner order, which we now do:
\begin{propositionE}\label{prop:approx:loewner-purification}
  The L\"owner and purification orders coincide in \(\CPTP\) and \(\CPTNI\) but differ in \(\CPM\).
\end{propositionE}
\begin{proofE}
  We prove each claim separately.
 
  First, the orders coincide in \(\CPTP\) by the uniqueness of the discrete partial order structure.

  Next we show that the orders coincide in \(\CPTNI\). First, suppose that \(\Phi \preceq \Psi\), then \(\Xi = \Psi - \Phi\) is completely positive. Consider the qubit space \(\C^2\) with an orthonormal basis \(\ket{0}, \ket{1}\). Define \(g_0: \BB(\HH) \to \BB(\KK \otimes \C^2)\) and \(f_0: \BB(\C^2) \to \C\) by:
  \[g_0 = \Phi \otimes \ket{0}\bra{0} + \Xi \otimes \ket{1}\bra{1} \quad\text{and}\quad f_0(a) = \bra{0} a \ket{0}.\]
  for any \(a \in \BB(\C^2)\). It is easy to check that these maps are trace non-increasing and that \(\Phi= (1 \otimes f_0) \circ g_0\) and \(\Psi = (1 \otimes \Tr) \circ g_0\).
  Therefore \(\Phi \subseteq \Psi\) in the purification order. Now for the other direction, suppose that \(\Phi \subseteq \Psi\). Then there exists trace non-increasing maps \(g_0, f_0\) such that \(\Phi= (1 \otimes f_0) \circ g_0\) and \(\Psi = (1 \otimes \Tr) \circ g_0\). Therefore, \(\Psi - \Phi = (1 \otimes (\Tr - f_0)) \circ g_0 \); moreover, since \(f_0\) is trace non-increasing \(\Tr - f_0\) is completely positive. Since the composition of completely positive maps is completely positive we obtain \(\Psi \subseteq \Phi\).

  To see that these two orders do not coincide we can simply look at scalars. Note that the scalars in \(\CPM\) are precisely the non-negative real numbers \(\R^{\geq 0}\). For any pair of scalars \(a, b \in \R^{\geq 0}\), taking \(g_0 = b\) and \(f_0 = \frac{a}{b}\) we have that \(a \subseteq b\) in the purification order. However if \(b\leq a\) as real numbers, the inequality does not hold in the L\"owner order.
\end{proofE}

Equipping the quantum categories with the purification order makes each a discard bicategory, completing our list of examples.

\begin{propositionE}\label{prop:approx:quantum-discard-bicat}
  When equipped with the purification order, \(\CPM\), \(\CPTP\) and \(\CPTNI\) are discard bicategories.
\end{propositionE}
\begin{proofE}
  Lax naturality of the discard is immediate from \Cref{def:approx:purification}: for any \(f : X \to Y\), taking \(g_0 = f\) and \(f_0 = \discard_Y\) exhibits \(\discard_Y \circ f \subseteq \discard_X\). The remaining preorder-enrichment axioms hold by \cite[Lem.~6]{delayedtracememory}. For \(\CPTP\) and \(\CPTNI\) the purification order coincides with the L\"owner order by \Cref{prop:approx:loewner-purification}.
\end{proofE}

Note that \(\CPM\) equipped with the L\"owner order is a poset enrichment but \emph{not} a discard bicategory structure, since lax naturality fails. Apart from this exception, all the process theories reviewed in this section are discard bicategories. They are in fact discard bicategories equipped with the purification order, as the following proposition shows.

\begin{propositionE}
  In any partial Markov category, the purification order and the conditional order coincide.
\end{propositionE}
\begin{proofE}
  Denote by \(\subseteq_P\) the purification order and by \(\subseteq_C\) the conditional order.
  Suppose that \(f \subseteq_P g\), then there exists \(f_0\) and \(g_0\) such that 
  \(f= (f_0 \otimes 1_Y) \circ g_0\) and  \(g = (\discard_S \otimes 1_Y) \circ g_0\).
  Then, since \(f_0 \subseteq_C \discard_S\) \cite[Prop.~3.5]{lavore_order_2025} and \(\subseteq_C\) respects the monoidal structure, we have:
  \[ f = (f_0 \otimes 1_Y) \circ g_0 \subseteq_C (\discard_S \otimes 1_Y) \circ g_0 = g\]
  For the other direction, suppose that \(f \subseteq_C g\), then there exists \(r:Y\otimes X\to I\) such that 
  \(f = (1_Y \otimes r)\circ (\cpy_Y \otimes 1_X) \circ  (g\otimes 1_X)\circ \cpy _X\). 
  Let \(E = Y \otimes X\), \(f_0 = r : Y \otimes X \to I\) and \(g_0 = (\cpy_Y \otimes 1_X) \circ  (g\otimes 1_X)\circ \cpy _X\). Then \(f= (f_0 \otimes 1_Y) \circ g_0\) and \(g = (\discard_S \otimes 1_Y) \circ g_0\), therefore \(f \subseteq_P g\).
\end{proofE}

\section{Stateful processes}
\label{sec:state}
\pratendSetLocal{category=state, end, restate}

Having seen how process theories have the structure of discard bicategories, we now ask how to perform \emph{stateful computation} inside them. This section is a detailed review of the existing categorical semantics for stateful processes, examining in each case what equivalence it induces and where it fails on the examples we care about. We will see that no existing approach simultaneously interprets state functorially, lets the time delay be natural, and propagates zero morphisms globally. These shortcomings are precisely what motivate the notion of \emph{behaviour} developed in \Cref{sec:obs}. Throughout this section we work in a plain symmetric monoidal category, the approximation structure becoming relevant only later, when we use it to compare partial observations.

\subsection{Mealy machines in monoidal categories}\label{sec:state:mealy}

The most basic specification of a stateful process is a \emph{Mealy machine} \cite{Mealy1955}: a transition function \(S \times X \to S \times Y\) which consumes an input in \(X\), updates an internal state in \(S\), and produces an output in \(Y\). In Cartesian closed categories, such processes are often packaged using the \emph{state monad} \(T_S(X) \coloneqq (X \times S)^S\) \cite{Moggi1991}: a Kleisli arrow for the state monad is exactly a Mealy machine, and iterating the Kleisli composition gives the \(n\)-step time evolution illustrated in \Cref{eq:mealy_unrolling}.

We are interested in a generalised notion of Mealy machine, in which the underlying transition \(S \otimes X \to S \otimes Y\) need not be a function, but can instead live in any symmetric monoidal category.

\begin{definition}\label{def:state:mealy}
  Let \(\CC\) be a symmetric monoidal category and let \(S\) be an object of \(\CC\).
  A \textbf{symmetric monoidal Mealy machine} is a morphism \( f\colon S\otimes X \to S\otimes Y\), where \(X\) is interpreted as the input, \(Y\) as the output and \(S\) as the internal state.
  For \(n\ge 1\), its \(n\)-step evolution is the canonical composite
  \(
  f^{(n)} \colon S\otimes X^{\otimes n} \longrightarrow  Y^{\otimes n} \otimes S
  \)
  obtained by wiring together \(n\) copies of \(f\) so that the output state of the  \(k\)-th copy is fed back to
  the input state of the \((k+1)\)-st copy, repeatedly using \(\swap\), as in \Cref{eq:mealy_unrolling}.

  Given an \textbf{initial state} \(s_0\colon I\to S\), the \textbf{evaluation} at time step \(n\) is given by
  \[
  f^{(n)}\circ (s_0\otimes 1_{X^{\otimes n}})\colon X^{\otimes n}\to  Y^{\otimes n} \otimes S.
  \]
\end{definition}

Instantiating \Cref{def:state:mealy} in each of our guiding examples recovers a well-studied class of stateful process:
\begin{itemize}
  \item In \(\Par\), a Mealy machine is a \textbf{partial Mealy machine}: at each step the transition may be undefined, halting the computation. Partial Mealy machines are a standard example in the coalgebraic theory of state-based systems, where their behaviour is obtained through the (generalised) powerset/determinization construction \cite{silva_generalizing_2013}.

  \item In \(\Rel\), a Mealy machine is a \textbf{nondeterministic Mealy machine}: each input may lead to several output-and-state pairs. These coincide with the \emph{observable nondeterministic finite-state machines} studied in model-based testing and automata learning \cite{khalili_learning_2014}.

  \item In \(\BorelTotal\) or \(\FinStoch\), a Mealy machine has a random transition function, and its time evolution generates a \textbf{controlled stochastic process} \cite[Def.~7.1]{lavore_monoidal_2022}: a Markov chain whose dynamics are driven by the input stream.

  \item In \(\CPTP\), a Mealy machine is a \textbf{quantum channel with finite memory} \cite{kretschmann_quantum_2005}, where the state object \(S\) is the memory carried between uses of the channel. Such channels are widely studied in the contexts of spin chains, quantum coding, and quantum communication \cite{plenio_spin_2007,caruso_quantum_2014,ollivier_description_2003}. Going beyond this trace-preserving literature, our framework also accommodates the non–trace-preserving setting: working over \(\CPTNI\) or \(\CPM\) additionally captures conditioning and post-selection.
\end{itemize}
Running any of these machines from an initial state should, intuitively, produce a process that transforms input streams into output streams; but when should two such machines be deemed equivalent? This depends on what one takes the behaviour of a process to be. For the partial and nondeterministic examples, the cited works adopt a computational view, in which the behaviour of a machine is the language it accepts or the computation it performs. We instead regard the behaviour of a process as the family of constraints it imposes, in the systems-theoretic tradition of Willems~\cite{Willems}. In the remainder of this section we review several categorical semantics for stateful processes that formalise the computational view, and argue that none of them accommodates this constraint-based notion of behaviour, which we develop in \Cref{sec:obs}.

\subsection{The state construction}\label{sec:state:stateful}
It is natural to ask: \emph{what is the categorical semantics of Mealy machines over an arbitrary symmetric monoidal category?}
At a minimum, such a semantics should provide a principled way to
\emph{hide} the internal state \(S\), turning a stateful description into an external process from \(X\) to
\(Y\) as in \Cref{eq:mealy_unrolling}.
Following the approach of Katis, Sabadini, and Walters~\cite[\S~3.1]{Katis1997}
(and later by various others such as \cite{sprunger,Goncharov}), we axiomatise this state-hiding operation with the notion of a \emph{feedback category}.
Our formulation is a mild variant of that of Katis, Sabadini, and Walters, adapted to the guarded/delayed setting needed later:\footnote{More generally one could replace  \(\Gamma, \Delta\) and \(\DD\) with a monoidal endoprofunctor on \(\mathcal C\), modifying the sliding equation appropriately.}
\begin{definition}\label{def:state:feedback}
  A \textbf{feedback category} is a tuple \((
  \CC,\DD, \Gamma, \Delta, \fbk)\) where:
  \begin{itemize}
    \item \(\Gamma\colon\DD\to \CC\) is a symmetric monoidal functor,  called the \textbf{guard functor};
    \item \(\Delta\colon\DD\to \DD\) is a symmetric monoidal endofunctor, called the \textbf{delay endofunctor};
    \item \(\fbk_S\colon \CC(\Gamma(\Delta S) \otimes X, \Gamma(S) \otimes Y) \to \CC(X, Y)\) is a family of functions, called the \textbf{feedback operator}.
  \end{itemize}
  Moreover  for all \(f\in \CC(\Gamma(\Delta S)\otimes X,\Gamma(S)\otimes Y)\), \(u\in \CC(X',X)\),  \(v\in \CC(Y,Y')\), and \(h \in \DD(S,T)\), we ask that the following equations hold: 
  \begin{itemize}
    \item \textbf{Tightening:}
    \(v \circ \fbk_S(f) \circ u = \fbk_S((1_{\Gamma S} \otimes v) \circ f \circ ( 1_{\Gamma (\Delta S)} \otimes u))\);
    \item \textbf{Vanishing:} \(\fbk_I(f) = f\);
    \item \textbf{Joining:} \(\fbk_T (\fbk_S(f)) = \fbk_{S \otimes T}(f)\);
    \item \textbf{Strength:} \(\fbk_S(f) \otimes g = \fbk_{S}(f \otimes g)\);
    \item \textbf{Sliding:}
    \(\fbk_T\bigl((\Gamma(h)\otimes 1_Y)\circ f\bigr)
  =\fbk_S\bigl(f\circ(\Gamma(\Delta h)\otimes 1_X)\bigr)\).
  \end{itemize}
  In case \(\Gamma\) is faithful call \(\DD\) the \textbf{guarded subcategory}.

  A \textbf{feedback functor}  \((F, F|_{\DD})\colon (\CC, \DD, \Gamma, \Delta, \fbk) \to (\CC',\DD', \Gamma', \Delta', \fbk')\) between feedback categories
  consists of symmetric monoidal functors \(F\colon\CC \to \CC'\) and \(F|_{\DD}\colon\DD\to \DD'\), satisfying   \(\Gamma'\circ F|_{\DD} = F \circ \Gamma\),
  \(\Delta'\circ F|_{\DD}=F|_{\DD}\circ \Delta\), and \(F(\fbk_S(f)) = \fbk'_{F(S)}(F(f))\).
\end{definition}

We interpret \(\Gamma\colon\DD\to \CC\) as parametrising what is allowed to be passed through the state at successive timesteps, and \(\Delta\colon\DD\to \DD\) as an effect which is applied when something is passed through the state; graphically:
\begin{equation}
  \label{eq:state:feedback-sliding}
  \begin{tikzpicture}
	\begin{pgfonlayer}{nodelayer}
		\node [style=none] (16) at (-1, 1.5) {};
		\node [style=none] (17) at (-5.25, 1.5) {};
		\node [style=none] (18) at (-1, -1.375) {};
		\node [style=none] (19) at (-5.25, -1.375) {};
		\node [style=box] (20) at (-3.75, -0.375) {\(f\)};
		\node [style=none] (21) at (-5.25, -0.75) {};
		\node [style=none] (22) at (-1, -0.75) {};
		\node [style=none] (23) at (-4.5, 0) {};
		\node [style=none] (24) at (-1.75, 0) {};
		\node [style=none] (25) at (-4.5, 1) {};
		\node [style=none] (26) at (-1.75, 1) {};
		\node [style=none] (27) at (-3.875, 1) {};
		\node [style=none] (28) at (-3.625, 1.25) {};
		\node [style=none] (29) at (-3.625, 0.75) {};
		\node [style=none] (30) at (-0.5, 0) {\(=\)};
		\node [style=box] (31) at (-2.5, 0.075) {\(\Gamma h\)};
		\node [style=none] (32) at (5.25, 1.5) {};
		\node [style=none] (33) at (0, 1.5) {};
		\node [style=none] (34) at (5.25, -1.375) {};
		\node [style=none] (35) at (0, -1.375) {};
		\node [style=box] (36) at (3.75, -0.375) {\(f\)};
		\node [style=none] (37) at (0, -0.75) {};
		\node [style=none] (38) at (5.25, -0.75) {};
		\node [style=none] (39) at (0.75, 0) {};
		\node [style=none] (40) at (4.5, 0) {};
		\node [style=none] (41) at (0.75, 1) {};
		\node [style=none] (42) at (4.5, 1) {};
		\node [style=none] (43) at (3.625, 1) {};
		\node [style=none] (44) at (3.875, 1.25) {};
		\node [style=none] (45) at (3.875, 0.75) {};
		\node [style=box] (46) at (1.875, 0.075) {\(\Gamma(\Delta h)\)};
	\end{pgfonlayer}
	\begin{pgfonlayer}{edgelayer}
		\draw [style=background] (17.center)
			 to (16.center)
			 to (18.center)
			 to (19.center)
			 to cycle;
		\draw (21.center) to (22.center);
		\draw (24.center)
			 to [bend right=90, looseness=1.25] (26.center)
			 to (25.center)
			 to [bend right=90, looseness=1.25] (23.center)
			 to cycle;
		\draw (27.center) to (28.center);
		\draw (29.center) to (27.center);
		\draw [style=background] (33.center)
			 to (32.center)
			 to (34.center)
			 to (35.center)
			 to cycle;
		\draw (37.center) to (38.center);
		\draw (40.center)
			 to [bend right=90, looseness=1.25] (42.center)
			 to (41.center)
			 to [bend right=90, looseness=1.25] (39.center)
			 to cycle;
		\draw (43.center) to (44.center);
		\draw (45.center) to (43.center);
	\end{pgfonlayer}
\end{tikzpicture}

\end{equation}

In particular, when  \(\Gamma=\Delta=1_{\CC}\), a morphism
\(f\colon S\otimes X\to S\otimes Y\) is a symmetric monoidal Mealy machine in the sense of \Cref{def:state:mealy}, where  \(\fbk_S(f)\colon X\to Y\) is interpreted as the process given by hiding the memory of \(f\) as an internal state.

Note that a \emph{traced symmetric monoidal  structure} \cite{Joyal1996} on a SMC \(\CC\) is precisely a feedback structure with  \(\Delta=\Gamma=1_{\CC}\) satisfying the additional \emph{yanking identity}:  \(\Tr_X(\swap_{X,X})=1_X\), which trivialises the time dependency.

Generalising the ``Circ'' construction of Katis et al.~\cite{katis_feedback_2002}, now more commonly referred to in the theoretical computer science literature as the \emph{state construction} (see for example~\cite{lavore_monoidal_2022}), we can \emph{freely} construct a feedback category to give a categorical semantics for stateful processes over any base symmetric monoidal category:
\begin{definition}[State construction]\label{def:state:state}
  Given SMCs  \(\CC\) and \(\DD\) with symmetric monoidal functors \(\Gamma\colon\DD\to \CC\) and  \(\Delta\colon \DD \to \DD\),
  the SMC \(\St_{\DD}(\CC, \Delta)\) has the same objects as \(\CC\),
  where the morphisms given by elements of the set:
  \[
    \St_{\DD}(\CC, \Delta)(X,Y) \coloneqq \int^{S \in \DD}
    \CC(\Gamma(\Delta S)\otimes X, \Gamma(S) \otimes Y ).
  \]
  The rest of the structure defining a symmetric monoidal category is given by the evident natural transformations.

  A morphism in \(\St_{\DD}(\CC, \Delta)\) is called a \textbf{stateful morphism}.
  In case, \(\Gamma= 1_\CC\) or \(\Delta = 1_{\DD}\), we omit  the labels \(\DD\) or  \(\Delta\).
\end{definition}

\begin{remark}
  This integral symbol denotes a coend \cite{Loregian2021}, so that
  elements of \(\St_{\DD}(\CC, \Delta)(X,Y)\) are pairs
  \[(S\in \Ob(\DD), f\in \CC(\Gamma(\Delta S)\otimes X, \Gamma(S) \otimes Y )),\]
  modulo the largest equivalence relation such that for any \(r\colon S \to S'\) and any \(h\colon S' \otimes X \to S \otimes Y\):
  \[[S', (\Gamma(s) \otimes 1_Y) \circ h] = [S, h \circ (\Gamma(\Delta(s)) \otimes 1_X)].\]
\end{remark}

The state construction is the \emph{free} feedback category built from  \(\CC\), \(\DD\), \(\Gamma\), and \(\Delta\), see \Cref{prop:state:state-free} for a detailed proof.
Given \(f \in \St_{\DD}(\CC, \Delta)(\Gamma(\Delta S) \otimes X,
\Gamma(S) \otimes Y)\), the process \(\fbk(f) \colon X \to Y\) is
obtained by adjoining \(S\) to the memory. By the definition of the
coend, two stateful processes are identified precisely when they are
related by finitely many applications of the sliding rule
\eqref{eq:state:feedback-sliding}.

\begin{textAtEnd}

  \begin{lemma}\label{lem:state:induced-feedback}
    Take  two symmetric monoidal functors \(\Gamma\colon\DD\to \CC\) and \(\Delta\colon\DD\to\DD\), and let
    \(\eta\colon\CC\to \St_{\DD}(\CC,\Delta)\) be the canonical symmetric monoidal functor.
    Then \(\St_{\DD}(\CC,\Delta)\) carries a canonical feedback structure given by \(\Delta\colon\DD\to \DD\),
    \[
      \hat\Gamma\coloneqq \eta\circ\Gamma\colon\DD\to \St_{\DD}(\CC,\Delta),
      \qquad\text{and}\qquad
      \fbk_S([M,f])\coloneqq [M\otimes S,f].
    \]
  \end{lemma}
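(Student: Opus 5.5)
We have to show that \(\St_{\DD}(\CC,\Delta)\), with guard \(\hat\Gamma=\eta\circ\Gamma\), delay \(\Delta\), and \(\fbk_S([M,f])\coloneqq[M\otimes S,f]\), satisfies the axioms of \Cref{def:state:feedback}. First we fix the typing. A morphism \(\hat\Gamma(\Delta S)\otimes X\to\hat\Gamma(S)\otimes Y\) in \(\St_{\DD}(\CC,\Delta)\) is a class \([M,f]\) with
\[
f\in\CC\bigl(\Gamma(\Delta M)\otimes\Gamma(\Delta S)\otimes X,\;\Gamma(M)\otimes\Gamma(S)\otimes Y\bigr).
\]
Since \(\Gamma\) and \(\Delta\) are strict symmetric monoidal (or coherently so), \(\Gamma(\Delta(M\otimes S))\cong\Gamma(\Delta M)\otimes\Gamma(\Delta S)\). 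Hence \(f\) is, up to these coherence isomorphisms, a representative of a stateful morphism \(X\to Y\) with state \(M\otimes S\), and \(\fbk_S([M,f])\coloneqq[M\otimes S,f]\) is well typed.

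Next we check well-definedness: \(\fbk_S\) must respect the coend relation. If \([M,f]=[M',f']\), they are related by a zigzag of sliding moves along maps \(r\colon M\to M'\) in \(\DD\). Each move along \(r\) in the \(M\)-component is a move along \(r\otimes 1_S\colon M\otimes S\to M'\otimes S\). Because \(\Gamma\) and \(\Delta\) are monoidal, \(\Gamma(\Delta(r\otimes 1_S))=\Gamma(\Delta r)\otimes 1\), so the same zigzag relates the images. The guard \(\hat\Gamma\) is symmetric monoidal as a composite of such functors, and \(\Delta\) is unchanged.

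The axioms are then verified one at a time, each by exhibiting representatives.
\begin{itemize}
\item \emph{Tightening} holds because composition in \(\St\) is defined by tensoring state spaces and composing representatives, with \(u,v\) having state \(I\). Both sides are therefore represented by the same \(\CC\)-morphism with state \(M\otimes S\), up to symmetry isomorphisms.
\item \emph{Vanishing} is \([M\otimes I,f]=[M,f]\), which holds by strictness.
\item \emph{Joining} is associativity of \(\otimes\) on state objects.
\item \emph{Strength} holds because \(\otimes\) in \(\St\) tensors the states, and the state of \(\eta(g)\) is \(I\).
\item \emph{Sliding} for \(h\in\DD(S,T)\): postcomposing with \(\hat\Gamma(h)\otimes 1\) gives a representative with state \(M\otimes T\), while precomposing with \(\hat\Gamma(\Delta h)\otimes 1\) gives one with state \(M\otimes S\). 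These two are related by a single coend move along \(1_M\otimes h\colon M\otimes S\to M\otimes T\), which is exactly the relation in the remark following \Cref{def:state:state}.
\end{itemize}

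The main obstacle is bookkeeping rather than content. Composition and tensor in \(\St\) introduce symmetries that reorder state wires, for example when forming the state of \(v\circ[M,f]\circ u\) or of \([M,f]\otimes\eta(g)\). Consequently the representatives on the two sides of each axiom agree only up to conjugation by symmetry isomorphisms on the state. I will handle this with a single auxiliary observation: for any isomorphism \(\theta\colon M\to M'\) in \(\DD\) (in particular a symmetry), \([M,f]=[M',(\Gamma\theta\otimes1)f(\Gamma\Delta\theta^{-1}\otimes1)]\). This follows by sliding along \(\theta\). It reduces every axiom to an equality of \(\CC\)-morphisms, which is checked using naturality of \(\swap\).
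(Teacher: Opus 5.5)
Your route is the paper's: well-typedness from strict monoidality of \(\Gamma\) and \(\Delta\), well-definedness by lifting each elementary coend move along \(r\colon M\to M'\) to one along \(r\otimes 1_S\), and then the axioms checked on representatives. The paper leaves this last step as ``evident diagram chases''. Your arguments for vanishing, joining and sliding (a single move along \(1_M\otimes h\)) are correct. So is your auxiliary observation that sliding along an isomorphism \(\theta\) of \(\DD\) conjugates a representative.

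The gap is in tightening and strength. The feedback category being built is \((\St_{\DD}(\CC,\Delta),\DD,\hat\Gamma,\Delta,\fbk)\). The \(u\), \(v\), \(g\) in those axioms are therefore arbitrary morphisms of \(\St_{\DD}(\CC,\Delta)\), i.e.\ classes \([N,u_0]\), \([P,v_0]\), \([N,g_0]\) with nontrivial state. They are not only images \(\eta(-)\) with state \(I\), and by assuming state \(I\) you have verified these two axioms only on the image of \(\eta\).

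To finish, redo the computation with general states:
\begin{itemize}
\item For tightening, the two sides are represented with states that are permutations of \(N\otimes M\otimes S\otimes P\).
\item For strength, the two sides have states \(M\otimes S\otimes N\) and \(M\otimes N\otimes S\).
\end{itemize}
In each case the underlying \(\CC\)-morphisms agree after conjugating by \(\Gamma\) of a symmetry of \(\DD\), using naturality of \(\swap\). These symmetries are isomorphisms in \(\DD\), so your auxiliary observation identifies the two classes, and the proof is then complete.
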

  \begin{proof}
    We need to show that the feedback operator:
    \begin{enumerate}
      \claimitem{state:state-fbk:typed}{is well-typed;}
      \claimitem{state:state-fbk:wd}{is well-defined on equivalence classes;}
      \claimitem{state:state-fbk:axioms}{satisfies the feedback category axioms.}
    \end{enumerate}

    Take objects \(S\in\Ob(\DD)\) and \(X,Y\in\Ob(\CC)\).

    First, because \(\eta\) is the identity on objects, we have that
    \(
    \hat\Gamma(S)=\Gamma(S)
    \) and 
    \(
    \hat\Gamma(\Delta S)=\Gamma(\Delta S).
    \)

    Next, take a representative
    \(
    [M,f]\in \St_{\DD}(\CC,\Delta)\bigl(\hat\Gamma(\Delta S)\otimes X,\ \hat\Gamma(S)\otimes Y\bigr),
    \)
    where
    \[
    M\in\Ob(\DD)
    \quad\text{and}\quad
    f\in \CC\bigl(\Gamma(\Delta M)\otimes \Gamma(\Delta S)\otimes X,\ \Gamma(M)\otimes \Gamma(S)\otimes Y\bigr).
    \]

    Define the feedback operator on the representatives \([M,f]\) by \(\fbk_S([M,f])\coloneqq [M\otimes S,f]\).
    
    \claimparagraph{state:state-fbk:typed}{Feedback operator is well-typed.}
    Because \(\Delta\) and \(\Gamma\) are strict symmetric monoidal, we have that
    \(
    \Gamma(\Delta(M\otimes S))=\Gamma(\Delta M)\otimes \Gamma(\Delta S)
    \)
    and
    \(
    \Gamma(M\otimes S)=\Gamma(M)\otimes \Gamma(S),
    \)
    so that \(f\) is also an element of
    \(
    \CC\bigl(\Gamma(\Delta(M\otimes S))\otimes X,\ \Gamma(M\otimes S)\otimes Y\bigr),
    \)
    and thus,  \([M\otimes S,f]\in \St_{\DD}(\CC,\Delta)(X,Y)\). Therefore, \(\fbk_S\) is well-typed.

    \claimparagraph{state:state-fbk:wd}{Well-definedness of feedback operator.}
    Fix \(t\in\DD(M,M')\) and
    \[
    f\in \CC\bigl(\Gamma(\Delta M)\otimes \Gamma(\Delta S)\otimes X,\ \Gamma(M')\otimes \Gamma(S)\otimes Y\bigr).
    \]
    Moreover, write \(A\coloneqq \hat\Gamma(\Delta S)\otimes X\) and \(B\coloneqq \hat \Gamma(S)\otimes Y\). 

    Observe that the feedback operation in \(\St_{\mathcal D}(\mathcal C, \Delta)\) satisfies the sliding equation on representatives of the equivalence class:
    \begin{align*}
      \fbk_{M'}(M'\otimes S,(\hat \Gamma(t) \otimes 1_B)\circ f)
      &  = (S,(\hat \Gamma(t) \otimes 1_B)\circ f)
       = (S,f\circ(\hat \Gamma(\Delta t) \otimes 1_A))\\
      & =\fbk_M(M\otimes S,f\circ (\hat \Gamma(\Delta t) \otimes 1_A)).
    \end{align*}

    Therefore \(\fbk_M\) is well-defined.

    \claimparagraph{state:state-fbk:axioms}{Feedback category axioms.}
    The other feedback category axioms follow from the evident diagram chases.
  \end{proof}
  This is the \emph{free} feedback category, in the following sense:
  \begin{proposition}\label{prop:state:state-free}
    Take two symmetric monoidal functors  \(\Gamma\colon\DD\to \CC\) and
    \(\Delta\colon\DD\to \DD\).
    Given a feedback category \((\CC',\DD', \Gamma', \Delta', \fbk')\) and two symmetric monoidal functors
    \(F\colon\CC\to \mathcal{C'}\) and \(F|_{\DD}\colon \DD\to \DD'\) satisfying
    \(
    \Gamma'\circ F|_{\DD} = F \circ \Gamma
    \) and \(\Delta'\circ F|_{\DD}=F|_{\DD}\circ \Delta,
    \)
    there is a unique feedback functor
    \(
    ( F^\sharp, F|_{\mathcal D}) \colon\St_{\DD}(\CC,\Delta)\to \CC'
    \)
    making the diagram commute:
    \[
    \xymatrix@C=3.5em@R=1.0em{
      \CC \ar[r]^\eta \ar[dr]_{F} &
      \St_{\DD}(\CC,\Delta) \ar@{..>}[d]^{(F^\sharp, F|_{\mathcal D})} \\
      & \CC'}
    \]
  \end{proposition}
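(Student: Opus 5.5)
We define \(F^\sharp\) on representatives and then check well-definedness, functoriality, preservation of feedback, and uniqueness. Since \(\eta\) is the identity on objects, we set \(F^\sharp(X) \coloneqq F(X)\). A stateful morphism \([S,f]\) has \(f\in\CC(\Gamma(\Delta S)\otimes X,\Gamma(S)\otimes Y)\), and \(F(f)\) then has type
\[
F\Gamma(\Delta S)\otimes FX \to F\Gamma(S)\otimes FY,
\]
which, by the hypotheses \(F\circ\Gamma=\Gamma'\circ F|_{\DD}\) and \(\Delta'\circ F|_{\DD}=F|_{\DD}\circ\Delta\), equals
\[
\Gamma'(\Delta' F|_{\DD}S)\otimes FX\to \Gamma'(F|_{\DD}S)\otimes FY.
\]
So we may define
\[
F^\sharp([S,f]) \coloneqq \fbk'_{F|_{\DD}S}(F(f)).
\]

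\textbf{Well-definedness.} The coend is generated by the relation \([S',(\Gamma(r)\otimes 1)\circ h]=[S,h\circ(\Gamma(\Delta r)\otimes 1)]\) for \(r\colon S\to S'\). Applying \(F\) and using the commuting squares, \(F\Gamma(r)=\Gamma'(F|_{\DD}r)\) and \(F\Gamma(\Delta r)=\Gamma'(\Delta' F|_{\DD}r)\). Hence the two images are exactly the two sides of the sliding axiom in \(\CC'\) for \(h'=F(h)\) and the morphism \(F|_{\DD}r\), so they agree. Because the coend is the quotient by the equivalence relation generated by these moves, well-definedness follows.

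\textbf{Symmetric monoidal functoriality.} Composition in \(\St_{\DD}(\CC,\Delta)\) is \([T,g]\circ[S,f]=[S\otimes T,\,(\ldots)]\), a composite of \(f\) and \(g\) with symmetries that rearrange the state wires. Its image is \(\fbk'_{F|_{\DD}(S\otimes T)}\) of the corresponding composite in \(\CC'\). By joining this splits into nested feedbacks, and tightening and strength pull the state-free parts outside, giving \(\fbk'(F g)\circ\fbk'(F f)\). The tensor is handled the same way using strength and joining, and identities and symmetries use vanishing. This is the classical Katis--Sabadini--Walters computation. It is the main bookkeeping obstacle: one must use the symmetries of \(\CC'\) carefully, together with \(\Gamma'\) strict monoidal and symmetric, so that the swaps introduced when merging states match. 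I would handle this by first proving the lemma \(\fbk'_{S\otimes T}((\sigma\otimes1)\circ f\circ(\sigma\otimes 1))=\fbk'_{T\otimes S}(f)\), which follows from sliding along the symmetry \(\sigma_{S,T}\) in \(\DD'\), since \(\Delta'\) preserves \(\sigma\).

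\textbf{Commutation and preservation of feedback.} For \(\eta(f)=[I,f]\), vanishing gives \(F^\sharp\circ\eta=F\). Using the feedback structure of \Cref{lem:state:induced-feedback},
\[
F^\sharp(\fbk_S[M,f])=F^\sharp[M\otimes S,f]=\fbk'_{F|_{\DD}M\otimes F|_{\DD}S}(Ff),
\]
and by joining this equals \(\fbk'_{F|_{\DD}S}(F^\sharp[M,f])\). The compatibility of \(F^\sharp\) with the guard \(\hat\Gamma=\eta\circ\Gamma\) is then immediate from the commutation with \(\eta\).

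\textbf{Uniqueness.} Every morphism decomposes as
\[
[S,f]=\fbk_S(\eta f),
\]
by the definition of the induced feedback with \(M=I\). Any feedback functor \(G\) agreeing with \(F\) along \(\eta\) must therefore send \([S,f]\) to \(\fbk'_{F|_{\DD}S}(Ff)\), so \(G=F^\sharp\).
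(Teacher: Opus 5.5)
Your proposal is correct and follows the paper's proof essentially step for step. It uses the same definition \(F^\sharp([S,f]) = \fbk'_{F|_{\DD}S}(F(f))\), well-definedness by sliding along \(F|_{\DD}r\), commutation with \(\eta\) by vanishing, preservation of feedback by joining, and uniqueness via \([S,f]=\fbk_S(\eta f)\). Where the paper dismisses functoriality and monoidality as ``evident diagram chases'', you sketch the Katis--Sabadini--Walters computation (joining, tightening and strength, plus the state-reordering lemma obtained by sliding along the symmetry), which is a welcome extra level of detail.
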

  \begin{proof}

    Define \(F^\sharp\colon\St_{\DD}(\CC,\Delta)\to\CC'\) on objects by \(F^\sharp(X)\coloneqq F(X)\).
    Given some \([S,f]\in \St_{\DD}(\CC,\Delta)(X,Y)\), define
    \(
    F^\sharp([S,f])\coloneqq \fbk'_{F|_{\DD}(S)}\bigl(F(f)\bigr)\in \CC'(F(X),F(Y)).
    \)

    We need to show that \(F^\sharp\):
    \begin{enumerate}
      \claimitem{state:state-free:typed}{is well-typed;}
      \claimitem{state:state-free:wd}{is well-defined on equivalence classes;}
      \claimitem{state:state-free:funct}{is functorial and symmetric monoidal;}
      \claimitem{state:state-free:triangle}{satisfies \(F^\sharp\circ\eta=F\);}
      \claimitem{state:state-free:compat}{is compatible with \(\Gamma\) and \(\Delta\);}
      \claimitem{state:state-free:fbk}{preserves the feedback operator;}
      \claimitem{state:state-free:unique}{is the unique such functor.}
    \end{enumerate}

    \claimparagraph{state:state-free:typed}{Well-typedness.}
    Since
    \(\Gamma'\circ F|_{\DD}=F\circ\Gamma\)
    and
    \(\Delta'\circ F|_{\DD}=F|_{\DD}\circ\Delta\), we have that
    \[
    F(\Gamma(S))=\Gamma'(F|_{\DD}(S)),
    \quad\text{and}\quad
    F(\Gamma(\Delta S))=\Gamma'(\Delta'(F|_{\DD}(S))).
    \]

    \claimparagraph{state:state-free:wd}{Well-definedness.}
    Take morphisms
    \(s\in\DD(S,S')\) and 
    \(f\in \CC(\Gamma(\Delta S')\otimes X,\ \Gamma(S)\otimes Y)\).
    Observe that in \(\St_{\DD}(\CC,\Delta)\) we have
    \(
    [S',(\Gamma(s)\otimes 1_Y)\circ f]=[S,f\circ(\Gamma(\Delta s)\otimes 1_X)].
    \)
    On the other hand in \(\CC'\), we have
    {\allowdisplaybreaks
    \begin{align*}
    \fbk&'_{F|_{\DD}(S')}\Bigl(F\bigl((\Gamma(s)\otimes 1_Y)\circ f\bigr)\Bigr)\\
    &=
    \fbk'_{F|_{\DD}(S')}\Bigl((F(\Gamma(s))\otimes 1_{F(Y)})\circ F(f)\Bigr)\\
    &=
    \fbk'_{F|_{\DD}(S')}\Bigl((\Gamma'(F|_{\DD}(s))\otimes 1_{F(Y)})\circ F(f)\Bigr)\\
    &=
    \fbk'_{F|_{\DD}(S)}\Bigl(F(f)\circ(\Gamma'(\Delta'(F|_{\DD}(s)))\otimes 1_{F(X)})\Bigr)\\
    &=
    \fbk'_{F|_{\DD}(S)}\Bigl(F(f)\circ(\Gamma'(F|_{\DD}(\Delta s))\otimes 1_{F(X)})\Bigr)\\
    &=
    \fbk'_{F|_{\DD}(S)}\Bigl(F(f)\circ(F(\Gamma(\Delta s))\otimes 1_{F(X)})\Bigr)\\
    &=
    \fbk'_{F|_{\DD}(S)}\Bigl(F(f)\circ F(\Gamma(\Delta s)\otimes 1_X)\Bigr)\\
    &=
    \fbk'_{F|_{\DD}(S)}\Bigl(F\bigl(f\circ(\Gamma(\Delta s)\otimes 1_X)\bigr)\Bigr).
    \end{align*}
    }

    Hence \(F^\sharp\) is well-defined.

    \claimparagraph{state:state-free:funct}{Functoriality and symmetric monoidality.}
    Follow from the evident diagram chases.

    \claimparagraph{state:state-free:triangle}{Triangle commutes.}
    Given \(f\in\CC(X,Y)\) we have
    \begin{align*}
      F^\sharp(\eta(f))
      =
      F^\sharp([I,f])
      =
      \fbk'_{F|_{\DD}(I)}(F(f))
      =
      \fbk'_I(F(f))
      =
      F(f).
    \end{align*}

    \claimparagraph{state:state-free:compat}{Compatibility with \(\Gamma\) and \(\Delta\).}
    By assumption we have
    \(\Delta'\circ F|_{\DD}=F|_{\DD}\circ\Delta;\)
    moreover,
    \[
      F^\sharp\circ {\hat \Gamma}
      =
      F^\sharp\circ\eta\circ\Gamma
      =
      F\circ\Gamma
      =
      \Gamma'\circ F|_{\DD}.
    \]

    \claimparagraph{state:state-free:fbk}{Preservation of feedback.}
    Let \(S\in\Ob(\DD)\) and take a representative
    \[[M,h]\in \St_{\DD}(\CC,\Delta)\bigl({\hat \Gamma}(\Delta S)\otimes X,\ {\hat \Gamma}(S)\otimes Y\bigr)\,.\]
    Then
    \begin{align*}
    F^\sharp(\fbk_S([M,h]))
    &=
    F^\sharp([S\otimes M,h])
    =
    \fbk'_{F|_{\DD}(S\otimes M)}(F(h))
    =
    \fbk'_{F|_{\DD}(S)}\Bigl(\fbk'_{F|_{\DD}(M)}(F(h))\Bigr)\\
    &=
    \fbk'_{F|_{\DD}(S)}\bigl(F^\sharp([M,h])\bigr).
    \end{align*}

    \claimparagraph{state:state-free:unique}{Uniqueness.}
    Take another feedback functor
    \((H,F|_{\DD})\colon\St_{\DD}(\CC,\Delta)\to\CC',\)
    such that \(H\circ\eta=F\). Moreover  take any representative
    \((S,f)\in \St_{\DD}(\CC,\Delta)(\hat \Gamma (\Delta S) \otimes X, (\hat \Gamma S) \otimes Y)\,.\)
    Therefore
    \begin{align*}
      H([S,f])
      =
      H(\fbk_S([I,f]))
      =
      H(\fbk_S(\eta f))
      =
      \fbk'_{F|_{\DD}(S)}\bigl(H(\eta f))\bigr)
      =
      \fbk'_{F|_{\DD}(S)}\bigl(F(f)\bigr)
      =
  F^\sharp([S,f]).
    \end{align*}
    Hence \(H=F^\sharp\), so \((F^\sharp,F|_{\DD})\) is unique.
  \end{proof}
\end{textAtEnd}

We call the map \(\delta_X\coloneqq \fbk_X (\swap_{\Gamma(\Delta X), \Gamma X}) \), the \textbf{delay morphism} at \(X\). Generalising the work of Katis et al.~\cite[Prop.~3]{Katis1997}, we find that given some weak  assumptions on the guard functor, feedback structures in compact-closed categories are determined by the delay morphisms \(\{\delta_X\}_{X \in \Ob(\mathcal D)}\):
\begin{propositionE}\label{prop:state:feedback-compact-closed}
  Take a compact-closed category \(\CC\), a faithful symmetric monoidal functor
  \(\Gamma\colon\DD \rightarrowtail \CC\)
  such that for all \(X \in \Ob(\mathcal D)\), \(\Gamma^{-1}(\delta_X ) \neq \emptyset\), and a symmetric monoidal endofunctor
  \(\Delta\colon\DD\to \DD\). The existence of a feedback operator is equivalent
  to a monoidal natural transformation \(\delta\colon1_{\DD}\Rightarrow \Delta\), called the
  \textbf{delay natural transformation}. Moreover, such a natural transformation is necessarily a natural isomorphism.
\end{propositionE}
\begin{proofE}
  We need to prove that:
  \begin{enumerate}
    \claimitem{state:fbk-cc:fbk-implies-delay}{the existence of a feedback operator implies a monoidal natural transformation \(\delta\colon1_{\DD}\Rightarrow\Delta\);}
    \claimitem{state:fbk-cc:delay-implies-fbk}{a monoidal natural transformation \(\delta\colon1_{\DD}\Rightarrow\Delta\) implies a feedback operator;}
    \claimitem{state:fbk-cc:iso}{the delay \(\delta\) is necessarily a natural isomorphism.}
  \end{enumerate}

  \claimparagraph{state:fbk-cc:fbk-implies-delay}{Feedback operator implies delay.}
  Assume that for all objects \(S \in \DD\), we have a feedback operator 
  \(
    \fbk_S\colon \CC(\Gamma(\Delta S) \otimes X, \Gamma(S) \otimes Y) \to \CC(X, Y).
  \)

  Using the fact that  \(\Gamma\colon\DD\rightarrowtail \CC\)
  is a faithful symmetric monoidal functor,
  for all \(S\in \Ob(\DD)\), define \(\delta_S \in \DD(S,\Delta S)\) by 
  \(
    \delta_S \coloneqq
    \Gamma^{-1}\bigl(\fbk_S\bigl(\swap_{\Gamma(\Delta S),\Gamma S}\bigr)\bigr)
    \in \CC(\Gamma(S), \Gamma(\Delta S)).
  \)
  Note that \(\delta_S\) exists by assumption and is well-defined due to the faithfulness of \(\Gamma\).

  Now we prove naturality.  Define
  \[
    f \coloneqq \swap_{\Gamma(\Delta S),\Gamma(S)}
    \in \CC(\Gamma(\Delta T)\otimes \Gamma(S),\, \Gamma(S)\otimes \Gamma(\Delta T)).
  \]
  Take any \(h\in \DD(S,T)\). Using sliding, we have
  \[
    \fbk_T\bigl((\Gamma(h)\otimes 1_{\Gamma(\Delta T)})\circ f\bigr)
    =
    \fbk_S\bigl(f\circ(\Gamma(\Delta h)\otimes 1_{\Gamma(S)})\bigr).
  \]
  Using the naturality of \(\swap\), we have
  \[
    (\Gamma(h)\otimes 1)\circ \swap_{\Gamma(\Delta T),\Gamma(S)}
    =
    \swap_{\Gamma(\Delta T),\Gamma(T)}\circ(1\otimes \Gamma(h)),
  \]
  and similarly
  \[
    \swap_{\Gamma(\Delta T),\Gamma(S)}\circ(\Gamma(\Delta h)\otimes 1)
    =
    (1\otimes \Gamma(\Delta h))\circ \swap_{\Gamma(\Delta S),\Gamma(S)}.
  \]
  Substituting these into the sliding equation gives:
  \begin{align*}
    \fbk_T\bigl(\swap_{\Gamma(\Delta T),\Gamma(T)}\circ(1\otimes \Gamma(h))\bigr)
    = \fbk_S\bigl((1\otimes \Gamma(\Delta h))\circ \swap_{\Gamma(\Delta S),\Gamma(S)}\bigr).
  \end{align*}
  Now applying tightening twice yields
  \begin{align*}
    \fbk_T\bigl(\swap_{\Gamma(\Delta T),\Gamma(T)}\circ(1\otimes \Gamma(h))\bigr)
    = \fbk_T\bigl(\swap_{\Gamma(\Delta T),\Gamma(T)}\bigr)\circ \Gamma(h)
    =\Gamma(\delta_T)\circ \Gamma(h),
  \end{align*}
  and similarly
  \begin{align*}
    \fbk_S \bigl((1\otimes \Gamma(\Delta h))\circ \swap_{\Gamma(\Delta S),\Gamma(S)}\bigr)
    = \Gamma(\Delta h)\circ \fbk_S\bigl(\swap_{\Gamma(\Delta S),\Gamma(S)}\bigr)
    = \Gamma(\Delta h)\circ \Gamma(\delta_S).
  \end{align*}
  Hence
  \[
  \Gamma(\Delta h)\circ \Gamma(\delta_S)=\Gamma(\delta_T)\circ \Gamma(h).
  \]
  Since \(\Gamma\) is faithful, this implies that \(\Delta h\circ \delta_S=\delta_T\circ h\),
  therefore \(\delta\colon1_{\DD}\Rightarrow \Delta\) is natural.
  Monoidality of \(\delta\) follows from strength, joining, and vanishing together with coherence of \(\swap\).

  \claimparagraph{state:fbk-cc:delay-implies-fbk}{Delay implies feedback operator.}
  Conversely, assume we have a monoidal natural transformation \(\delta\colon1_{\DD}\Rightarrow \Delta\).
  Take \(S\in\Ob(\DD)\) and
  \(f\in \CC(\Gamma(\Delta S)\otimes X, \Gamma(S)\otimes Y)\).
  Define
  \begin{align*}
    \fbk_S(f)
    \coloneqq
    (\cap_{\Gamma(S)} \otimes 1_Y)&\circ(\swap_{\Gamma(S),\Gamma(S)^*}\otimes 1_Y)
    \circ (1_{\Gamma(S)}\otimes \swap_{Y,\Gamma(S)^*})\\
    & \circ\bigl((f\circ(\Gamma(\delta_S)\otimes 1_X))\otimes 1_{\Gamma(S)^*}\bigr)\\
    & \circ (1_{\Gamma(S)}\otimes \swap_{\Gamma(S)^*,X}) \circ (\cup_{\Gamma(S)}\otimes 1_X).
  \end{align*}
  The feedback axioms follow from compact-closure together with the fact that \(\delta\) is a monoidal natural transformation, using essentially the same argument as is spelled out in \cite{lavore_spangraph_2022}.

  \claimparagraph{state:fbk-cc:iso}{The delay is a natural isomorphism.}
  Consider the delay natural transformation \(\delta \colon 1_{\DD}\Rightarrow \Delta\) from \Cref {prop:state:feedback-compact-closed}.
  Define \(\gamma\colon\Delta\Rightarrow 1_{\DD}\) componentwise by
  \[
  \gamma_X \coloneqq
  (\cap_{\Delta(X^*)}\otimes 1_X)\circ
  (1_{\Delta X}\otimes \delta_{X^*}\otimes 1_X)\circ
  (1_{\Delta X}\otimes \cup_{X^*}).
  \]
  Naturality of \(\gamma\) follows by applying the dual \((-)^*\) to the naturality square for \(\delta\).

  Now we show that \(\gamma\) is inverse to \(\delta\).
  First note that because \(\Delta\) is a symmetric monoidal functor and \(\DD\) is compact-closed, for all objects \(D\) in \(\DD\) we have that
  \(\cup_{\Delta Z}=\Delta(\cup_Z)\) and \(\cap_{\Delta Z}=\Delta(\cap_Z).\)

  For the one side of the inverse, note that naturality of \(\delta\) at \(\cap_{X^*}\colon X\otimes X^*\to I\), together with monoidality of \(\delta\), gives
  \(
  \cap_{\Delta(X^*)}\circ(\delta_X\otimes \delta_{X^*})=\cap_{X^*}.
  \)
  Therefore
  \begin{align*}
  \gamma_X\circ \delta_X
  &=
  (\cap_{\Delta(X^*)}\otimes 1_X)\circ
  (1_{\Delta X}\otimes \delta_{X^*}\otimes 1_X)
   \circ
  (1_{\Delta X}\otimes \cup_{X^*})\circ \delta_X\\
  &=
  (\cap_{\Delta(X^*)}\otimes 1_X)\circ
  (\delta_X\otimes \delta_{X^*}\otimes 1_X)\circ
  (1_X\otimes \cup_{X^*})\\
  &=
  (\cap_{X^*}\otimes 1_X)\circ(1_X\otimes \cup_{X^*})
  =
  1_X.
  \end{align*}
  On the other hand, 
  the naturality of \(\delta\) at \(\cup_{X^*}\colon I\to X^*\otimes X\), together with the monoidality of \(\delta\), imply that
  \(
  (\delta_{X^*}\otimes \delta_X)\circ \cup_{X^*}=\cup_{\Delta(X^*)}.
  \).
  Therefore
  \begin{align*}
  \delta_X\circ \gamma_X
  &=
  \delta_X\circ
  (\cap_{\Delta(X^*)}\otimes 1_X)\circ
  (1_{\Delta X}\otimes \delta_{X^*}\otimes 1_X)
  \circ
  (1_{\Delta X}\otimes \cup_{X^*})\\
  &=
  (\cap_{\Delta(X^*)}\otimes 1_{\Delta X})\circ
  (1_{\Delta X}\otimes \delta_{X^*}\otimes \delta_X)\circ
  (1_{\Delta X}\otimes \cup_{X^*})\\
  &=
  (\cap_{\Delta(X^*)}\otimes 1_{\Delta X})\circ
  (1_{\Delta X}\otimes \cup_{\Delta(X^*)})
  =
  1_{\Delta X}. \tag*{\qedhere}
  \end{align*}
\end{proofE}

\subsection{Stateful morphism sequences}

Applying the state construction to \(\mathcal C\) identifies stateful processes built from \(\mathcal C\) that differ only by a reparametrisation of the internal state.  However, this semantics is purely formal: it does not specify how to iterate a process through time.
To model iteration explicitly, we pass from \(\CC\) to categories of time-indexed families in
\(\CC\).  The key extra ingredient is a canonical endofunctor expressing  a one timestep delay:
\begin{definition}\label{def:state:time-indexed}
  Let \(\Z\) and \(\N\) denote the discrete symmetric monoidal categories whose objects
  are respectively the integers and the natural numbers, where the monoidal product and unit are given by addition and \(0\).

  Given a symmetric monoidal category \(\CC\), the functor categories
  \([\Z,\CC]\) and \([\N,\CC]\)  have objects and morphisms given by \mbox{(bi)infinite} sequences of objects and morphisms in \(\CC\).
  Equipping these functor categories with the
  \emph{pointwise} symmetric monoidal structure, we find that there are symmetric monoidal functors:
  \begin{align*}
    \shftZ\colon [\Z,\CC] \to [\Z,\CC];\qquad
    &(\shftZ X)_{j+1} \coloneqq X_{j},\quad  (\shftZ f)_{j+1} \coloneqq f_{j};\\
    \shftN\colon [\N,\CC] \to [\N,\CC];\qquad
    &(\shftN X)_0 \coloneqq I,\quad (\shftN X)_{j+1}\coloneqq X_j,\ \\
    &(\shftN f)_0 \coloneqq 1_I,\quad (\shftN f)_{j+1}\coloneqq f_j.
  \end{align*}
  With this data, we can build feedback categories of \textbf{extensional monoidal streams}~\cite[\S~3.2]{lavore_monoidal_2022}.
  Call a morphism in \(\St([\N, \CC], \shftN )\) an \textbf{initialised stateful morphism sequence} in \(\CC\); and a morphism in \(\St([\Z, \CC], \shftZ )\) an \textbf{uninitialised stateful morphism sequence}.
\end{definition}

A guard functor \(\Gamma\colon \DD \to \CC\) induces functors \(\Gamma\colon [\N, \DD] \to [\N, \CC]\) and \(\Gamma\colon [\Z, \DD] \to [\Z, \CC]\) which can be used to restrict the sliding rule in stateful morphism sequences. For instance, initialised stateful morphism sequences with sliding guarded by \(\Gamma\) live in \(\St_{[\N, \DD]}([\N, \CC], \shftN)\). We however do not require this restriction in our results.

In the uninitialised setting, we can unroll stateful morphisms from the free feedback category into stateful morphism sequences analogously to \Cref{eq:mealy_unrolling}:
\begin{lemmaE}\label{lem:state:z-stateful}
  For every SMC \(\CC\), there is a unique feedback functor \(U\colon\St(\CC) \to \St([\Z, \CC], \shftZ)\) which is the identity on objects and acts on morphisms by:
  \[\Bigl( [S,f]\colon X\to Y\Bigr) \quad \longmapsto \quad \Bigl(\bigl[( f\colon S\otimes X\to  S\otimes Y)_{j\in \Z}\bigr] \colon X^\Z\to Y^\Z \Bigr)\,.\]
\end{lemmaE}
\begin{proofE}
  Consider the symmetric monoidal functor:
  \[
    J\colon\CC\to \St([\Z,\CC],\shftZ ) ;\quad (f\colon X\to Y) \mapsto [ (I \otimes f)_{j \in \Z}].
  \]
  Using the canonical functor \(\eta\colon\CC\to \St(\CC)\),
  the universal property of the state construction given in \Cref{prop:state:state-free} induces a unique feedback functor
  \(
    J^\sharp\colon\St(\CC)\to \St([\Z,\CC],\shftZ),
  \)
  such that for all \(f\in\CC(X,Y)\)
  \(J^\sharp([I,f]) = [ (f)_{j \in \Z} ]\).
  Moreover, for any \([S,f] \in \St(\CC)(X,Y)\), since  \(J^\sharp\) is a feedback functor, it follows that
  \(J^\sharp ([S,f]) = \shftZ(J^\sharp ([S,f]))\,.\)
\end{proofE}
\begin{remark}\label{rem:state:initialised}
  On the other hand,  unless \(\CC\) is a \emph{monoid}, there is no feedback functor \(F\colon \St(\CC) \to \St([\N, \CC], \shftN)\) acting on objects by \(X \mapsto (X, X, \dots)\), since we would then have that \((X, X, \dots) = F(1_{\CC}(X)) = \shftN(F(X)) = (I, X, X, \dots)\), and thus \(I=X\) for all objects \(X\).

  Nevertheless, one may still obtain an \(\N\)-indexed interpretation of stateful morphisms
  by equipping morphisms in \(\St\) with an initial state as in the work of Bonchi et al.~\cite[\S~III.E.]{bonchi_effectful_2025}, but this requires changing the definition of the feedback operation \cite[Def.~III.15]{bonchi_effectful_2025}.
  Here, we will use \(\St([\N, \CC], \shftN)\) directly as a syntax for initialised automata for convenience.
\end{remark}

Extensional streams provide a categorical semantics for Mealy
machines, though in a different sense from the state construction over
the base category: they identify processes whose state can be
propagated finitely far into the future or the past. The functor of
\Cref{lem:state:z-stateful} may therefore be seen as equipping the
more primitive semantics of the previous subsection with a notion of
discrete-time dynamics.

This finite propagation, however, is also a defect. If a process is
found to be inconsistent at some point in time, then the process as a
whole should be interpreted as inconsistent. Because extensional streams
permit morphisms to slide only \emph{finitely} far into the past or
future, an inconsistency propagates only finitely far, and it follows
immediately that:
\begin{lemma}\label{lem:state:dinatural-zero}
  Take a symmetric monoidal category \(\CC\) with monoidal zero morphisms. Consider  a representative dinatural stream \([\vb f]\) in  \(\St([\Z, \CC],\shftZ)\) or \(\St([\N, \CC], \shftN)\) with some \(f_j = 0\). Then in general \([\vb f]\) will not be a monoidal zero morphism.
\end{lemma}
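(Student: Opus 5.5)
The statement is an \emph{in general} negative claim, so I will prove it by exhibiting a single counterexample together with an invariant of the coend that is preserved by sliding but distinguishes the given stream from zero. The plan is to take \(\CC=\Rel\) (which has monoidal zero morphisms, namely the empty relations) and to work first in \(\St([\N,\Rel],\shftN)\); the \(\Z\)-indexed case is handled in the same way. As the stream I take \(X=Y=\{\ast\}^\N\), state \(S=(I,I,\dots)\), and \(\vb f\) with \(f_0=0\) and \(f_j=1\) for \(j\ge 1\). The competitor is the monoidal zero \([I,(0)_j]\), whose components are zero at every index. I must show that these two classes differ.

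The key step is to find an invariant of the equivalence relation generated by sliding. A single sliding step relates \((S',(\Gamma(s)\otimes 1)\circ h)\) and \((S,h\circ(\shftN(s)\otimes 1))\) for a sequence of morphisms \(s_j\colon S_j\to S'_j\) and a family \(h_j\colon S_{j-1}\otimes X_j\to S'_j\otimes Y_j\) (with \(S_{-1}=I\) in the \(\N\)-case). Consider the \emph{first-stage composite} \(h_0\) in the \(\N\)-case, or more robustly the truncation to a finite window. One side has \(s_0\circ h_0\) at stage \(0\) and the other has \(h_0\) itself, so the stage-\(0\) components need not agree; a pure componentwise invariant fails.

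Instead I use the finite \(n\)-step evolution \(\vb f^{(n)}\) postcomposed with a discard of the final state. For \(n\ge 1\), this composite \(X_0\otimes\cdots\otimes X_{n-1}\to Y_0\otimes\cdots\otimes Y_{n-1}\) is changed by sliding only through the last state map \(s_{n-1}\), which is then discarded. In \(\Rel\), however, discarding after \(s_{n-1}\) is not the same as discarding directly, so this is not invariant either.

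The genuinely invariant quantity I plan to use is therefore cruder, and exploits that stage \(j\) only interacts with stages \(j\pm1\). I define a relation \(\sim\) on representatives by declaring \(\vb f\sim\vb g\) when they agree at all sufficiently large indices, for some common state object there. Each sliding step modifies only finitely many indices if \(s\) is an identity outside a finite set, but an arbitrary \(s\) may be nonidentity at infinitely many stages. This is the main obstacle.

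To handle it, I take \(\CC\) to be a concrete SMC with zero in which state maps cannot absorb zeros: \(\Rel\) restricted to the trivial object, i.e.\ the monoid \(\{0,1\}\) under multiplication viewed as a one-object SMC. Then all states are \(I\), sliding multiplies stage \(j\) by \(s_j\) on one side and stage \(j{+}1\) by \(s_j\) on the other. The invariant I propose is the finite-vs-cofinite pattern of zeros. A single zero at index \(0\) can be moved only by an \(s\) with \(s_j=0\), which zeroes \emph{two} consecutive positions on one side. Hence the parity of the number of zeros, or more simply ``the set of zeros is finite'', is preserved by each generating step whenever it is preserved with \(s\) differing from \(1\) on a finite set. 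For general \(s\), the set \(\{j: s_j=0\}\) yields zeros at \(j\) and \(j{+}1\) on the two sides, so finiteness of the zero set is preserved. Since \([\vb f]\) has one zero and the monoidal zero has infinitely many, the two classes are distinct, which proves the lemma.
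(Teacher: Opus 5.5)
Your final argument is correct and is a rigorous version of the paper's one-line justification that sliding propagates an inconsistency only finitely far. In the one-object SMC $(\{0,1\},\cdot)$, writing $Z(g)=\{j : g_j=0\}$, a single slide relates zero sets $Z(h)\cup Z(s)$ and $Z(h)\cup(Z(s)+1)$, so finiteness of the zero set is a sliding invariant that separates $[\vb f]$ from the all-zero class (the only possible zero morphism, since zero morphisms are unique). Delete the parity claim, which is false: $h=s=(0,1,1,\dots)$ relates zero sets $\{0\}$ and $\{0,1\}$. Delete the abandoned exploratory invariants as well. Keep the restriction to a single object explicit, since in all of $\Rel$ a composite of nonempty relations can be empty and the componentwise zero set is then no longer invariant.
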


This is only one example of the many shortcomings of extensional streams.
For example, using the feedback operator, any endomorphism can be turned into a scalar that never interfaces with the inputs or outputs of the stream. Consequently, extensionally indistinguishable processes need not lie in the same equivalence class.
However many of these problems are resolved via taking a more sophisticated coalgebraic approach.

\subsection{Coalgebraic streams}\label{sec:state:coalgebraic}
Stateful morphism sequences only allow for the internal state at one time to be reparametrised finitely far into the past or the future. Therefore, in a certain sense, the induced equivalence relation is very coarse as it does not account for truly infinite behaviour. In order to give a properly infinite stream semantics, in this subsection, we recall the coalgebraic notion of monoidal streams first developed by Di Lavore et al.~\cite[Def.~4.1]{lavore_monoidal_2022} and later generalised by Bonchi et al.~\cite[Def.~IV.4.]{bonchi_effectful_2025}.

Given a symmetric monoidal category \(\CC\) and a symmetric monoidal subcategory \(\Gamma\colon\DD\rightarrowtail \CC\), Bonchi et al.~\cite{bonchi_effectful_2025} define a feedback category, which we denote by  \(\Stream_{\DD}(\CC)\), whose objects are infinite lists of objects in \(\CC\), where the hom-sets are given by the final fixpoint of profunctors:\footnote{In fact they give a more general definition for premonoidal categories; however, we are only interested in monoidal setting in this paper.}
\begin{equation}\label{eq:state:stream-fixpoint}
    \Stream_{\DD} (\CC)(\vb X, \vb Y)\cong
    \;\;
    \int^{M \in \DD}
    \CC \bigl(X_0, (\Gamma M) \otimes Y_0 \bigr)
    \times
    \Stream_{\DD}(\CC) \bigl((\Gamma M) \cdot \vb X^+,\vb Y^+ \bigr),
\end{equation}
where given  \(\vb X, \vb Y \in \Ob(\CC)^\N\) and  \(N\in \Ob(\CC)\),  we denote
\[N \cdot \vb X \coloneqq (N \otimes X_0, X_1, X_2, \dots),\quad\text{and}\quad   \vb X^+ \coloneqq (X_1, X_2, \dots)\,.\]

In case \(\Gamma = 1_{\CC }\) denote this category by \(\Stream(\CC)\) as in \cite{lavore_monoidal_2022}.
The category \(\Stream(\CC)\)  elegantly captures total stateful processes;  for example, stateful processes built from Cartesian monoidal categories or Markov categories with ranges and conditionals \cite[Rem.~4.10]{lavore_monoidal_2022}.
However, we argue that in the presence of compact-closure or monoidal zero morphisms, which are often present in nondeterministic partial and quantum settings, coalgebraic monoidal streams are poorly behaved.

\begin{textAtEnd}
  To further understand coalgebraic streams, we must further expose the work of Di Lavore et al.~\cite{lavore_monoidal_2022}.

  \begin{definition}[{\cite[Def.~4.2]{lavore_monoidal_2022}}]
    \label{def:state:stage}
    Fix a symmetric monoidal category \((\CC,\otimes,I)\) and \(\vb X,\vb Y\in\Ob(\CC)^{\N}\).
    Define \(M_{-1}\coloneqq I\) and  for all \(n\in\N\) define
    \[
      \Stage_n(\vb X,\vb Y)\coloneqq
      \int^{\vb{M}\in\CC^{n-1}}
      \prod_{i=0}^{n}\CC(M_{i-1}\otimes X_i,\ M_i\otimes Y_i),
    \]
    denoting representatives by  \([\ \langle f_0|\cdots|f_n|\ ] \in \Stage_n(\vb X,\vb Y) \).
    Define  the truncation maps by:
    \begin{gather*}
      \pi_n\colon\Stage_{n+1}(\vb X,\vb Y)\to\Stage_n(\vb X,\vb Y); \quad
      \langle f_0|\cdots|f_{n+1}| \mapsto  \langle f_0|\cdots|f_n| .
    \end{gather*}
  \end{definition}

  Next we recall the following notion under which coalgebraic streams are ``well-behaved'':
  \begin{definition}[{\cite[Def.~9]{lavore_monoidal_2022}}]
    \label{def:state:productive}
    A SMC is \textbf{productive} if for all \(\alpha\in\Stage_0(\vb X,\vb Y)\) there exists a representative
    \(\alpha_0\colon X_0\to M_0\otimes Y_0\) such that, for every representative \(\alpha'\colon X_0\to M\otimes Y_0\) of \(\alpha\),
    there exists \(s\colon M_0\to M\) with \(\alpha'=(s\otimes 1_{Y_0})\circ\alpha_0\).
  \end{definition}

  \begin{lemma}\label{lem:state:productive}
    The following  SMCs are productive:
    \begin{itemize}
      \item Cartesian monoidal categories;
      \item compact-closed categories;
      \item SMCs with monoidal zero morphisms.
    \end{itemize}
  \end{lemma}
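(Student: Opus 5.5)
For each of the three cases I will exhibit, for every $\alpha\in\Stage_0(\vb X,\vb Y)$, a distinguished representative $\alpha_0\colon X_0\to M_0\otimes Y_0$ through which every other representative factors by a morphism on the memory. Recall that $\Stage_0(\vb X,\vb Y)$ is the quotient of $\coprod_M \CC(X_0, M\otimes Y_0)$ by the equivalence relation generated by $(s\otimes 1)\circ g \sim g$, where $s\colon M\to M'$ is slid into the (trivial) future. The plan is to choose $\alpha_0$ so canonically that all the generating moves become factorisations through it.

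\textbf{Cartesian case.} The future stage is the terminal discard, so every representative $g\colon X_0\to M\times Y_0$ is equivalent to $(\discard_M\times 1)\circ g\colon X_0\to Y_0$, and the latter determines the class. I take $M_0\coloneqq X_0$ and $\alpha_0\coloneqq \langle 1_{X_0}, h\rangle$, where $h$ is the common $Y_0$-component. Any representative $\alpha'=\langle m,h\rangle$ then factors as $(m\times 1)\circ\alpha_0$, by the universal property of products. The one thing to check is that the $Y_0$-component is indeed invariant under sliding; this holds because sliding $s$ only changes the $M$-component.

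\textbf{Compact-closed case.} I take $M_0\coloneqq X_0\otimes Y_0^\ast$ and let $\alpha_0\colon X_0\to X_0\otimes Y_0^\ast\otimes Y_0$ be $1_{X_0}\otimes\cup$, suitably permuted. Given any $\alpha'\colon X_0\to M\otimes Y_0$, I bend the $Y_0$ wire of $\alpha'$ to obtain $s\coloneqq (1_M\otimes\cap)\circ(\alpha'\otimes 1_{Y_0^\ast})\colon X_0\otimes Y_0^\ast\to M$. The snake equation then gives $(s\otimes 1)\circ\alpha_0=\alpha'$. I also need $\alpha_0$ to represent the same class $\alpha$. This follows because $\alpha'=(s\otimes 1)\circ\alpha_0$ is exactly a sliding move, so $\alpha_0\sim\alpha'$; in particular every class of $\Stage_0$ is the single class of $\alpha_0$.

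\textbf{Zero-morphism case.} I take $M_0\coloneqq X_0$ and $\alpha_0\coloneqq 1_{X_0}\otimes 0_{I,Y_0}\colon X_0\to X_0\otimes Y_0$. Given $\alpha'\colon X_0\to M\otimes Y_0$, I want $s\colon X_0\to M$ with $(s\otimes 1)\circ\alpha_0=\alpha'$. Since $(s\otimes 1)\circ(1\otimes 0)=s\otimes 0=0$, this only works when $\alpha'=0$. So this case requires seeing that $\Stage_0$ collapses: every $\alpha'$ satisfies $\alpha' \sim 0$. The collapse argument is to slide through $M\to I$-type maps built from zero. I would write $0_{M,M}\circ$ as the sliding morphism, using the relation $(0_{M,M}\otimes 1)\circ\alpha'\sim\alpha'$, where the relation is read in the direction where the sliding morphism is applied to the memory. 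Then $(0\otimes 1)\circ\alpha'=0_{X_0,M\otimes Y_0}$ by the monoidal zero axioms, so the class of $\alpha'$ equals the class of $0$. With this collapse established, I choose $M_0=I$ and $\alpha_0=0_{X_0,Y_0}$. For any representative $\alpha'$ of the (unique) class $\alpha$, I then still need $s\colon I\to M$ with $(s\otimes 1)\circ 0=\alpha'$, and this fails unless $\alpha'=0$.

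\textbf{Main obstacle.} This last point is the main obstacle: productivity demands that every representative factor through $\alpha_0$, not merely that the classes agree. I expect the paper's notion to be read up to the equivalence in $\Stage_0$, or with the factorisation asked only up to that equivalence, in which case the collapse argument suffices. Failing that, I would check whether the intended quantifier is over representatives $\alpha'$ related to $\alpha_0$ by a single sliding step $\alpha'=(s\otimes 1)\circ\alpha_0$, which trivially holds here via $s=0$ in the compact-closed-free zero case. Making precise which reading is intended, and verifying the collapse direction of the sliding relation, is where I would spend the care.
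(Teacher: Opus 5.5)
Your Cartesian and compact-closed cases are correct. The paper does not reprove them; it cites Di Lavore et al.\ (Remark~4.10). Your compact-closed factorisation \(\alpha'=(s\otimes 1)\circ\alpha_0\) through \(M_0=X_0\otimes Y_0^\ast\) is the same cup/cap trick the paper uses, via the maps \(\eta\) and \(s_f\), in the proof of \Cref{lem:state:nogo-compact-closed}. In the zero-morphism case, your collapse argument is exactly the paper's proof: sliding \(0_{M,I}\) shows that \(\Stage_0(\vb X,\vb Y)\) is a singleton. The paper then simply asserts ``hence every \(0\)-stage process is terminating, so \(\CC\) is productive''.

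That last step is the one you distrust, and you are right to. It does not follow from \Cref{def:state:productive}, and in fact the third bullet is false as stated. In \(\Par\), the nowhere-defined partial functions are monoidal zero morphisms. Take \(X_0=\{\ast\}\) and \(Y_0=\{a,b\}\). The total maps \(\ast\mapsto(\ast,a)\) and \(\ast\mapsto(\ast,b)\) into \(\{\ast\}\times Y_0\) both represent the unique class in \(\Stage_0\). Now fix any \(\alpha_0\colon\{\ast\}\to M_0\times Y_0\) and any partial \(s\). The composite \((s\times 1)\circ\alpha_0\) is either undefined at \(\ast\) or has the same \(Y_0\)-component as \(\alpha_0(\ast)\). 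So no single \(\alpha_0\) yields both maps. A singleton \(\Stage_0\) gives productivity only when a universal representative also exists, as in the compact-closed case, and zero morphisms do not provide one.

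Your proposed rescues do not help. Asking for the factorisation only up to equivalence in \(\Stage_0\), or only for representatives already of the form \((s\otimes 1)\circ\alpha_0\), makes the condition essentially vacuous. It could then no longer support \Cref{lem:state:productive-stage}, which is the only purpose productivity serves here. The correct resolution is to drop the third bullet rather than look for a proof of it. If \Cref{lem:state:nogo-zero} is to be kept, it needs an argument that does not route through \Cref{lem:state:productive-stage}.
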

  \begin{proof}
    The first two claims are proved by Di Lavore et al.~\cite[Remark~4.10]{lavore_monoidal_2022}.
    Therefore, we only prove the final claim involving monoidal zero morphisms.

    Fix \(\vb X,\vb Y\in\Ob(\CC)^{\N}\) and let \(\alpha\in\Stage_0(\vb X,\vb Y)\).
    Choose a representative \(\alpha=\langle \alpha_0|\) with
    \(\alpha_0\colon X_0\to M\otimes Y_0\) for some \(M\in\Ob(\CC)\).
    The axioms of monoidal zero morphisms imply that
    \[
      0_{M,I}\otimes 1_{Y_0}
      =
      \swap_{I,Y_0}\circ(1_{Y_0}\otimes 0_{M,I})\circ \swap_{M,Y_0}
      =
      0_{M\otimes Y_0,\,I\otimes Y_0}.
    \]
    Moreover, by dinaturality
    \(
      \langle \alpha_0|
      =
      \langle (0_{M,I}\otimes 1_{Y_0})\circ \alpha_0|
      =
      \langle 0_{X_0,\,I\otimes Y_0}|.
    \) 
    Therefore \(\Stage_0(\vb X,\vb Y)\) is a singleton set. Hence every \(0\)-stage process is terminating so that \(\CC\) is productive.
  \end{proof}

  \begin{lemma}[{\cite[Thm.~4.11.]{lavore_monoidal_2022}}]
    \label{lem:state:productive-stage}
    If \(\CC\) is productive, then for all \(\vb X,\vb Y\in\Ob(\CC)^{\N}\) there is a canonical bijection
    \(\Stream(\CC)(\vb X,\vb Y)\cong \lim_{n\in\N}\Stage_n(\vb X,\vb Y)\),
    where the limit is taken along the truncations \(\pi_n\).
  \end{lemma}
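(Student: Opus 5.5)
The plan is to recognise $\Stream(\CC)$ as the final coalgebra of the endofunctor $F$ on families of sets indexed by pairs of object sequences, given by the right-hand side of \eqref{eq:state:stream-fixpoint}:
\[
F(Q)(\vb X,\vb Y)\coloneqq\int^{M\in\CC}\CC(X_0,M\otimes Y_0)\times Q(M\cdot\vb X^+,\vb Y^+).
\]
I would then compute this final coalgebra by Adámek's terminal-sequence construction. That construction starts from the terminal family $1$ and forms $1\leftarrow F1\leftarrow F^2 1\leftarrow\cdots$ with the limit $L\coloneqq\lim_n F^{n}1$. If $F$ preserves this $\omega$-limit, so that the canonical map $F(L)\to L$ is a bijection, then $L$ is the final coalgebra. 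The lemma then reduces to two claims:
\begin{enumerate}
\item $F^{n+1}1\cong\Stage_n$, compatibly with the truncations $\pi_n$;
\item $F$ preserves the limit of this chain whenever $\CC$ is productive.
\end{enumerate}

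For the first claim I would argue by induction on $n$. The base case is immediate: $F1(\vb X,\vb Y)$ is the coend $\int^{M}\CC(X_0,M\otimes Y_0)$, which is exactly $\Stage_0(\vb X,\vb Y)$. For the inductive step I would use the Fubini rule for coends and the fact that coends commute with products by a fixed set. These combine the outer coend over $M_0$ with the inner coend over $M_1,\dots,M_{n-1}$ appearing in $\Stage_{n}(M_0\cdot\vb X^+,\vb Y^+)$, giving the coend over $\vb M\in\CC^{n}$ that defines $\Stage_{n+1}(\vb X,\vb Y)$. The first factor of the product supplies the component $f_0$, and the inner stage supplies $f_1,\dots,f_{n+1}$, now typed $M_{i-1}\otimes X_i\to M_i\otimes Y_i$. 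Naturality in the terminal-sequence maps shows these bijections intertwine $F^{n+1}(!)$ with $\pi_n$. This part is a routine bookkeeping exercise.

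The main obstacle is the second claim. A coend is a quotient of a coproduct, and quotients do not in general commute with $\omega$-limits. Concretely, an element of $\lim_n\Stage_n(\vb X,\vb Y)$ is a compatible family of classes. To lift it to $F(L)$ we need a \emph{single} memory $M_0$ and first component $f_0$ that serves every stage at once. Moreover, any two choices must be related by a single zigzag of slidings, not by a different zigzag at each stage.

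Productivity resolves this. Given $\alpha\in\Stage_0(\vb X,\vb Y)$, the productive representative $\alpha_0\colon X_0\to M_0\otimes Y_0$ is weakly initial among representatives, since every other representative factors through it by some $s\colon M_0\to M$. For a compatible family $(\beta_n)_n$, I would take the common image $\alpha=\pi(\beta_n)$ in $\Stage_0$ and fix its productive representative $\alpha_0$. For each $n$, I would then slide every representative of $\beta_{n+1}$ so that its first component is exactly $\alpha_0$. This is possible because the first component of any representative is itself a representative of $\alpha$, so it factors as $(s\otimes 1)\circ\alpha_0$, and $s$ can be pushed into the tail. The tails then define elements $\gamma_n\in\Stage_n(M_0\cdot\vb X^+,\vb Y^+)$.

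The delicate point is showing that the $\gamma_n$ are well defined and compatible, so that they assemble into an element of $L(M_0\cdot\vb X^+,\vb Y^+)$. For this I would follow the strategy of Di Lavore et al.\ closely. First, restricting to representatives with first component $\alpha_0$, I would show that sliding equivalence of the whole of $\beta_{n+1}$ is equivalent to sliding equivalence of the tails. Here weak initiality of $\alpha_0$ is used to normalise each sliding step back through $\alpha_0$. Second, I would check that the resulting map $L\to F(L)$ is inverse to the canonical comparison $F(L)\to L$. With both claims in hand, Adámek's theorem gives $\Stream(\CC)(\vb X,\vb Y)\cong L(\vb X,\vb Y)\cong\lim_n\Stage_n(\vb X,\vb Y)$, as claimed.
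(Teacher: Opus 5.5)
The paper does not prove this lemma; it only cites Di Lavore et al. Your architecture is the right one: Adámek's terminal sequence, \(F^{n+1}1\cong\Stage_n\) with \(F^{n+1}(!)\) matching \(\pi_n\), and a reduction to showing that \(F\) preserves the \(\omega\)-limit. Claim (i) is fine, provided \(F\) acts on families that are (contravariantly) functorial in the head of the first sequence, since otherwise the coend over \(M\) is not defined; limits there are pointwise, so Adámek's theorem still applies.

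The gap is in claim (ii), at the step where weak initiality is supposed to ``normalise each sliding step back through \(\alpha_0\)''. Productivity as defined only asserts that \emph{some} \(s\) with \(\alpha'=(s\otimes 1)\circ\alpha_0\) exists. Consider a sliding \(\langle (t\otimes 1)\circ a\,|\,h\,|\cdots\sim\langle a\,|\,h\circ(t\otimes 1)\,|\cdots\). The two sides normalise to tails \(h\circ(w\otimes 1)\) and \(h\circ(tu\otimes 1)\), where \(w\) and \(tu\) are two possibly different factorisations of the same morphism \((t\otimes1)\circ a\) through \(\alpha_0\). Nothing identifies these tails in \(\Stage_n(M_0\cdot\vb X^+,\vb Y^+)\). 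In particular, every \(e\colon M_0\to M_0\) with \((e\otimes 1)\circ\alpha_0=\alpha_0\) gives \(\langle\alpha_0|h|\cdots=\langle\alpha_0|h\circ(e\otimes1)|\cdots\).

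This really happens. In \(\FinStoch\), marginalising the memory gives \(\Stage_0(\vb X,\vb Y)\cong\FinStoch(X_0,Y_0)\). Take \(X_0=X_1=I\) and \(Y_0=Y_1=2\). Then \(\alpha_0=\delta_0\otimes\delta_0\colon I\to 2\otimes 2\) is a legitimate productive representative of \(\delta_0\): every representative is \(\mu\otimes\delta_0\), and \(s(1)\) is arbitrary. The constant map \(e\) at \(0\) fixes \(\alpha_0\). For \(h=\cpy_2\), the wholes \(\langle\alpha_0|h|\) and \(\langle\alpha_0|h\circ e|\) agree, but the tails have marginals \(1_2\) and the constant map. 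So ``wholes equivalent \(\Rightarrow\) tails equivalent'' is false. The \(\gamma_n\) are only determined modulo the relation \(\approx\) generated by pairs \(u,v\colon M_0\to M\) with \((u\otimes1)\circ\alpha_0=(v\otimes1)\circ\alpha_0\).

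This is not cosmetic. Claim (ii) becomes the statement that the quotient by \(\approx\) commutes with \(\lim_n\), which is exactly the interchange productivity was meant to settle.
For surjectivity, you only get \(\pi(\gamma_{n+1})\approx\gamma_n\), and must re-choose the tails so that equality holds on the nose.
For injectivity of \(F(L)\to L\), two limit elements whose truncations are related by zig-zags that may vary with \(n\) must be related by a single zig-zag. Already for a group of such \(e\), the witnesses of \(e\cdot t_n=t'_n\) form a decreasing chain of nonempty sets that can have empty intersection. Think of \(\Z\) acting by translation on \(\Z/2^n\Z\), with a \(2\)-adic integer that is not an integer.

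The missing ingredient is uniqueness of \(s\): \(\alpha_0\) must be initial, not merely weakly initial, in its connected component of the category of elements of \(M\mapsto\CC(X_0,M\otimes Y_0)\). Then the fibre of \(\Stage_{n+1}(\vb X,\vb Y)\) over \(\alpha\) is exactly \(\Stage_n(M_0\cdot\vb X^+,\vb Y^+)\), since a colimit over a category with a terminal object is evaluation there. The \(\gamma_n\) are then well defined and compatible, and the rest of your argument, including the check that the two maps are mutually inverse, goes through verbatim.

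Uniqueness holds in the cartesian case (\(s\) is the \(M\)-component of \(\alpha'\)) and in the compact-closed case (\(s\) is the transpose of \(\alpha'\)). It also holds in \(\FinStoch\) once the memory is restricted to the support. With only the existential condition of the stated definition, you need an extra hypothesis or a genuinely different argument at this step.
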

\end{textAtEnd}

In the presence of compact-closure, coalgebraic monoidal streams without a guarded subcategory are degenerate:
\begin{lemmaE}
  \label{lem:state:nogo-compact-closed}
  \(\Stream(\CC)\) is indiscrete for compact-closed \(\CC\).
\end{lemmaE}
\begin{proofE}
  Fix \(\vb X,\vb Y\in\Ob(\CC)^{\N}\).
  Take some
  \(
    \langle f_0|f_1|\cdots|f_n|\in \Stage_n(\vb X,\vb Y)
  \)
  with witnesses \newline\(f_i\colon M_{i-1}\otimes X_i\to M_i\otimes Y_i\) and \(M_{-1}\coloneqq I\).
  Define
  \[
    \eta_{M,X,Y}\coloneqq 1_{M\otimes X}\otimes(\swap_{Y,Y^*}\circ\cup_Y)\colon
    M\otimes X\to (M\otimes X\otimes Y^*)\otimes Y,
  \]
  Moreover for any \(f\colon M\otimes X\to N\otimes Y\), define
  \[
    s_f\coloneqq (1_N\otimes \cap_Y)\circ(1_N\otimes \swap_{Y,Y^*})\circ(f\otimes 1_{Y^*})
    \colon M\otimes X\otimes Y^*\to N.
  \]
  Therefore it follows that \((s_f\otimes 1_Y)\circ \eta_{M,X,Y}=f\).
  Write \(N_{-1}\coloneqq I\) and \(N_k\coloneqq N_{k-1}\otimes X_k\otimes Y_k^*\).
  Then by dinaturality of the coends and the above factorisation, we have that
  \begin{align*}
    \langle f_0| \cdots|f_{n-1}|f_n|
    \approx{} &
    \langle f_0| \cdots|f_{n-1}|\eta_{M_{n-1},X_n,Y_n}|\\
    \approx{}&
    \langle f_0| \cdots|f_{n-2}|\eta_{M_{n-2},X_{n-1},Y_{n-1}}|
    \eta_{M_{n-2}\otimes X_{n-1}\otimes Y_{n-1}^*,\,X_n,\,Y_n}|\\
    \vdots\; & \\
    ={} &
    \langle \eta_{N_{-1},X_0,Y_0}|\eta_{N_0,X_1,Y_1}|\cdots|\eta_{N_{n-1},X_n,Y_n}|.
  \end{align*}
  Therefore for all \(n\in\N\), \(\Stage_n(\vb X,\vb Y)\) is a singleton set.
  By \Cref{lem:state:productive}, \(\CC\) is productive; therefore by \Cref{lem:state:productive-stage} we have \(\Stream(\CC)\) is indiscrete.
\end{proofE}

As remarked by Bonchi et al.~\cite[Rem.~V.12]{bonchi_effectful_2025} this  problem  can be sidestepped by working in \(\Stream_{\mathcal D}(\CC)\)
for some non-compact-closed symmetric monoidal subcategory \(\mathcal D\) of \(\mathcal C\). When \(\CC\) is a partial Markov category, they choose \(\DD\) to be the category of total maps \(\Tot(\CC)\).

Without a guard, in the presence of monoidal zero morphisms, coalgebraic monoidal streams are also degenerate:
\begin{lemmaE}\label{lem:state:nogo-zero}
  If \(\CC\) is a SMC with monoidal zero morphisms, then \(\Stream(\CC)\) is indiscrete.
\end{lemmaE}
\begin{proofE}
  Fix \(\vb X,\vb Y\in\Ob(\CC)^{\N}\).
  Take some
  \(\langle f_0|f_1|\cdots|f_n| \in \Stage_n(\vb X,\vb Y)\)
  with witnesses\newline
  \(f_i\colon M_{i-1}\otimes X_i\to M_i\otimes Y_i\) and \(M_{-1}\coloneqq I\).
  By dinaturality and the coherences for monoidal zero morphisms we have that
  \begin{align*}
    \langle f_0| \cdots|f_{n-1}|f_n|
    \approx{}&
    \langle f_0| \cdots|f_{n-1}|(0_{M_{n},I}\otimes 1_{Y_n})\circ f_n|\\
    ={} & \langle f_0| \cdots|f_{n-1}|0_{M_{n}\otimes X_n,I\otimes Y_n}|\\
    \vdots\;  & \\
    ={} & \langle 0_{X_0, Y_0}| \cdots|0_{X_{n-1}, Y_{n-1}}|0_{X_n, Y_n}|.
  \end{align*}
  Therefore for all \(n\in \N\),  \(\Stage_n(\vb X,\vb Y)\) is a singleton set.
  By \Cref{lem:state:productive}, \(\CC\) is productive; therefore by \Cref{lem:state:productive-stage}, \(\Stream(\CC)\) is indiscrete.
\end{proofE}
This problem is also addressed for partial Markov categories by guarding over the category of total maps.
However, guarding over the total maps disallows the delay from being a natural isomorphism.
Similarly, when one takes this approach of guarding over the total maps, just as is the case for extensional streams in \Cref{lem:state:dinatural-zero}, the presence of a zero morphism at some time in the stream will not make the stream itself zero, and thus will not detect inconsistencies in a satisfactory manner.  

\subsection{Causal and monotone sequences}\label{sec:state:causal-monotone}

Another approach to categorical semantics for stateful processes is to represent a stream \emph{concretely} as a family of morphisms in the base category of \emph{increasingly larger} or \emph{increasingly informative} approximations.

First, consider the following notion due to Raney \cite{Raney1958}, which constructs stream transducers so that the value of the stream at every time step does not depend on the value of the stream in the future. Let \(A\) and \(B\) be sets. A function \(f\colon A^{\N}\to B^{\N}\) is \emph{causal} (called sequential by Raney) if for all \(x,y\in A^{\N}\) and \(n\in\N\), 
\(\bigl(\forall i\le n\colon\ x_i=y_i\bigr)\ \Longrightarrow\ \bigl(\forall i\le n\colon\ f(x)_i=f(y)_i\bigr)\).

Categorifying this concept we obtain:

\begin{definition}\label{def:state:causal-seq}
  Let \(\CC\) be a discard category. The affine symmetric monoidal category of \textbf{causal sequences} \(\Caus_{\N}(\CC)\) has:
  \begin{itemize}
    \item \textbf{Objects}: infinite sequences \(\vb X\in\Ob(\CC)^{\N}\);
    \item \textbf{Morphisms}:  \(\vb f\colon\vb X\to \vb Y\) are given by \(\N\)-indexed sequences of morphisms in \(\CC\) with components
    \(
      \{f_t\colon X_0 \otimes \dots \otimes X_t \to Y_0 \otimes \dots \otimes Y_t\}_{t\in \N}
    \)
    such that for all \(t\in\N\),
    \begin{equation}\label{eq:state:strict-causal-seq}
      (1_{Y_0 \otimes \dots \otimes Y_{t}} \otimes \discard_{Y_{t + 1}}) \circ f_{t + 1}
      =
      f_t \otimes \discard_{X_{t+1}};
    \end{equation}
  \end{itemize}
  where the rest of the affine symmetric monoidal category  structure is defined pointwise.
\end{definition}

This notion of causality is well-behaved in categories where every morphism is total. For example, it recovers the causal functions of Katsumata and Sprunger \cite[Thm.~14]{sprunger} in Cartesian categories, and controlled stochastic processes in \(\FinStoch\) \cite[Def.~7.1]{lavore_monoidal_2022}.

However, this is not the correct notion in categories of partial processes. For example, a process \(f\) in \(\Caus_{\N}(\Par)\) such that \(f_0\) is total, must remain total for all \(n > 0\). Similarly, for any constant sequence \(\{f^{\otimes n}\} \in \Caus_{\N}(\CC)\) we must have \(\discard_Y \circ f = \discard_X\), i.e. \(f\) must be total. Therefore, causal streams do not accommodate inconsistencies, which suggests that the equality from \Cref{def:state:causal-seq} needs to be relaxed to an \emph{inequality}

\begin{equation}\label{eq:state:inequality}
  (1_{Y_0 \otimes \dots \otimes Y_{t}} \otimes \discard_{Y_{t + 1}}) \circ f_{t + 1} \subseteq   f_t \otimes \discard_{X_{t+1}}.
\end{equation}

Finally, with all of these examples in mind, we can relax \Cref{def:state:causal-seq} to the preorder-enriched setting:
\begin{definition}\label{def:state:monotone-seq}
  Let \(\CC\) be a preorder-enriched discard category. The discard category of  \textbf{monotone sequences} \(\Mon_{\N}(\CC)\) has the same objects as \(\Caus_{\N}(\CC)\). A morphism \(f\colon X\to Y\) is a {monotone sequence}, i.e.\ a family of morphisms
  \(
    \{f_t\colon X_0 \otimes \dots \otimes X_t \to Y_0 \otimes \dots \otimes Y_t\}_{t\in \N}
  \) satisfying the inequality of \Cref{eq:state:inequality},
  where the rest of the discard category structure is defined pointwise.
\end{definition}

Bonchi et al.\ show \cite[Prop.~6.7]{bonchi_effectful_2025} that, in a Markov category with conditionals, their notion of ``causal sequences'' (which conflicts with the terminology that we use in this paper) are exactly monotone sequences.
Carette et al. define a similar construction for discard categories with pseudo-purification, where they impose moreover that every sequence satisfies the very strong assumption of regularity \cite[Def.~8]{delayedtracememory}.

Monotone sequences accommodate inconsistency: they can contain a component \(f_t=0\) without forcing the whole family \(\{f_t\}_{t\in\N}\) to be the pointwise-zero morphism in \(\Mon_{\N}(\CC)\).
However, inconsistencies are still not detected at the level of the whole stream: an inconsistency that appears only at a later time step changes only later components in the family.

This motivates imposing an equivalence relation which takes the information theoretic interpretation of preorder enrichment more seriously,
identifying two monotone sequences in case they are compatible and neither is more informative than the other.

\section{Observational semantics}
\label{sec:obs}
\pratendSetLocal{category=obs, end, restate}
In the previous section, we reviewed various semantics for stateful processes, and saw that each makes a different compromise. Stateful morphism sequences are purely local: a constraint enforced at one time step only propagates finitely far into the future or the past. Coalgebraic streams sit at the opposite, where a fixed-point equation constrains the entire infinite behaviour at once, but collapses in the presence of partiality and nondeterminism. Causal and monotone sequences are concrete and well-behaved, but they fail to identify dinaturally equivalent streams.

In this section, we build a categorical semantics in which the \emph{global behaviour} of a process is determined by the totality of its \emph{local observations}, glued together by a consistency condition: observations made in a larger context may only \emph{refine} those made in a smaller one. Two processes are then identified when they are \emph{observationally indistinguishable}, that is, when every observation of one can be verified by observing the other, possibly in a larger context. We show that this notion satisfies the desiderata set out in \Cref{sec:intro}. A further advantage over existing stream semantics is that the contexts need not be discrete intervals of time. We can therefore interpret processes evolving through more general spaces.

\subsection{The bicategory of behaviours}\label{sec:obs:behaviours}

Our semantic construction is parametrised by an arbitrary \textbf{index set} \(\I\) and a collection \(\JJ \subseteq \Pfin(\I)\) of \emph{finite} subsets of \(\I\) called \textbf{contexts}, which define the local finite regions from which a global system can be observed.
Given two contexts \(j \subseteq j'\), we denote the set-theoretic difference by  \(j' \setminus j\). We assume a total order on \(\I\) throughout so that the indexed monoidal product \(\otimes_{i \in j}\) is well-defined for any subset \(j \in \JJ\). We often omit the symmetry morphisms for simplicity and readability, or explicitly use the \textbf{shuffle morphisms}, defined for arbitrary \(\vb{X}, \vb{Y} \in (\Ob(\CC))^\I\) and contexts \(j \in \JJ\) as the following isomorphisms:
\[(\shuffle_{\vb{X},\vb{Y}})_{j}\colon\big(\smbigotimes_{k \in j} X_k\big)\otimes\big(\smbigotimes_{k \in j} Y_k\big)\cong \smbigotimes_{k \in j}(X_k\otimes Y_k).\]
We represent consistent families of observations using the following structure:
\begin{definition}\label{def:obs:monotone-family}
  Take a discard bicategory  \(\CC\), index set \(\I\), a set of contexts \(\JJ\subseteq \Pfin(\I)\), and a   \(\I\)-indexed family of objects \(\vb{X}\) and \(\vb{Y}\) in \(\CC\).

  A \textbf{monotone family} \(f\colon \vb{X} \to \vb{Y}\) in \(\CC\) is a \(\JJ\)-indexed family of morphisms
  \[\{f_j\colon \smbigotimes_{i \in j} X_i \to \smbigotimes_{i \in j} Y_i \}_{j \in \JJ},\]
  where \(Y_{j} \coloneqq \bigotimes_{k \in j} Y_{k}\) and \(X_{j} \coloneqq \bigotimes_{k \in j} X_{k}\),
  such that for all \(j \subseteq j' \in \JJ\):
  \begin{equation*}
    (1_{Y_j} \otimes \discard_{Y_{j' \setminus j}}) \circ f_{j'} \subseteq f_j \otimes \discard_{X_{j' \setminus j}}\, .
  \end{equation*}

  A monotone family is \textbf{causal} when all  inclusions are equalities.
\end{definition}

Note that monotone families generalise the notion of monotone sequences. A monotone sequence is a  monotone family with
\(\I \coloneqq \N\) and \(\JJ \coloneqq \{[0,n] \ | \ \forall n\in \N\}\).

We interpret a context as a finite region in which properties of some global system can be observed (for monotone sequences, this finite region is interpreted as an interval of time). Under this interpretation, the condition of monotonicity imposes that the observations made in any context \emph{approximate} the observations which can be made in a larger context, so that all observations are consistent with each other.

We introduce a preorder between monotone families that captures when one family of observations approximates another:
\begin{definition}\label{def:obs:approx}
  Given two monotone families \(\vb f,\vb g\colon\vb X\to\vb Y\), say that \(\vb g\) \textbf{approximates} \(\vb f\), written
  \(\vb f\subseteq \vb g\), if for every \(j\in\JJ\) there exists some
  \(j'\in\JJ\) with \(j\subseteq j'\) such that
  \[
    (1_{Y_j}\otimes \discard_{Y_{j'\setminus j}})\circ f_{j'}
    \;\subseteq\;
    g_j\otimes \discard_{X_{j'\setminus j}}\, .
  \]
\end{definition}

\begin{lemmaE}\label{lem:obs:approx-preorder}
  Approximation is a preorder.
\end{lemmaE}
\begin{proofE}
  We need to prove that:
  \begin{enumerate}
    \claimitem{obs:approx:refl}{approximation is reflexive;}
    \claimitem{obs:approx:trans}{approximation is transitive.}
  \end{enumerate}

  \claimparagraph{obs:approx:refl}{Reflexivity.}
  Immediate by taking \(j'=j\).

  \claimparagraph{obs:approx:trans}{Transitivity.}
  Suppose \(\vb f\subseteq \vb g\) and \(\vb g\subseteq \vb h\).
  Fix \(j\in\JJ\).
  By \(\vb g\subseteq \vb h\), there exists \(j_0\in\JJ\) with \(j\subseteq j_0\) such that
  \(
    (1_{Y_j}\otimes \discard_{Y_{j_0\setminus j}})\circ g_{j_0}
    \subseteq
    h_j\otimes \discard_{X_{j_0\setminus j}}\, .
  \)
  Moreover, since \(\vb f\subseteq \vb g\), by restricting to the context \(j_0\), there exists some
  \(j_1\in\JJ\) with \(j_0\subseteq j_1\) such that
  \(
    (1_{Y_{j_0}}\otimes \discard_{Y_{j_1\setminus j_0}})\circ f_{j_1}
    \subseteq
    g_{j_0}\otimes \discard_{X_{j_1\setminus j_0}}\, .
  \)
  Therefore
  \(
    (1_{Y_j}\otimes \discard_{Y_{j_1\setminus j}})\circ f_{j_1}
    \subseteq
    h_j\otimes \discard_{X_{j_1\setminus j}}\,,
  \)
  so that \(\vb f\subseteq \vb h\).
\end{proofE}

As the name suggests, we interpret \(\vb f\subseteq \vb g\) to denote that the observations made in \(\vb f\) refine those made in \(\vb g\), so that \(\vb f\) is a more accurate description of the underlying process which is being observed.

We define a discard bicategory of monotone families ordered by approximation:

\begin{definition}\label{def:obs:mon-family}
  Given a set of contexts \(\JJ \subseteq \Pfin(\I)\) and a discard bicategory \(\CC\),
  the discard bicategory of \textbf{monotone families}, \(\Mon_{\I,\JJ}(\CC)\),  has:
  \begin{itemize}
    \item \textbf{objects}: families \(\vb X=\{X_i\}_{i\in\I}\);
    \item \textbf{morphisms}: monotone families \(\vb f\colon\vb X\to\vb Y\);
    \item \textbf{poset-enrichment}:  given by approximation;
    \item \textbf{discard category structure:} contextwise.
  \end{itemize}
\end{definition}
\begin{propositionE}\label{prop:obs:mon}
  \(\Mon_{\I,\JJ}(\CC)\) is a discard bicategory.
\end{propositionE}
\begin{proofE}
  We need to show that:
  \begin{enumerate}
    \claimitem{obs:mon:structural}{structural morphisms are monotone families;}
    \claimitem{obs:mon:comp}{composition of monotone families is monotone;}
    \claimitem{obs:mon:tensor}{monoidal product of monotone families is monotone;}
    \claimitem{obs:mon:axioms}{the discard bicategory axioms are satisfied.}
  \end{enumerate}

  \claimparagraph{obs:mon:structural}{Structural morphisms are monotone.}
  Identities, symmetry, and discard are defined contextwise by tensoring the corresponding
  structural morphisms in \(\CC\):
  \begin{gather*}
    (1_{\vb X})_j = 1_{X_j}\, ,\quad
    (\discard_{\vb X})_j = \bigotimes_{i\in j}\discard_{X_i}\, , \quad\text{and}\quad
    (\swap_{\vb X,\vb Y})_j = \bigotimes_{i\in j}\swap_{X_i,Y_i}\, .
  \end{gather*}

  Take contexts \(j\subseteq j'\).
  All structural morphisms  \(f\) in \(\CC\) are total meaning that \(\discard\circ f=\discard\); therefore, since all structural morphisms in \(\Mon_{\I,\JJ}(\CC)\) are built from those in \(\CC\) we have that
  \[
  (1_{Y_j}\otimes \discard_{Y_{j'\setminus j}})\circ f_{j'}
  =
  f_j\otimes \discard_{X_{j'\setminus j}}\, .
  \]
  Therefore all structural morphisms are causal (and thus monotone) families.

  \claimparagraph{obs:mon:comp}{Composition preserves monotonicity.}
  Take two monotone \(\vb f\colon\vb X\to\vb Y\) and \(\vb g\colon\vb Y\to\vb Z\) families, and fix contexts \(j\subseteq j'\) in \(\JJ\).
  Using monotonicity of \(\vb f\) and \(\vb g\), it follows that
  \begin{align*}
  (1_{Z_j}\otimes \discard_{Z_{j'\setminus j}})\circ (g_{j'}\circ f_{j'})
  =
  \bigl((1_{Z_j}\otimes \discard_{Z_{j'\setminus j}})\circ g_{j'}\bigr)\circ f_{j'}
  \subseteq
  \bigl(g_j\otimes \discard_{Y_{j'\setminus j}}\bigr)\circ f_{j'}
  \subseteq
  (g_j\circ f_j)\otimes \discard_{X_{j'\setminus j}}\, .
  \end{align*}
  Therefore \(\vb g\circ\vb f\) is monotone.

  \claimparagraph{obs:mon:tensor}{Monoidal product preserves monotonicity.}
  Take two  monotone families \(\vb f\colon\vb X\to\vb Y, \vb h\colon\vb X'\to\vb Y'\), and fix  contexts \(j\subseteq j'\). It follows that
  \begin{align*}
  &(1_{(Y\otimes Y')_j}\otimes \discard_{(Y\otimes Y')_{j'\setminus j}})
  \circ (\shuffle_{\vb Y,\vb Y'})_j\circ (f_{j'}\otimes h_{j'})\circ (\shuffle_{\vb X,\vb X'})_j^{-1} \\
  &\quad =
  (\shuffle_{\vb Y,\vb Y'})_j
  \circ (1_{Y_j}\otimes \discard_{Y_{j'\setminus j}}\otimes
  1_{Y'_j}\otimes \discard_{Y'_{j'\setminus j}})
  \circ (f_{j'}\otimes h_{j'})\circ (\shuffle_{\vb X,\vb X'})_j^{-1}\\
  &\quad \subseteq
  (\shuffle_{\vb Y,\vb Y'})_j\circ (f_j\otimes h_j)\circ (\shuffle_{\vb X,\vb X'})_j^{-1}.
  \end{align*}
  Therefore \(\vb f\otimes \vb h\) is monotone.

  \claimparagraph{obs:mon:axioms}{Discard bicategory axioms hold.}
  \ref{item:obs:mon:structural},\ref{item:obs:mon:comp}, and \ref{item:obs:mon:tensor} imply that approximation is a monoidal preorder enrichment; where all other discard-bicategory axioms, including lax naturality of the discard, hold contextwise.
\end{proofE}

In particular, we have that \(\Mon_{\{[0,n] \ | \ \forall n\in \N\}, \N} (\CC)\cong \Mon_\N(\CC)\); however, by changing the set of contexts we can model processes evolving through more general spaces, rather than merely time.

Consider the equivalence relation on monotone families generated by mutual approximation:

\begin{definition}\label{def:obs:obs-equiv}
  We say that \(\vb{f}\)  and \(\vb{g}\)  are \textbf{observationally equivalent}, in case \(\vb{f}\) and  \(\vb{g}\) approximate each other,
  i.e. \(\vb{f} \subseteq \vb{g}\) and \(\vb{g} \subseteq \vb{f}\).
\end{definition}
We interpret two processes to be observationally equivalent in case their observations never contradict each other; so that every observation of one process can be shown to approximate an observation in the other and vice versa.

For monotone families indexed by arbitrary sets of contexts, observational equivalence is not necessarily a congruence with respect to composition.  To ensure that this holds, we ask that the set of contexts possesses the following property:
\begin{definition}\label{def:obs:directed}
  \(\JJ \subseteq \Pfin(\I)\)  is \textbf{upward-directed}, in case for any pair of contexts \(j, j' \in \JJ\) there exists some \(j'' \supseteq j \cup j'\) in \(\JJ\).
  Call a monotone family a \textbf{monotone net}\footnote{We use the term ``monotone net'' as a categorification of the notion of a ``net'' coined by Kelley, which denotes a function whose domain is a directed set \cite{Kelley1950}.} if it is indexed by an upward-directed set of contexts.
\end{definition}

Quotienting the category of monotone \emph{nets} by observational equivalence yields the following category, which is the central construction of this article:

\begin{definition}\label{def:obs:obs}
  Given an upward-directed set of contexts \(\JJ \subseteq \Pfin(\I)\) and a discard bicategory \(\CC\),
  the (poset-enriched) discard bicategory of \textbf{behaviours}, \(\Obs_{\I,\JJ}(\CC)\), is given by quotienting \(\Mon_{\I,\JJ}(\CC)\) by observational equivalence.
\end{definition}

\begin{theoremE}\label{thm:obs:obs}
  \(\Obs_{\I,\JJ}(\CC)\) is a poset-enriched discard bicategory.
\end{theoremE}
\begin{proofE}
  We need to show that observational equivalence is a congruence for:
  \begin{enumerate}
    \claimitem{obs:obs:approx}{approximation;}
    \claimitem{obs:obs:comp}{composition;}
    \claimitem{obs:obs:tensor}{monoidal product.}
  \end{enumerate}
  From which it follows that the discard bicategory structure of \(\Mon_{\I,\JJ}(\CC)\) from \Cref{prop:obs:mon} descends to the quotient, where the preorder enrichment becomes a poset since mutually-approximating families are identified.

  \claimparagraph{obs:obs:approx}{Congruence for approximation.}
  This follow by transitivity.

  \claimparagraph{obs:obs:comp}{Congruence for composition.}
  Take composable \(\vb f\subseteq \vb f'\) and \(\vb g\subseteq \vb g'\) and fix some \(j\in\JJ\).
  Choose \(j_f,j_g\in\JJ\) with \(j\subseteq j_f\) and \(j\subseteq j_g\)
  witnessing that \(\vb f'\) approximates \(\vb f\) and \(\vb g'\) approximates \(\vb g\).
  Using upward directedness of \(\JJ\), let \(j'\in\JJ\) be such that \(j_f\cup j_g\subseteq j'\).
  Then:
  \begin{align*}
    (1_{Z_j}\otimes \discard_{Z_{j'\setminus j}})\circ g_{j'}\circ f_{j'}
    \subseteq
    g'_j\circ (1_{Y_j}\otimes \discard_{Y_{j'\setminus j}})\circ f_{j'}
    \subseteq
    (g'_j\circ f'_j)\otimes \discard_{X_{j'\setminus j}}\,.
  \end{align*}
  Thus \(\vb g\circ\vb f\subseteq \vb g'\circ\vb f'\).

  \claimparagraph{obs:obs:tensor}{Congruence for monoidal product.}
  Take \(\vb f\subseteq \vb f'\) and \(\vb h\subseteq \vb h'\).
  Fix \(j\in\JJ\).
  Choose \(j_f,j_h\in\JJ\) with \(j\subseteq j_f\) and \(j\subseteq j_h\)
  witnessing that \(\vb f'\) approximates \(\vb f\) and \(\vb h'\) approximates \(\vb h\).
  Using upward directedness of \(\JJ\), let \(j'\in\JJ\) be such that \(j_f\cup j_h\subseteq j'\).
  Then:
  \begin{align*}
    (1_{(Y\otimes Y')_j}\otimes \discard_{(Y\otimes Y')_{j'\setminus j}})
    \circ (\shuffle_{\vb Y,\vb Y'})_j\circ (f_{j'}\otimes h_{j'})\circ (\shuffle_{\vb X,\vb X'})_j^{-1}
    \subseteq
    (\shuffle_{\vb Y,\vb Y'})_j\circ (f'_j\otimes h'_j)\circ (\shuffle_{\vb X,\vb X'})_j^{-1}.
  \end{align*}
  Thus \(\vb f\otimes\vb h\subseteq \vb f'\otimes\vb h'\).
\end{proofE}

The construction of behaviours is functorial:

\begin{propositionE}\label{prop:obs:functorial}
  Let \(F\colon\CC\to \DD\) be a discard bicategory functor.
  There is an induced discard bicategory functor
  \(
    \Obs_{\I,\JJ}(F)\colon\Obs_{\I,\JJ}(\CC)\to \Obs_{\I,\JJ}(\DD)
  \)
  defined contextwise.
  
  Moreover,  if \(F\) reflects the preorder then \(\Obs_{\I,\JJ}(F)\) is faithful. If \(F\) reflects the preorder and is full, then \(\Obs_{\I,\JJ}(F)\) is full and faithful.
\end{propositionE}

\begin{proofE}
  We need to show that \(\Obs_{\I,\JJ}(F)\):
  \begin{enumerate}
    \claimitem{obs:obs-fun:mono}{preserves the preorder on hom-sets;}
    \claimitem{obs:obs-fun:wd}{is well-defined on observational equivalence classes;}
    \claimitem{obs:obs-fun:funct}{is functorial and preserves the discard category structure;}
    \claimitem{obs:obs-fun:faith}{is faithful when \(F\) reflects the preorder;}
    \claimitem{obs:obs-fun:full}{is full when \(F\) reflects the preorder and is full.}
  \end{enumerate}

  \claimparagraph{obs:obs-fun:mono}{Preservation of monotonicity.}
  Fix contexts \(j\subseteq j'\) in \(\JJ\).
  By monotonicity of \(\vb f\), we have
  \[
    (1_{Y_j}\otimes \discard_{Y_{j'\setminus j}})\circ f_{j'}
    \subseteq
    f_j\otimes \discard_{X_{j'\setminus j}}\, .
  \]
  Using the fact that \(F\) preserves the preorder, it follows that
  \[
    F\bigl((1_{Y_j}\otimes \discard_{Y_{j'\setminus j}})\circ f_{j'}\bigr)
    \subseteq
    F\bigl(f_j\otimes \discard_{X_{j'\setminus j}}\bigr).
  \]
  Since \(F\) is a discard functor and symmetric monoidal, it follows that
  \[
    (1_{F(Y)_j}\otimes \discard_{F(Y)_{j'\setminus j}})\circ F(f_{j'})
    \subseteq
    F(f_j)\otimes \discard_{F(X)_{j'\setminus j}}\, .
  \]
  Therefore \(\{F(f_j)\}_{j\in\JJ}\) is a monotone net in \(\mathcal D\).

  \claimparagraph{obs:obs-fun:wd}{Well-definedness on equivalence class.}
  Take two monotone families \(\vb f\subseteq \vb g\) in \(\Mon_{\I,\JJ}(\CC)\), and fix some context \(j\in\JJ\).
  By definition of approximation, there exists \(j'\in\JJ\) with \(j\subseteq j'\) such that
  \(
    (1_{Y_j}\otimes \discard_{Y_{j'\setminus j}})\circ f_{j'}
    \subseteq
    g_j\otimes \discard_{X_{j'\setminus j}}\, .
  \)
  Therefore it follows that
  \[
    (1_{F(Y)_j}\otimes \discard_{F(Y)_{j'\setminus j}})\circ F(f_{j'})
    \subseteq
    F(g_j)\otimes \discard_{F(X)_{j'\setminus j}}\, .
  \]
  Thus \(\Obs_{\I,\JJ}(F)([\vb f])\subseteq \Obs_{\I,\JJ}(F)([\vb g])\); therefore, \(\Obs_{\I,\JJ}(F)\) is well-defined on observational equivalence classes.

  \claimparagraph{obs:obs-fun:funct}{Functoriality.}
  Functoriality holds contextwise.

  \claimparagraph{obs:obs-fun:faith}{Faithfulness.}
  Assume \(F\) reflects the preorder, and take two monotone families \(\vb f\) and \(\vb g\) such that
  \(\Obs_{\I,\JJ}(F)([\vb f])=\Obs_{\I,\JJ}(F)([\vb g]).\) 
  Then \(\Obs_{\I,\JJ}(F)([\vb f])\) and \(\Obs_{\I,\JJ}(F)([\vb g])\)
  approximate each other.

  Next, take some context \(j\in\JJ\).
  Then there exists some \(j'\supseteq j\) such that
  \[
  (1_{F(Y)_j}\otimes \discard_{F(Y)_{j'\setminus j}})\circ F(f_{j'})
  \subseteq
  F(g_j)\otimes \discard_{F(X)_{j'\setminus j}}\, .
  \]
  Rewrite both sides as \(F(u)\subseteq F(v)\).
  By reflection of the preorder, \(u\subseteq v\), hence \(\vb f\subseteq \vb g\).
  The reverse inclusion holds by symmetry, so that \([\vb f]=[\vb g]\).
  Thus \(\Obs_{\I,\JJ}(F)\) is faithful.

  \claimparagraph{obs:obs-fun:full}{Fullness.}
  Assume \(F\) reflects the preorder and is full.
  
  Let \([\vb h]\colon\Obs_{\I,\JJ}(F)(\vb X)\to\Obs_{\I,\JJ}(F)(\vb Y)\)
  be a morphism in \(\Obs_{\I,\JJ}(\mathcal D)\).
  For each \(j\in\JJ\), using fullness, choose some \(f_j\) in \(\CC\) such that
  \(F(f_j)=h_j\), taking \(\vb f\coloneqq \{f_j\}_{j\in\JJ}\).

  Monotonicity of \(\vb h\) and the reflection of the preorder
  imply monotonicity of \(\vb f\).
  Since \(F(f_j)=h_j\) for all \(j\in\JJ\), it follows that \(\Obs_{\I,\JJ}(F)([\vb f])=[\vb h]\).
\end{proofE}

\subsection{Propagation of inconsistency}\label{sec:obs:inconsistency}

Recall from \Cref{sec:state} that the existence of symmetric monoidal zero morphisms makes the category of coalgebraic streams satisfying \Cref{eq:state:stream-fixpoint} codiscrete. On the other hand, in extensional streams a zero morphism occurring at some time step cannot propagate infinitely far into the future, so the global process is not a monoidal zero morphism; and for monotone sequences, zero morphisms propagate only into the future, never the past. By contrast, behaviours \emph{detect inconsistencies} globally: if a process is inconsistent in any single finite context, then its whole behaviour is the zero morphism.
\begin{propositionE}\label{prop:obs:zero}
  Take a discard bicategory \(\CC\) with monoidal zero morphisms, and upward-directed \(\JJ\subseteq \Pfin(\I)\).
  Then \(\Obs_{I, \JJ}(\CC)\) has monoidal zero morphisms.
  Moreover, if a monotone net \(\vb f\colon\vb X\to\vb Y\) satisfies
  \(f_{j_0}=0\) for some \(j_0\in\JJ\), then
  \([\vb f]=[\{0\}_{j\in\JJ}]\) is a zero morphism.
\end{propositionE}
\begin{proofE}
  Fix  a contexts \(j_0,j, j'\in\JJ\) with \(j\cup j_0\subseteq j'\).
  By monotonicity of \(\vb f\), \(f_{j'}\) factors through \(f_{j_0}=0\),
  hence
  \(
    (1_{Y_j}\otimes \discard_{Y_{j'\setminus j}})\circ f_{j'}=0
  \).
  This witnesses \(\vb f\subseteq \{0\}\), where the converse is immediate.
\end{proofE}

\subsection{Dinaturality of state}\label{sec:obs:dinaturality}

We now turn to the question: \emph{under which conditions can stateful processes be interpreted as behaviours?} We answer it by exhibiting functors from \emph{stateful morphism sequences} into categories of behaviours. Crucially, these functors equate machines when they are related by dinatural reparametrisation of the internal state. Note that the same mapping onto monotone sequences would be ill-defined as dinaturally equivalent machines can give rise to distinct monotone sequences.
We first study the case of stateful processes with a specified initial state. Their behaviours are then indexed by the natural numbers.
\begin{definition}\label{def:obsfbk:initialised}
  Let \(\N_{\leq}^2 \coloneqq \{ [0,t]\ |\ t \in \N \} \) denote the set of  finite intervals in \(\N\) containing \(0\).
  Given a discard bicategory \(\CC\), denote the category of \textbf{initialised behaviours} over \(\CC\) by \(\Obs_\N(\CC) \coloneqq \Obs_{\N,  \N_{\leq}^2}(\CC)\).
\end{definition}
For convenience, we index morphisms in \(\Obs_\N(\CC)\) by \(t\in\N\), writing \(f_t\) for the component at context \([0,t]\); note that \(X_t\) denotes the single object at time~\(t\), whereas \(f_t\) has domain \(X_0\otimes\dots\otimes X_t\).

\begin{definition}
  Define the delay endofunctor  \(\shftN\colon\Obs_\N(\CC) \to \Obs_\N(\CC)\)  on objects by 
  \[
      \shftN(\vb{X}) \coloneqq \{X_{t-1}\}_{t \in \N},
    \quad \text{with}\quad 
      X_{-1} \coloneqq I,
  \]
  and on morphisms by
  \[\Bigl([\vb{f}]\colon\vb{X}\to \vb{Y}\Bigr)\quad  \longmapsto\quad \Bigl( [\{ 1_I \otimes f_{t-1} \}_{t \in \N}]\colon\shftN(\vb{X}) \to \shftN(\vb{Y})\Bigr).\]
\end{definition}

The following theorem justifies our claim that behaviours provide a categorical semantics for Mealy machines:
\begin{theoremE}\label{thm:obsfbk:trace-semantics}
  If \(\CC\) is a discard bicategory
  then there is a symmetric monoidal functor \(\St([\N, \CC], \shftN) \to \Obs_\N(\CC)\).
\end{theoremE}
\begin{proofE}
  The functor \(U\colon \St([\N, \CC], \shftN) \to \Obs_\N(\CC)\) is defined on a representative stateful morphism sequence by unrolling and discarding the output memory. By lax naturality, it is easy to show that this gives a monotone sequence and thus a valid initialised behaviour. To show that two representatives of the same equivalence class of \( \St([\N, \CC], \shftN)\) yield the same initialised behaviour, the only non-trivial thing to prove is that the sliding equation holds.
  Let the grey circle denote the discard. We use lax naturality to prove the first inequality
  \[\begin{tikzpicture}
	\begin{pgfonlayer}{nodelayer}
		\node [style=none] (21) at (-0.025, -0.125) {\(\rotatebox{-90}{\(\subseteq\)}\)};
		\node [style=none] (72) at (0, 6.25) {\(U\bigl(\fbk^{\vb{S}} \bigl((\vb{r}\otimes 1_{\vb Y})\circ \vb{f} \bigr) \bigr)_t\)};
		\node [style=none] (73) at (0, 5.375) {\(\rotatebox{-90}{\(=\)}\)};
		\node [style=none] (74) at (0, -6.5) {\(    U\bigl(\fbk^{\vb{S}} \bigl(\vb{f} \circ (\shftN(\vb{r})\otimes 1_{\vb X}) \bigr) \bigr)_t\,,\)};
		\node [style=none] (75) at (0, -5.625) {\(\rotatebox{-90}{\(=\)}\)};
		\node [style=box] (147) at (-7.25, 3.9) {\(f_0\)};
		\node [style=none] (148) at (7.5, 4.775) {};
		\node [style=none] (149) at (-8.5, 4.775) {};
		\node [style=none] (150) at (7.5, 0.475) {};
		\node [style=none] (151) at (-8.5, 0.475) {};
		\node [style=none] (168) at (-7.5, 4.275) {};
		\node [style=none] (169) at (-5.5, 4.275) {};
		\node [style=none] (170) at (-7.5, 3.525) {};
		\node [style=none] (171) at (-5.5, 3.525) {};
		\node [style=none] (172) at (-4.75, 3.525) {};
		\node [style=none] (173) at (-4.75, 4.275) {};
		\node [style=box] (174) at (-4.25, 3.15) {\(f_1\)};
		\node [style=none] (175) at (-2.5, 3.525) {};
		\node [style=none] (176) at (-8.5, 2.775) {};
		\node [style=none] (177) at (-2.5, 2.775) {};
		\node [style=none] (178) at (-1.75, 2.775) {};
		\node [style=none] (179) at (-1.75, 3.525) {};
		\node [style=none] (180) at (-8.5, 3.525) {};
		\node [style=box] (181) at (3.5, 1.65) {\(f_t\)};
		\node [style=none] (182) at (4.25, 2.025) {};
		\node [style=none] (183) at (-8.5, 1.275) {};
		\node [style=none] (184) at (4.25, 1.275) {};
		\node [style=none] (185) at (5, 1.275) {};
		\node [style=none] (186) at (5, 2.025) {};
		\node [style=none] (187) at (7.5, 4.275) {};
		\node [style=none] (188) at (7.5, 3.525) {};
		\node [style=none] (190) at (1.875, 1.65) {\scalebox{.8}{\(t\)}};
		\node [style=none] (191) at (2.525, 2.025) {};
		\node [style=none] (192) at (1.875, 1.95) {\(\cdots\)};
		\node [style=box] (194) at (-1.25, 2.4) {\(f_2\)};
		\node [style=none] (195) at (0.5, 2.775) {};
		\node [style=none] (196) at (-8.5, 2.025) {};
		\node [style=none] (197) at (0.5, 2.025) {};
		\node [style=none] (198) at (1.25, 2.025) {};
		\node [style=none] (199) at (1.25, 2.775) {};
		\node [style=none] (200) at (7.5, 2.775) {};
		\node [style=box] (201) at (-6, 4.25) {\(r_{0}\)};
		\node [style=box] (202) at (-3, 3.5) {\(r_{1}\)};
		\node [style=box] (203) at (0, 2.75) {\(r_{2}\)};
		\node [style=box] (204) at (5.75, 1.25) {\(r_{t}\)};
		\node [style=bn] (205) at (6.75, 1.275) {};
		\node [style=none] (206) at (7.5, 2.025) {};
		\node [style=box] (207) at (-7.25, -1.6) {\(f_0\)};
		\node [style=none] (208) at (7.5, -0.725) {};
		\node [style=none] (209) at (-8.5, -0.725) {};
		\node [style=none] (210) at (7.5, -5.025) {};
		\node [style=none] (211) at (-8.5, -5.025) {};
		\node [style=none] (212) at (-7.5, -1.225) {};
		\node [style=none] (213) at (-5.5, -1.225) {};
		\node [style=none] (214) at (-7.5, -1.975) {};
		\node [style=none] (215) at (-5.5, -1.975) {};
		\node [style=none] (216) at (-4.75, -1.975) {};
		\node [style=none] (217) at (-4.75, -1.225) {};
		\node [style=box] (218) at (-4.25, -2.35) {\(f_1\)};
		\node [style=none] (219) at (-2.5, -1.975) {};
		\node [style=none] (220) at (-8.5, -2.725) {};
		\node [style=none] (221) at (-2.5, -2.725) {};
		\node [style=none] (222) at (-1.75, -2.725) {};
		\node [style=none] (223) at (-1.75, -1.975) {};
		\node [style=none] (224) at (-8.5, -1.975) {};
		\node [style=box] (225) at (3.5, -3.85) {\(f_t\)};
		\node [style=none] (226) at (4.25, -3.475) {};
		\node [style=none] (227) at (-8.5, -4.225) {};
		\node [style=none] (228) at (4.25, -4.225) {};
		\node [style=none] (229) at (5, -4.225) {};
		\node [style=none] (230) at (5, -3.475) {};
		\node [style=none] (231) at (7.5, -1.225) {};
		\node [style=none] (232) at (7.5, -1.975) {};
		\node [style=none] (233) at (1.875, -3.85) {\scalebox{.8}{\(t\)}};
		\node [style=none] (234) at (2.525, -3.475) {};
		\node [style=none] (235) at (1.875, -3.55) {\(\cdots\)};
		\node [style=box] (236) at (-1.25, -3.1) {\(f_2\)};
		\node [style=none] (237) at (0.5, -2.725) {};
		\node [style=none] (238) at (-8.5, -3.475) {};
		\node [style=none] (239) at (0.5, -3.475) {};
		\node [style=none] (240) at (1.25, -3.475) {};
		\node [style=none] (241) at (1.25, -2.725) {};
		\node [style=none] (242) at (7.5, -2.725) {};
		\node [style=box] (243) at (-6, -1.25) {\(r_{0}\)};
		\node [style=box] (244) at (-3, -2) {\(r_{1}\)};
		\node [style=box] (245) at (0, -2.75) {\(r_{2}\)};
		\node [style=bn] (247) at (6.75, -4.225) {};
		\node [style=none] (248) at (7.5, -3.475) {};
	\end{pgfonlayer}
	\begin{pgfonlayer}{edgelayer}
		\draw [style=background] (149.center)
			 to (148.center)
			 to (150.center)
			 to (151.center)
			 to cycle;
		\draw (168.center)
			 to (169.center)
			 to [in=180, out=0] (172.center)
			 to (175.center)
			 to [in=180, out=0] (178.center)
			 to (195.center)
			 to [in=180, out=0] (198.center);
		\draw (191.center)
			 to (182.center)
			 to [in=180, out=0] (185.center);
		\draw (187.center)
			 to (173.center)
			 to [in=0, out=180] (171.center)
			 to (170.center)
			 to (180.center);
		\draw (176.center)
			 to (177.center)
			 to [in=180, out=0] (179.center)
			 to (188.center);
		\draw (183.center)
			 to (184.center)
			 to [in=180, out=0] (186.center);
		\draw (196.center)
			 to (197.center)
			 to [in=180, out=0] (199.center)
			 to (200.center);
		\draw (185.center) to (205);
		\draw (206.center) to (186.center);
		\draw [style=background] (209.center)
			 to (208.center)
			 to (210.center)
			 to (211.center)
			 to cycle;
		\draw (212.center)
			 to (213.center)
			 to [in=180, out=0] (216.center)
			 to (219.center)
			 to [in=180, out=0] (222.center)
			 to (237.center)
			 to [in=180, out=0] (240.center);
		\draw (234.center)
			 to (226.center)
			 to [in=180, out=0] (229.center);
		\draw (231.center)
			 to (217.center)
			 to [in=0, out=180] (215.center)
			 to (214.center)
			 to (224.center);
		\draw (220.center)
			 to (221.center)
			 to [in=180, out=0] (223.center)
			 to (232.center);
		\draw (227.center)
			 to (228.center)
			 to [in=180, out=0] (230.center);
		\draw (238.center)
			 to (239.center)
			 to [in=180, out=0] (241.center)
			 to (242.center);
		\draw (229.center) to (247);
		\draw (248.center) to (230.center);
	\end{pgfonlayer}
\end{tikzpicture}
\]
  and similarly for the converse
  \[\begin{tikzpicture}
	\begin{pgfonlayer}{nodelayer}
		\node [style=none] (21) at (-0.25, -0.125) {\(\rotatebox{-90}{\(\subseteq\)}\)};
		\node [style=none] (72) at (0, -6.5) {\(U\bigl(\fbk^{\vb{S}} \bigl((\vb{r}\otimes 1_{\vb Y})\circ \vb{f} \bigr) \bigr)_t\otimes \discard_{X_{t+1}}\,.\)};
		\node [style=none] (73) at (0, -5.625) {\(\rotatebox{-90}{\(=\)}\)};
		\node [style=none] (74) at (0, 6.25) {\( (1_{Y_{[0,t]}} \otimes \discard_{Y_{t+1}}) \circ \bigl( U\bigl(\fbk^{\vb{S}} \bigl(\vb{f} \circ (\shftN(\vb{r})\otimes 1_{\vb X}) \bigr) \bigr) \bigr)_{t+1} \)};
		\node [style=none] (75) at (0, 5.375) {\(\rotatebox{-90}{\(=\)}\)};
		\node [style=box] (122) at (-8, 3.9) {\(f_0\)};
		\node [style=none] (123) at (9.25, 4.775) {};
		\node [style=none] (124) at (-9.25, 4.775) {};
		\node [style=none] (125) at (9.25, 0.475) {};
		\node [style=none] (126) at (-9.25, 0.475) {};
		\node [style=none] (127) at (-8.25, 4.275) {};
		\node [style=none] (128) at (-6.25, 4.275) {};
		\node [style=none] (129) at (-8.25, 3.525) {};
		\node [style=none] (130) at (-6.25, 3.525) {};
		\node [style=none] (131) at (-5.5, 3.525) {};
		\node [style=none] (132) at (-5.5, 4.275) {};
		\node [style=box] (133) at (-3.25, 3.15) {\(f_{t-1}\)};
		\node [style=none] (134) at (-0.5, 3.525) {};
		\node [style=none] (135) at (-9.25, 2.775) {};
		\node [style=none] (136) at (-0.5, 2.775) {};
		\node [style=none] (137) at (0.25, 2.775) {};
		\node [style=none] (138) at (0.25, 3.525) {};
		\node [style=none] (139) at (-9.25, 3.525) {};
		\node [style=box] (140) at (4.25, 1.65) {\(f_{t+1}\)};
		\node [style=none] (141) at (7, 2.025) {};
		\node [style=none] (142) at (-9.25, 1.275) {};
		\node [style=none] (143) at (7, 1.275) {};
		\node [style=none] (144) at (7.75, 1.275) {};
		\node [style=none] (145) at (7.75, 2.025) {};
		\node [style=none] (146) at (9.25, 4.275) {};
		\node [style=none] (147) at (9.25, 3.525) {};
		\node [style=box] (151) at (1, 2.4) {\(f_t\)};
		\node [style=none] (152) at (2.75, 2.775) {};
		\node [style=none] (153) at (-9.25, 2.025) {};
		\node [style=none] (154) at (2.75, 2.025) {};
		\node [style=none] (155) at (3.5, 2.025) {};
		\node [style=none] (156) at (3.5, 2.775) {};
		\node [style=none] (157) at (9.25, 2.775) {};
		\node [style=box] (158) at (-6.75, 4.25) {\(r_{0}\)};
		\node [style=box] (159) at (-1.25, 3.5) {\(r_{t-1}\)};
		\node [style=box] (160) at (2.25, 2.75) {\(r_{t}\)};
		\node [style=box] (161) at (6.25, 2) {\(r_{t+1}\)};
		\node [style=bn] (162) at (8.5, 1.275) {};
		\node [style=bn] (163) at (8.5, 2.025) {};
		\node [style=none] (164) at (2.5, 3.525) {};
		\node [style=none] (165) at (-4.5, 3.525) {};
		\node [style=none] (166) at (-5, 3.15) {\scalebox{.8}{\(t+1\)}};
		\node [style=none] (167) at (-5, 3.45) {\(\cdots\)};
		\node [style=box] (168) at (-6.125, -1.6) {\(f_0\)};
		\node [style=none] (169) at (7.125, -0.725) {};
		\node [style=none] (170) at (-7.375, -0.725) {};
		\node [style=none] (171) at (7.125, -5.025) {};
		\node [style=none] (172) at (-7.375, -5.025) {};
		\node [style=none] (173) at (-6.375, -1.225) {};
		\node [style=none] (174) at (-4.375, -1.225) {};
		\node [style=none] (175) at (-6.375, -1.975) {};
		\node [style=none] (176) at (-4.375, -1.975) {};
		\node [style=none] (177) at (-3.625, -1.975) {};
		\node [style=none] (178) at (-3.625, -1.225) {};
		\node [style=box] (179) at (-1.375, -2.35) {\(f_{t-1}\)};
		\node [style=none] (180) at (1.375, -1.975) {};
		\node [style=none] (181) at (-7.375, -2.725) {};
		\node [style=none] (182) at (1.375, -2.725) {};
		\node [style=none] (183) at (2.125, -2.725) {};
		\node [style=none] (184) at (2.125, -1.975) {};
		\node [style=none] (185) at (-7.375, -1.975) {};
		\node [style=none] (187) at (6.375, -3.475) {};
		\node [style=none] (188) at (-7.375, -4.225) {};
		\node [style=none] (189) at (6.375, -4.225) {};
		\node [style=none] (192) at (7.125, -1.225) {};
		\node [style=none] (193) at (7.125, -1.975) {};
		\node [style=box] (194) at (2.875, -3.1) {\(f_t\)};
		\node [style=none] (195) at (4.625, -2.725) {};
		\node [style=none] (196) at (-7.375, -3.475) {};
		\node [style=none] (197) at (4.625, -3.475) {};
		\node [style=none] (198) at (5.375, -3.475) {};
		\node [style=none] (199) at (5.375, -2.725) {};
		\node [style=none] (200) at (7.125, -2.725) {};
		\node [style=box] (201) at (-4.875, -1.25) {\(r_{0}\)};
		\node [style=box] (202) at (0.625, -2) {\(r_{t-1}\)};
		\node [style=box] (203) at (4.125, -2.75) {\(r_{t}\)};
		\node [style=bn] (205) at (6.375, -4.225) {};
		\node [style=bn] (206) at (6.375, -3.475) {};
		\node [style=none] (207) at (4.375, -1.975) {};
		\node [style=none] (208) at (-2.625, -1.975) {};
		\node [style=none] (209) at (-3.125, -2.35) {\scalebox{.8}{\(t\)}};
		\node [style=none] (210) at (-3.125, -2.05) {\(\cdots\)};
	\end{pgfonlayer}
	\begin{pgfonlayer}{edgelayer}
		\draw [style=background] (124.center)
			 to (123.center)
			 to (125.center)
			 to (126.center)
			 to cycle;
		\draw (127.center) to (128.center);
		\draw [in=180, out=0] (128.center) to (131.center);
		\draw (165.center)
			 to (134.center)
			 to [in=180, out=0] (137.center)
			 to (152.center)
			 to [in=180, out=0] (155.center)
			 to (141.center)
			 to [in=180, out=0] (144.center)
			 to (162);
		\draw (146.center)
			 to (132.center)
			 to [in=0, out=180] (130.center)
			 to (129.center)
			 to (139.center);
		\draw (135.center)
			 to (136.center)
			 to [in=180, out=0] (138.center)
			 to (147.center);
		\draw (142.center)
			 to (143.center)
			 to [in=180, out=0] (145.center);
		\draw (153.center)
			 to (154.center)
			 to [in=180, out=0] (156.center)
			 to (157.center);
		\draw (163) to (145.center);
		\draw [style=background] (170.center)
			 to (169.center)
			 to (171.center)
			 to (172.center)
			 to cycle;
		\draw (173.center) to (174.center);
		\draw [in=180, out=0] (174.center) to (177.center);
		\draw (208.center) to (180.center);
		\draw [in=180, out=0] (180.center) to (183.center);
		\draw (183.center) to (195.center);
		\draw [in=180, out=0] (195.center) to (198.center);
		\draw (198.center) to (187.center);
		\draw (192.center)
			 to (178.center)
			 to [in=0, out=180] (176.center)
			 to (175.center)
			 to (185.center);
		\draw (181.center)
			 to (182.center)
			 to [in=180, out=0] (184.center)
			 to (193.center);
		\draw (188.center) to (189.center);
		\draw (196.center)
			 to (197.center)
			 to [in=180, out=0] (199.center)
			 to (200.center);
	\end{pgfonlayer}
\end{tikzpicture}
\]
  Note that the second inequality requires looking ahead to the context \([0, t+1] \supseteq [0, t]\). Therefore, both sides of the sliding equation are observationally equivalent and thus \(U\) is well-defined on equivalence classes. Moreover, symmetric monoidal functoriality follows immediately from the axiom \(\discard_X \otimes \discard_Y = \discard_{X \otimes Y}\).  
\end{proofE}

In general, \(\Obs_\N(\CC)\) is not known to form a feedback category for monoidal \(\CC\), although the existence of a functor from \(\St([\N, \CC])\) ensures that dinaturally equivalent stateful morphisms will have the same interpretation. Even so, the delay is natural: delaying an observation by one time step does not change the behaviour.
\begin{propositionE}\label{prop:obsfbk:delay-natural-initialised}
  For any discard bicategory \(\CC\), there is a monoidal natural transformation
  \[\delta\colon1_{\Obs_\N(\CC)}\Rightarrow\shftN;\quad \delta_{\vb X}\coloneqq\big[ \big\{ \big(\smbigotimes_{k=0}^{t-1} 1_{X_k}\big)\otimes\discard_{X_t} \big\}_{t\in\N} \big]\,.\]
\end{propositionE}
\begin{proofE}
  Naturality at any \(\vb f\colon\vb X\to\vb Y\) amounts to the equation \(\shftN(\vb f)\circ\delta_{\vb X}=\delta_{\vb Y}\circ\vb f\), which follows by lax naturality of the discard upon looking ahead one context, exactly as in the sliding equation of \Cref{thm:obsfbk:trace-semantics}. Monoidality follows from \(\discard_X\otimes\discard_Y=\discard_{X\otimes Y}\).
\end{proofE}

\begin{example}[Universality of stateful processes]\label{ex:obsfbk:full}
  A natural question to ask is whether all behaviours can be built as stateful morphism sequences. Categorically, this corresponds to showing that the functor \(\St([\N, \CC]) \to \Obs_\N(\CC)\) is full. This holds for all our examples of interest.

  \emph{Classical processes.} For a partial Markov category \(\CC\), the fullness of the functor \(U: \St([\N, \CC]) \to \Obs_\N(\CC)\) can be shown using the property of conditionals, encoding \emph{Bayes rule}. Given a representative monotone sequence, the conditional preorder supplies an effect witnessing each monotonicity inequality \cite[\S~7]{lavore_monoidal_2022}. Note however that this functor does not factor through the category of monotone sequences \(\Mon_{\N}(\CC)\), because it doesn't have a natural delay. Still, the quotient functor \(\Mon_{\N}(\CC) \to \Obs_\N(\CC)\) is full and it was shown by Bonchi et al.\ \cite[Thm.~7.19]{bonchi_effectful_2025} that, if \(\CC\) has \emph{ranges}, monotone sequences can be identified with coalgebraic streams guarded over the total maps, \(\Stream_{\Tot(\CC)}(\CC) \simeq \Mon_{\N}(\CC)\). Composing with the quotient, the functor \(\Stream_{\Tot(\CC)}(\CC) \to \Obs_\N(\CC)\) is full. This means we can interpret coinductive equations on string diagrams \cite{DiLavore2025}, as long as the rule is restricted to total maps. A \emph{complete} coalgebraic calculus of initialised behaviours would have to combine the two: finite sliding of arbitrary morphisms together with coinductive rules restricted to a subclass.

  \emph{Quantum processes.} A similar story should hold in the quantum setting. Here fullness of \(U\) arises instead from the \emph{Stinespring dilation theorem} for completely positive maps, which plays the role that conditionals play classically. This has been axiomatised by Carette et al.~\cite{delayedtracememory} via the notion of a \emph{pseudo-purifiable} discard category equipped with the purification order of \Cref{def:approx:purification}, which can be seen as a strengthening of the notion discard bicategory where every morphism is uniquely expressed by a pure morphism with a minimal discarded environment. Following the same argument as \cite[Prop.~2]{delayedtracememory}, \(U\) is full for all \(\CPTP\), \(\CPTNI\), and \(\CPM\) with the purification order; by \Cref{prop:approx:loewner-purification}, the same holds for \(\CPTNI\) with the L\"owner order.
\end{example}

\subsection{Compact-closed behaviours}\label{sec:obs:compact-closed}

When working in compact-closed categories, the simplest way to construct a feedback category of stateful processes is to freely add a \emph{delay} natural transformation, see \Cref{prop:state:feedback-compact-closed}. This is convenient because, if we have an axiomatisation of a compact-closed category \(\CC\) in terms of generators and relations, we directly obtain an axiomatisation of \(\St(\CC)\) by adding one generator for delay and imposing that it is natural with respect to all other generators.

However, it is not clear what the time evolution of these processes should be, as there cannot be a functor from \(\St(\CC)\) to initialised streams or behaviours. For example, signal flow graphs add a natural delay transformation to the category of affine relations \cite{Bonchi2015, Bonchi2017}, but their time-domain interpretation has so far not been functorial.

As we proceed to show, stateful morphisms over compact-closed discard bicategories can be interpreted functorially as behaviours indexed by the integers rather than the natural numbers.

\begin{definition}\label{def:obsfbk:uninitialised}
  Let \(\J\coloneqq \{ [\ell,r]\ |\ \ell,r \in \Z\colon \ell\leq r \} \) denote  the set of finite closed intervals in \(\Z\).
  Given a discard bicategory \(\CC\), denote the category of \textbf{uninitialised behaviours} over \(\CC\) by
  \(\Obs_\Z(\CC) \coloneqq \Obs_{\Z, \J}(\CC)\).
\end{definition}
\begin{definition}
  Define the delay endofunctor \(\shftZ\colon\Obs_\Z(\CC) \to \Obs_\Z(\CC)\)  on objects by 
  \[\shftZ(\vb{X}) \coloneqq \{X_{k-1}\}_{k \in \Z}, \] 
  and on morphisms by:
  \[ \bigl([\vb{f}]\colon\vb{X}\to \vb{Y}\bigr)\quad  \longmapsto\quad \bigl([ \{ f_{[\ell-1, r-1]} \}_{[\ell,r] \in \J}]\colon\shftZ(\vb{X})\to \shftZ(\vb{Y}) \bigr)\, .\]
\end{definition}

It follows from \Cref{lem:approx:duality} that:

\begin{theoremE}\label{thm:obsfbk:obs-guarded-feedback}
  Given a compact-closed discard bicategory \(\CC\) with \(\codiscard_X \coloneqq (1_X \otimes \discard_{X^*}) \circ \cup_X\),
  there is a monoidal natural isomorphism
  \[\delta\!\colon\!1_{\Obs_\Z(\CC)}\!\Rightarrow \shftZ;\quad \delta_{\vb{X}} \!\coloneqq\! \big[ \big\{ \codiscard_{X_{\ell - 1}} \otimes \big(\smbigotimes_{k=\ell}^{r-1} 1_{X_k}\big) \otimes \discard_{X_r} \big\}_{[\ell,r]\in \J} \big]\,,\]
  making  \(\Obs_\Z(\CC)\) into a compact-closed feedback category.
\end{theoremE}
\begin{proofE}
  Using \Cref{prop:state:feedback-compact-closed}, it suffices to prove that:
  \begin{enumerate}
    \claimitem{obsfbk:obs-guarded:compact}{\(\Obs_\Z(\CC)\) is compact-closed;}
    \claimitem{obsfbk:obs-guarded:mono}{the components of \(\delta\) are monotone nets;}
    \claimitem{obsfbk:obs-guarded:nat}{\(\delta\) is a monoidal natural transformation.}
  \end{enumerate}

  \claimparagraph{obsfbk:obs-guarded:compact}{Compact closure.}
  For every object \(\vb{X}\) of \(\Obs_\Z(\CC)\), we define the dual as \((\vb X)^*\coloneqq(X_k^*)_{k\in\Z}\).
  The cups and caps are defined component-wise by:
  \begin{align*}
      \cup_{\vb X}\coloneqq \left[\left\{(\shuffle_{\vb{X^*},\vb{X}})_{[\ell, r]}\circ\bigotimes_{k=\ell}^r\cup_{X_k}\right\}_{[\ell,r]\in \J}\right]
    \quad\text{and}\quad
      \cap_{\vb X} \coloneqq  \left[\left\{ \bigotimes_{k=\ell}^r\cap_{X_k}\circ(\shuffle_{\vb{X},\vb{X^*}})_{[\ell, r]}^{-1}\right\}_{[\ell,r]\in \J}\right].
  \end{align*}
  Lax naturality ensures that \(\cap_{X_k} \subseteq \discard_{X_k^* \otimes X_k} \) and \( \discard_{X_k \otimes X_k^*} \circ \cup_{X_k} \subseteq 1_I = \discard_I\). Therefore, it follows that \(\cup_{\vb X}\) and \(\cap_{\vb X}\) are monotone nets. The snake equations can be shown to hold component-wise.

  \claimparagraph{obsfbk:obs-guarded:mono}{Monotonicity of delay.}
  We start by showing monotonicity of each component of \(\delta\). Note that for any \([\ell, r] \in \J\),
  \begin{align*}
    (\discard_{X_{\ell - 2}} \otimes 1_{X_{[\ell - 1, r - 1]}}) \circ (\delta_{\vb{X}})_{[\ell - 1, r]} &= 1_{X_{[\ell - 1, r - 1]}} \otimes \discard_{X_r}
     \subseteq \discard_{X_{\ell - 1}} \otimes (\codiscard_{X_{\ell - 1}} \otimes 1_{X_{[\ell - 1, r - 1]}} \otimes \discard_{X_r})\\
    & = \discard_{X_{\ell - 1}} \otimes (\delta_{\vb{X}})_{[\ell, r]}\, ,
  \end{align*}
  where the inequality follows from the inequation \(1_{X_{\ell - 1}} \subseteq \codiscard_{X_{\ell - 1}} \circ \discard_{X_{\ell - 1}}\).
  On the other hand we have:
  \[(1_{X_{[\ell, r]}} \otimes \discard_{X_{r + 1}}) \circ (\delta_{\vb{X}})_{[\ell, r + 1]} = (\delta_{\vb{X}})_{[\ell, r]} \otimes \discard_{X_{r + 1}}\,.\]
  Therefore \(\delta_{\vb{X}}\) is a monotone net.

  \claimparagraph{obsfbk:obs-guarded:nat}{Monoidal naturality of delay.}
  To show naturality, fix any \(\vb f\colon \vb X \to \vb Y\) and consider the compositions
  \(\shftZ(\vb f) \circ \delta_{\vb{X}}\colon \vb X \to \shftZ(\vb Y)\) and
  \(\delta_{\vb{Y}} \circ \vb f\colon \vb X \to \shftZ(\vb Y)\).
  Fix any \([\ell, r]\in \J\).
  
  First, to show \(\shftZ(\vb{f}) \circ \delta_{\vb{X}} \subseteq \delta_{\vb{Y}} \circ \vb{f}\)
  we look ahead to the context \([\ell, r + 1] \supseteq [\ell, r]\):
    \begin{align*}
    &(1_{Y_{[\ell - 1, r - 1]}} \otimes \discard_{Y_{r}}) \circ (\shftZ(\vb f) \circ \delta_{\vb{X}})_{[\ell, r + 1]} 
    \subseteq ((\codiscard_{Y_{\ell - 1}} \circ \discard_{Y_{\ell - 1}}) \otimes 1_{Y_{[\ell, r - 1]}} \otimes \discard_{Y_r}) \circ f_{[\ell-1, r]} \circ (\delta_{\vb{X}})_{[\ell, r + 1]} \\
    &\quad \subseteq (\codiscard_{Y_{\ell - 1}} \otimes 1_{Y_{[\ell, r - 1]}} \otimes \discard_{Y_r}) \circ f_{[\ell, r]} \circ (\discard_{X_{\ell - 1}} \otimes 1_{X_{[\ell, r]}}) \circ (\delta_{\vb{X}})_{[\ell, r + 1]} \\
    &\quad \subseteq  (\codiscard_{Y_{\ell - 1}} \otimes 1_{Y_{[\ell, r - 1]}} \otimes \discard_{Y_r}) \circ f_{[\ell, r]} \circ (1_{X_{[\ell, r]}} \otimes \discard_{X_{r + 1}})
    = (\delta_{\vb{Y}} \circ \vb f)_{[\ell, r]} \otimes \discard_{X_{r + 1}}\, ,
  \end{align*}
  where the first inequality uses  \(1_{Y_{\ell - 1}} \subseteq \codiscard_{Y_{\ell - 1}} \circ \discard_{Y_{\ell - 1}}\), the second is left monotonicity of \(f\), and the third follows by definition  of\(\delta_{\vb{X}}\) and \(\discard_{X_{r + 1}} \circ \codiscard_{X_{r + 1}} \subseteq 1_I\).
  Similarly, for the other direction we look ahead to the context \([\ell - 1, r] \supseteq [\ell, r]\):
  \begin{align*}
  (\discard_{Y_{\ell - 2}} \otimes 1_{Y_{[\ell - 1, r - 1]}})(\delta_{\vb{Y}} \circ \vb f)_{[\ell - 1, r]}
  & \subseteq (1_{Y_{[\ell - 1, r - 1]}} \otimes  \discard_{Y_{r}}) \circ f_{[\ell - 1, r]}
   \subseteq f_{[\ell - 1, r - 1]} \otimes \discard_{X_{r}}\\
  & \subseteq  f_{[\ell - 1, r - 1]} \circ ((\codiscard_{X_{\ell - 1}} \circ \discard_{X_{\ell - 1}}) \otimes 1_{X_{[\ell, r - 1]}} \otimes \discard_{X_{r}})\\
  & = \discard_{X_{\ell - 1}} \otimes (f_{[\ell - 1, r - 1]} \circ (\delta_{\vb{X}})_{[\ell, r]})
  = \discard_{X_{\ell - 1}} \otimes [\shftZ(\vb f) \circ \delta_{\vb{X}}]_{[\ell, r]}\, .
  \end{align*}

  Therefore we have shown that the two sides of naturality are observationally equal.

  Finally, coherence of the delay natural transformation with respect to the unit and the monoidal product follows from the coherence axioms of \(\discard\). Since \(\Obs_\Z(\CC)\) is compact closed, \Cref{prop:state:feedback-compact-closed} ensures that \(\delta\) is moreover a natural isomorphism.
\end{proofE}
Using the universal property of the state construction and lax naturality, we obtain a functor witnessing that uninitialised behaviours are a feedback-preserving categorical semantics for Mealy machines in compact-closed categories:
\begin{corollaryE}\label{cor:obsfbk:always-unroll}
  Given a compact-closed discard bicategory  \(\CC\),
  there are feedback functors 
  \(\St(\CC) \to \St([\Z, \CC], \shftZ) \to \Obs_\Z(\CC)\).
\end{corollaryE}
\begin{proofE}
  The first functor is given by \Cref{lem:state:z-stateful}.

  By \Cref{thm:obsfbk:obs-guarded-feedback}, \(\Obs_\Z(\CC)\) is a compact-closed feedback category, with delay endofunctor
  \[\shftZ\colon\Obs_\Z(\CC)\to \Obs_\Z(\CC).\]

  Define a symmetric monoidal functor \(J\colon[\Z, \CC] \to \Obs_\Z(\CC)\) acting as the identity on objects and on morphisms by:
  \[
    J(f)\coloneqq
    \Big[\big\{\textstyle\bigotimes_{k=\ell}^{r} f_k \big\}_{[\ell,r]\in\J}\Big].
  \]
  Fix a sequence \(f_k\colon X_k\to Y_k\) and a context \([\ell,r]\in\J\).
  Using monoidality of \(\discard\) and \(\codiscard\), the right monotonicity inequality for \(J(f)\) reduces to
  \(
    \discard_Y\circ f_k \subseteq \discard_X,
  \) 
  and the left monotonicity inequality reduces to
  \(
    f_k\circ \codiscard_X \subseteq \codiscard_Y,
  \) 
  both of which hold by lax naturality.

  Moreover, it is immediate that \(J\) preserves the \(\shftZ\) endofunctor, i.e. \(\shftZ\circ J = J \circ \shftZ\).

  Invoking \Cref{prop:state:state-free} with \(\Gamma=1_{[\Z, \CC]}\) and \(\Delta=\shftZ\),
  there is a unique feedback functor
  \(
  F\colon\St([\Z, \CC])\to \Obs_\Z(\CC)
  \)
  such that \(F \circ \eta = J\) as desired.

  Finally, \(F\) is a feedback functor because both source and target are compact-closed feedback categories with trivial guard, therefore by \Cref{prop:state:feedback-compact-closed}, the feedback operator and the delay determine one another.  Therefore, the preservation of  feedback forces that the delay and induced compact-closed structure are preserved.
\end{proofE}

\begin{example}\label{ex:obsfbk:examples}
  Among our examples, these results apply to (i) non-deterministic processes in any Cartesian bicategory \(\rel(\CC)\) and (ii) unnormalised quantum processes in \(\CPM\) with the purification order. Indeed both are compact-closed discard bicategories. The resulting behaviours are very well-behaved in both cases, having the structure of a feedback compact closed discard bicategory. \(\Obs_\Z(\CPM)\) can be interpreted as a semantics for infinite dimensional quantum systems such as spin chains and quantum channels with memory.

  The \emph{structure theorem} of Kretschman and Werner \cite{kretschmann_quantum_2005} corresponds to the fullness of the functor \(U: \St(\CPM) \to \Obs_\Z(\CPM)\). Note that this functor is not full if \(\CPM\) contains only \emph{finite dimensional} systems. This is because, differently from the \(\N\)-indexed setting, behaviours could in principle retain memory from infinitely far into the past, and these cannot be expressed as time-evolutions of machines with finite-dimensional memory.

  The full structure theorem could possibly be recovered by allowing infinite-dimenional quantum systems in the process theory, as in \cite{selinger2004towards,cho_semantics_2014}. The advantage of a \emph{finite memory} assumption in this case, is that the resulting equational theory is more tractable and we gain access to truly infinite behaviour while retaining compact closure and a natural delay.

  We leave the analysis of stateful quantum processes to future work and, in the remainder of this paper, we investigate the compact-closed relational setting in more detail.
\end{example}

\section{Compactness theorem}
\label{sec:compactness}
\pratendSetLocal{category=compactness, end, restate}
\emph{When do local approximations completely and uniquely determine a global process?} Just as series need not converge, neither do behaviours.
However, when the category is sufficiently well-behaved, we may be able to glue the local approximations into a coherent infinite process.

In this section, we show that this is the case for  the category of closed relations between compact Hausdorff spaces, exhibiting a functor  \(\Obs(\rel(\Haus)) \to \rel(\Haus)\).

In first-order logic, the \emph{compactness theorem} states that a set of first-order sentences has a model if and only if every finite subset of it has a model (see G\"odel~\cite[Satz~X]{godel} and later Tarski~\cite[Thm.~13]{tarski}).
The construction of this functor can be interpreted as a categorified compactness theorem for {relational} models.

\begin{definition}\label{def:compactness:haus}
  Let \(\Haus\) denote the category of \textbf{compact Hausdorff topological spaces} and \textbf{continuous functions}.
\end{definition}

\begin{lemma}[{\cite[prop~15.]{khausrel}}]\label{lem:compactness:haus-regular}
  \(\Haus\) is a regular category, such that the morphisms  \(X \to Y\) in \(\rel(\Haus)\) are exactly the \emph{closed subspaces} of \(X \times Y\) with respect to the product topology.
\end{lemma}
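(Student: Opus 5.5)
The plan is to check the axioms of a regular category directly in \(\Haus\), and then to read off what the subobjects of \(X \times Y\) are. Everything rests on two standard facts. First, a closed subspace of a compact Hausdorff space is compact Hausdorff. Second, a continuous map from a compact space to a Hausdorff space is closed, so its image is closed.

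\emph{Step 1 (finite limits).} Products in \(\Haus\) are Tychonoff products, which are compact Hausdorff. The equaliser of \(f,g\colon X\to Y\) is the set \(\{x \mid f(x)=g(x)\}\), which is closed because \(Y\) is Hausdorff. It is therefore compact Hausdorff, and it has the universal property in \(\Haus\). Hence pullbacks are the set-theoretic pullbacks with the subspace topology.

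\emph{Step 2 (image factorisations and regular epis).} A continuous \(f\colon X\to Y\) factors as \(X \twoheadrightarrow f(X) \rightarrowtail Y\), where \(f(X)\) is closed, hence compact Hausdorff.
\begin{itemize}
\item \emph{Monos.} A mono is an injective continuous map; test it against maps out of the point. By the closed-map fact it is a closed embedding.
\item \emph{Surjections are regular epis.} Let \(q\colon X\to Y\) be a continuous surjection. It is a closed map, hence a quotient map. Its kernel pair \(K\subseteq X\times X\) is closed, and \(q\) is its coequaliser in \(\mathsf{Top}\), with \(Y\) already compact Hausdorff. So \(q\) is the coequaliser in \(\Haus\).
\item \emph{Regular epis are surjections.} Given a coequaliser of some pair, factor it through its image. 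The image is a closed subspace through which the pair also factors, so the universal property forces the image to be everything.
\end{itemize}
This identification of regular epis with continuous surjections is the step I expect to be the main obstacle. The key input is the classical fact that the quotient of a compact Hausdorff space by a closed equivalence relation is Hausdorff. This holds because compact Hausdorff spaces are normal, and saturations of closed sets under a closed relation on a compact space are closed. That fact is what guarantees the quotients we need stay inside \(\Haus\).

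\emph{Step 3 (pullback stability).} Pullbacks are computed as in \(\Set\), and in \(\Set\) surjections are stable under pullback. Combined with Step 2, regular epis are pullback-stable, so \(\Haus\) is regular.

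\emph{Step 4 (relations).} A morphism \(X\to Y\) in \(\rel(\Haus)\) is a subobject of \(X\times Y\), that is, a jointly monic span taken up to isomorphism. By Step 2, such subobjects correspond exactly to closed subsets of \(X\times Y\) with the subspace topology. Conversely, every closed subset is compact Hausdorff, and its inclusion is a mono, so it defines a relation.

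Finally, I would check that composition agrees with composition of relations in \(\Rel\). The composite is the image of the pullback \(S\times_Y R\) under projection to \(X\times Z\). As a set this is the usual relational composite. It is automatically closed, because projection along the compact factor \(Y\) is a closed map. Hence \(\rel(\Haus)\) is the subcategory of \(\Rel\) whose morphisms are the closed relations.
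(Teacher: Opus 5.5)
Your argument is correct. The paper gives no proof of this lemma: it imports it from \cite[Prop.~15]{khausrel} and adds only the remark that Hausdorffness makes the identity closed and compactness makes composites closed.

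Your direct point-set verification makes those two uses explicit, which a citation cannot. Hausdorffness gives closed equalisers, closed kernel pairs and the closed diagonal. Compactness gives closed images, closed embeddings and closed composites.

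One correction to your own commentary: the classical fact that a compact Hausdorff space modulo a closed equivalence relation is Hausdorff is never used in your proof.
\begin{itemize}
\item For ``surjections are regular epis'' you only need that a continuous surjection between compact Hausdorff spaces is closed, hence a quotient map, and its codomain is already in \(\Haus\).
\item For ``regular epis are surjections'' you only need that images are closed.
\item Coequalisers of kernel pairs exist because the kernel pair of any \(f\colon X\to Y\) coincides with that of its corestriction \(X\twoheadrightarrow f(X)\), and that surjection is a coequaliser.
\end{itemize}
The quotient fact is the ingredient for the stronger statement that \(\Haus\) is \emph{exact}, i.e.\ that every closed equivalence relation is a kernel pair. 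It is not needed for regularity.

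A more abstract alternative route exists. By Manes' theorem, \(\Haus\) is monadic over \(\Set\) via the ultrafilter monad, and categories monadic over \(\Set\) are exact, so regularity comes for free. That route uses heavier machinery, and it still leaves you to identify subobjects of \(X\times Y\) with closed subsets, which your Step~4 does directly.
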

Hausdorffness is needed for the identity to be closed; whereas compactness is needed for composition to preserve closedness.

We exhibit a functor \(\Obs(\rel(\Haus)) \to \rel(\Haus)\) which glues together observational equivalence classes of monotone nets on \(\{X_j\}_{j\in \JJ}\) into a closed subspace of \(\prod_{j \in \JJ} X_j \), subject to a covering condition on the set of contexts. We moreover show that this restricts to a \emph{faithful} functor \(\Obs(\rel(\FSet)) \to \rel(\Haus)\).

For an index set \(\I\) and a set of finite contexts \(j,j'\in\JJ\) such that \(j\subseteq j'\), we use the following notations for projections:
\[\pi_{j}^{\infty}\colon \prod_{k\in\I} X_k \to \prod_{k\in j} X_k,\quad\text{and}\quad \pi_{j}^{j'}\colon \prod_{k\in j'} X_k \to \prod_{k\in j} X_k.\]

Using the inverse image of these projections and intersections, we glue together monotone nets using the following construction:
\begin{lemmaE}\label{lem:compactness:lim}
  Given an \(I\)-indexed set of compact Hausdorff spaces \(\vb{X}=\{X_k \in \Ob(\Haus)\}_{k\in\I}\), and a monotone net of closed subspaces \(\vb{C} \coloneqq \{C_j \subseteq \prod_{k \in j} X_k\}_{j\in \JJ}\), then there is a compact, closed subspace
  \[ \lim(\vb{C}) \coloneqq \bigcap_{j \in \JJ} (\pi_{j}^{\infty})^{-1} (C_j)\subseteq \prod_{k\in\I} X_k.\]
\end{lemmaE}
\begin{proofE}
  The product \(\prod_{k\in\I} X_k\) is compact with the product topology by Tychonoff's theorem and any intersection of closed subsets of a compact space is closed and compact.
\end{proofE}

The construction of \Cref{lem:compactness:lim} lifts to a functor when the set of contexts possesses the following property:
\begin{definition}\label{def:compactness:cover}
  \(\JJ\subseteq \Pfin(\I)\) is a \textbf{cover}, in case for any \( i \in \I\) there exists a subset \(j \in \JJ\) containing \(i \in j\).
  \(\JJ\) is \textbf{cofinal}, in case for each finite subset \(s \in \mathcal{P}_{\mathit{f}} (\I) \) there is some \(j \supseteq s\) in \(\JJ\).
  Note that  \(\JJ\) is cofinal if and only if it is an upward-directed cover.
\end{definition}
\begin{theoremE}[Compactness]\label{thm:compactness:compactness}
  Given a set  \(\I\) with cofinal \(\JJ\subseteq \Pfin(\I)\), then \(\lim\) induces a poset-enriched discard functor 
  \(\Lim\colon \Obs_{\I, \JJ}(\rel(\Haus)) \rightarrow \rel(\Haus)\).
\end{theoremE}

\begin{proofE}
  Define \(\Lim\) on objects by
  \[\vb{X}=\{X_k \in \Ob(\Haus)\}_{k\in\I} \mapsto \prod_{k\in\I} X_k,\]
  and on morphisms \([\vb{S}]\colon\vb{X}\to \vb{Y}\) by
  \[
    \Lim([\vb S])=\shuffle_{\vb{X},\vb{Y}}\big(\lim(\vb S)\big)\subseteq
    \Big(\prod_{i\in\Z}X_i\Big)\times\Big(\prod_{i\in\Z}Y_i\Big),
  \]
  recalling the canonical shuffling isomorphism
  \[\shuffle_{\vb{X},\vb{Y}}\colon \prod_{k\in\I} X_k \otimes Y_k \cong \prod_{k\in\I} X_k \times \prod_{k\in\I} Y_k.\]

  We show that \(\Lim\) is a discard functor, i.e.\ that it preserves:
  \begin{enumerate}
    \claimitem{compactness:lim:unit}{the monoidal unit;}
    \claimitem{compactness:lim:discard}{discards;}
    \claimitem{compactness:lim:id}{identities;}
    \claimitem{compactness:lim:poset}{the poset structure (and hence is well-defined on equivalence classes);}
    \claimitem{compactness:lim:tensor}{monoidal products;}
    \claimitem{compactness:lim:lax}{composition laxly (\(\Lim(\vb T)\circ\Lim(\vb S)\subseteq\Lim(\vb T\circ\vb S)\));}
    \claimitem{compactness:lim:oplax}{composition oplaxly (\(\Lim(\vb T\circ\vb S)\subseteq\Lim(\vb T)\circ\Lim(\vb S)\)).}
  \end{enumerate}

  \claimparagraph{compactness:lim:unit}{Preservation of monoidal unit.}
  \(\Lim\) preserves the monoidal unit since \(\prod_{k\in\I} \{\ast\} \cong \{\ast\}\).

  \claimparagraph{compactness:lim:discard}{Preservation of discards.}
\(\Lim\) preserves discards, since
   \[\lim(\discard_{\vb{X}}) = \bigcap_{j \in \JJ}(\pi_{j}^{\infty})^{-1}(X_j) = \bigcap_{j \in \JJ} \prod_{k\in\I} X_k = \prod_{k\in\I} X_k.\]

  \claimparagraph{compactness:lim:id}{Preservation of identities.} We want to show \(1_{\prod_{k\in\I} X_k} = \Lim(1_{\vb{X}})\).

  \paragraph{\(1_{\prod X}\subseteq\Lim(1_{\vb{X}})\).}
  First, for any \(j\) we have \(\pi_{j}^{\infty} 1_{\prod_{k\in\I} X_k} \subseteq 1_{X_j}\) and therefore
  \[1_{\prod_{k\in\I} X_k} \subseteq \bigcap_{j \in \JJ} (\pi_{j}^{\infty})^{-1} (1_{X_j}) = \Lim(1_{\vb{X}}).\]
  \paragraph{\(\Lim(1_{\vb{X}})\subseteq 1_{\prod X}\).}
  For the other direction, suppose \((x, y) \in \Lim(1_{\vb{X}})\). Therefore \(\pi_{j}^{\infty}(x, y) \in 1_{X_j}\), so that for all \(j\), \(\pi_{j}^{\infty} x = \pi_{j}^{\infty} y\).
  Using the fact that  \(\JJ\) is a \emph{cover}, for any \(k \in \I\) there exists some \(j \ni k\) and projecting out the \(k^{\text{th}}\) component of \(X_j\) we get \(x_k = y_k\).
  Therefore \(x = y\) and we have shown \(\Lim(1_{\vb{X}}) \subseteq 1_{\prod_{i\in\I} X_i}\).

  \claimparagraph{compactness:lim:poset}{Preservation of poset and well-definedness on equivalence classes.}
  Suppose \(\vb{A} \subseteq \vb{B}\), then there is a function \(f\) that to any context \(j\) associates a larger context \(j' = f(j) \supseteq j\) such that \(\pi^{j'}_{j} A_{j'} \subseteq B_j\). Then since \(\pi_j^\infty = \pi^{j'}_{j} \circ \pi_{j'}^\infty\) we have \((\pi_j^\infty)^{-1} = (\pi_{j'}^{\infty})^{-1} \circ (\pi_{j}^{j'})^{-1}\) and so \((\pi_{j'}^\infty)^{-1}(A_{j'}) \subseteq (\pi_{j}^\infty)^{-1}(B_j)\). Therefore
  \begin{align*}
    \lim(\vb{A}) &= \bigcap_{j \in \JJ} (\pi_{j}^{\infty})^{-1} (A_j)
    \subseteq  \bigcap_{j \in \JJ} (\pi_{f(j)}^{\infty})^{-1} (A_{f(j)})
    \subseteq \bigcap_{j \in \JJ} (\pi_{j}^{\infty})^{-1} (B_{j}) = \lim(\vb{B}),
  \end{align*}
  so that \(\Lim\) preserves the poset structure on monotone nets, which implies that \(\Lim\) is well-defined on observational equivalence classes.

  \claimparagraph{compactness:lim:tensor}{Preservation of monoidal product.}
  Let \(\vb{S}\) and \(\vb{T}\) be two observational nets. Without loss of generality assume that \(S_j \subseteq A_j\) and \(T_j \subseteq B_j\) for some nets \(\vb{A}, \vb{B} \in \Ob(\CC)^\I\).
  Then \(\pi_{j'}^\infty(a, b) = (\pi_{j'}^{\infty, A} a, \pi_{j'}^{\infty, B} b)\) for the projections \(\pi_{j'}^{\infty, X}\colon \prod_{k\in\I} X_k \to \prod_{k\in j} X_k\). Therefore:
  \begin{align*}
  \lim(\vb{S} \otimes \vb{T}) &= \bigcap_{j \in \JJ} \big( (\pi_{j'}^{\infty, A})^{-1}(S_j) \times (\pi_{j'}^{\infty, B})^{-1}(T_j) \big)
  = \bigcap_{j \in \JJ} \big( (\pi_{j'}^{\infty, A})^{-1}(S_j) \big) \times \bigcap_{j \in \JJ} \big( (\pi_{j'}^{\infty, B})^{-1}(T_j) \big) \\
  &\cong \lim(\vb{S}) \times \lim(\vb{T}),
  \end{align*}
  from which it follows that \(\Lim\) preserves monoidal products.

  \claimparagraph{compactness:lim:lax}{Preservation of composition: laxness.}
  Take composable nets \(\vb{S}\) and \(\vb{T}\). We start by proving that  \(\Lim\) laxly preserves composition, meaning that
  \(\Lim(\vb{T}) \circ \Lim(\vb{S})\subseteq \Lim(\vb{T} \circ \vb{S}).\)

  Suppose that \((x, z) \in \Lim(\vb{T}) \circ \Lim(\vb{S})\) then there exists some \(y\) such that \((x, y) \in \Lim(\vb{S})\) and \((y, z) \in \Lim(\vb{T})\).
  Note that for any \(j_0 \in \JJ\), \(\pi_{j_0}^{\infty}(\vb{C}) \subseteq C_{j_0}\) because
  \begin{align*}
  \pi_{j_0}^{\infty} (\lim(\vb C))= \pi_{j_0}^{\infty} (\bigcap_{j\in\J}(\pi_{j}^{\infty})^{-1}(C_j))
  \subseteq& \pi_{j_0}^\infty ((\pi_{j_0}^{\infty})^{-1}(C_{j_0})) = C_{j_0}\, .
  \end{align*}
  Therefore,  for any \(j\), \(\pi_{j}^{\infty}(x, y) = (\pi_{j}^{\infty} x, \pi_{j}^{\infty} y) \in S_j\)
  and \((\pi_{j}^{\infty} y, \pi_{j}^{\infty} z) \in T_j\) which implies that \((\pi_{j}^{\infty} x, \pi_{j}^{\infty} z) \in T_j \circ S_j = (T \circ S)_j\).
  Therefore, \((x, z) \in \Lim(\vb{T} \circ \vb{S})\).

  \claimparagraph{compactness:lim:oplax}{Preservation of composition: oplaxness.}
  Now we prove the harder direction, that  \(\Lim\) oplaxly preserves composition, meaning that 
  \(\Lim(\vb{T} \circ \vb{S}) \subseteq \Lim(\vb{T}) \circ \Lim(\vb{S}).\)

  Suppose \((x, z) \in \Lim(\vb{T} \circ \vb{S})\) then for all \(j\), \((\pi_{j}^{\infty}x, \pi_{j}^{\infty} z) \in (T \circ S)_j = T_j \circ S_j\) there exists some \(y_j\) such that \((\pi_{j}^{\infty}x, y) \in S_j\) and \((y_j, \pi_{j}^{\infty} z) \in T_j\).
  Therefore the following sets are non-empty:
  \[
  W_j=\left\{y_j\in \prod_{i \in \I} Y_i \ \middle|\
  (\pi^\infty_j x,y_j)\in S_j,\ (y_j,\pi^\infty_j z) \in T_j\right\}.
  \]
  They are moreover closed, since \(W_j\) is the intersection of the image of \(\{\pi^\infty_j x\}\) under \(S_j\) and the preimage of \(\{\pi^\infty_j z\}\) under \(T_j\) and both are closed relations.

  Now take any \(j' \supseteq j\). It follows that
  \begin{align*}
    \pi^{j'}_{j}(W_{j'})
    &=\left\{y_j\in \prod_{i \in j} Y_i\ \middle| \
    \exists y_{j'}\in \prod_{i \in j'} Y_i\colon
    \begin{matrix}
      \pi^{j'}_{j}(y_{j'})=y_j,\\
      (\pi^\infty_{j'} x,y_{j'})\in S_{j'},\\
      (y_{j'},\pi^\infty_{j'} z)\in T_{j'}
      \end{matrix}
  \right\}
  \\
  &\subseteq
  \left\{y_j\in \prod_{i\in j} Y_i\ \middle|\
  \begin{matrix}
    (\pi^\infty_{j}x, y_j)\in S_{j},\\\
    (y_j,\pi^\infty_{j}z)\in T_{j}
  \end{matrix}\right\}
= W_j.
  \end{align*}
  Therefore \((\pi^{j'}_{j})^{-1}(W_{j}) \supseteq W_{j'}\).

  Now for each \(j \in \JJ\), define:
  \[
  C_j \coloneqq \bigl(\pi^{\infty}_{j}\bigr)^{-1}(W_j)\ \subseteq\ \prod_{i\in\Z} Y_i .
  \]
  Whenever \(j \subseteq j'\), \(\pi_j^\infty = \pi^{j'}_{j} \circ \pi_{j'}^\infty\), we have that
  \((\pi_j^\infty)^{-1} = (\pi_{j'}^{\infty})^{-1} \circ (\pi_{j}^{j'})^{-1}\,,\) therefore
  \begin{align*}
  C_j = (\pi_j^\infty)^{-1}(W_j) = (\pi_{j'}^{\infty})^{-1} \circ (\pi_{j}^{j'})^{-1} (W_{j})
  \supseteq (\pi_{j'}^{\infty})^{-1}(W_{j'}) = C_{j'}\, .
  \end{align*}

  Since each \(W_j\) is closed and the projections \(\pi^{\infty}_{j}\) are continuous, \(C_j\) is also closed.
  Moreover, since each \(W_j\) is non-empty so is \(C_j\).

  We show that, since \(\JJ\) is \emph{directed}, the \(C_j\) satisfy the finite intersection property.
  Fix any \(n\) and take arbitrary contexts \(\{j_1,  \dots, j_n\}\). By directedness there exists some \(j^* \in \JJ\) such that \(j^* \supseteq j_i\) for all \(i\).
  Therefore \(C_{j^*} \subseteq \bigcap_{i = 1}^n C_{j_i}\) and since \(C_{j^*}\) is non-empty, the finite intersection property holds.
  The intersection of closed subsets of a compact space satisfying the finite intersection property is nonempty, so that:
  \[ \bigcap_{j \in \JJ} (\pi^{\infty}_{j})^{-1}(W_j) = \bigcap_{j \in \JJ} C_j \neq \emptyset. \]

  Finally, take \(y \in \bigcap_{j \in \JJ} C_j\), then for every \(j\), \(\pi_j^\infty(x, y) \in S_j\) and \(\pi_j^\infty(y, z) \in T_j\).
  Therefore \((x, y) \in \Lim(\vb{S})\) and \((y, z) \in \Lim(\vb{T})\) and it follows that \(\Lim(\vb{T} \circ \vb{S}) \subseteq \Lim(\vb{T}) \circ \Lim(\vb{S})\).
\end{proofE}

Note that the assumption that \(\JJ\) is a cover is only used for preservation of the identity, while the condition of directedness is only used for preservation of composition.

Moreover, when each finite context admits only finitely many possible observations, the global behaviour \emph{uniquely determines} which local observations are possible:
\begin{textAtEnd}
  \begin{definition}\label{def:compactness:artinian}
    A poset \((P,\leq)\) is \textbf{Artinian} in case for every sequence \((x_k)_{k\in\N}\) in \(P\) with
    \(x_{k+1}\leq x_k\) for all \(k\in\N\), there exists \(d\in\N\) such that \(x_k=x_d\) for all \(k\geq d\).
  \end{definition}

  Importantly, the example that we care about, \(\rel(\FSet)\), is enriched in Artinian posets since every strictly descending chain of subsets of a finite set terminates.

  \begin{lemma}\label{lem:compactness:stabilisation}
    Let \(\CC\) be an Artinian poset-enriched compact-closed discard category satisfying
    \(1_X \subseteq \codiscard_X\circ \discard_X\) and \(\discard_X\circ \codiscard_X \subseteq 1_I\).
    Let \(\JJ\subseteq \Pfin(\I)\) be upward-directed.
    Then every monotone net \(\vb{S}\) of morphisms in \(\CC\) is observationally equivalent to a unique
    causal net \(\fix(\vb{S})\).
  \end{lemma}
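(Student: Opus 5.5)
The plan is to construct \(\fix(\vb S)\) contextwise as the \emph{stable value} of the descending chain of restrictions of \(\vb S\) from larger contexts, using the Artinian condition to guarantee stabilisation. Take contexts \(j\subseteq j'\). Define the restriction of \(S_{j'}\) to \(j\) by
\[
  r^{j}_{j'}(\vb S)\coloneqq (1_{Y_j}\otimes\discard_{Y_{j'\setminus j}})\circ S_{j'}\circ(1_{X_j}\otimes\codiscard_{X_{j'\setminus j}})\colon X_j\to Y_j ,
\]
with \(r^j_j(\vb S)=S_j\).

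First I check that these restrictions form a downward-directed family. Monotonicity of \(\vb S\) together with \(\discard\circ\codiscard\subseteq 1_I\) gives \(r^j_{j'}\subseteq S_j\otimes(\discard\circ\codiscard)\subseteq S_j\). Applying the same argument to \(j\subseteq j'\subseteq j''\), and using that restriction is built by composition and tensor (hence monotone), gives \(r^j_{j''}\subseteq r^j_{j'}\). Given \(j',j''\supseteq j\), upward-directedness of \(\JJ\) supplies \(j'''\supseteq j'\cup j''\), and then \(r^j_{j'''}\) lies below both. So \(\{r^j_{j'}\}_{j'\supseteq j}\) is a nonempty downward-directed subset of the Artinian poset \(\CC(X_j,Y_j)\). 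It therefore has a minimal element: otherwise dependent choice would give an infinite strictly descending chain. By directedness this minimal element is the least element. I define \(\fix(\vb S)_j\) to be this least element, attained at some stage \(k(j)\supseteq j\), and \(r^j_{k}=\fix(\vb S)_j\) for every \(k\supseteq k(j)\).

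Next I show \(\fix(\vb S)\) is causal and observationally equivalent to \(\vb S\).

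For \(j\subseteq j'\), choose \(k\) containing both stabilisation contexts \(k(j)\) and \(k(j')\). Then \(\fix(\vb S)_{j'}=r^{j'}_k\) and \(\fix(\vb S)_j=r^j_k\), and both sides of the causality equation can be written using the single morphism \(S_k\).

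The inclusion \((1\otimes\discard)\circ\fix(\vb S)_{j'}\subseteq\fix(\vb S)_j\otimes\discard\) comes from \(1\subseteq\codiscard\circ\discard\), inserted on the inputs \(X_{j'\setminus j}\).

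The reverse inclusion is where I expect the main obstacle. My first attempt will be to precompose with \(1\otimes\codiscard_{X_{j'\setminus j}}\), which turns \((1\otimes\discard)\circ\fix(\vb S)_{j'}\) into a restriction \(r^j_k\) and hence something \(\supseteq\fix(\vb S)_j\) by minimality. I will then try to transport this back through \(\otimes\,\discard\) using compact closure, bending input wires to outputs so that input discards become output discards. If the equality cannot be closed this way, I will restate causality in the two-sided form dictated by \(\codiscard\), as in the proof of \Cref{thm:obsfbk:obs-guarded-feedback}.

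Observational equivalence is easier. \(\fix(\vb S)\subseteq\vb S\) holds with look-ahead \(j'=j\), since \(\fix(\vb S)_j\subseteq S_j\). For \(\vb S\subseteq\fix(\vb S)\), look ahead to \(k(j)\): then \((1\otimes\discard)\circ S_{k(j)}\subseteq r^j_{k(j)}\otimes\discard\), again by \(1\subseteq\codiscard\circ\discard\).

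Finally, uniqueness. Suppose \(\vb F,\vb G\) are causal nets that are observationally equivalent. Causality makes every look-ahead restriction of \(F_{j'}\) to \(j\) equal to \(F_j\otimes\discard\). So \(\vb F\subseteq\vb G\) reduces to \(F_j\otimes\discard\subseteq G_j\otimes\discard\), and precomposing with \(\codiscard\) and using \(\discard\circ\codiscard\subseteq 1_I\) yields \(F_j\subseteq G_j\). Symmetrically \(G_j\subseteq F_j\), and antisymmetry of the poset enrichment gives \(F_j=G_j\) for every \(j\).
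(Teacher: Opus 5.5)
The gap is the step you flag yourself: the reverse causality inclusion
\(\fix(\vb S)_j\otimes\discard_{X_{j'\setminus j}}\subseteq(1_{Y_j}\otimes\discard_{Y_{j'\setminus j}})\circ\fix(\vb S)_{j'}\). No compact-closure manoeuvre can close it, because it is false. With it, the lemma itself fails for \emph{causal} in the sense of \Cref{def:obs:monotone-family}, where all monotonicity inclusions are equalities. Everything else up to that point is correct and matches the paper: the construction of \(\fix(\vb S)\), the stabilisation argument, the forward inclusion (monotonicity of \(\fix(\vb S)\)), and both approximations.

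For a counterexample, take \(\CC=\rel(\FSet)\), \(\JJ=\{[0,t]\mid t\in\N\}\), \(X_t=\{0,1\}\) and \(Y_t=\{\ast\}\). A morphism \(X_j\to Y_j\) is then just a subset \(A_j\subseteq X_j\). Monotonicity reads \(A_{j'}\subseteq A_j\times X_{j'\setminus j}\), and causality reads \(A_{j'}=A_j\times X_{j'\setminus j}\). Put \(A_{[0,0]}=X_0\) and \(A_{[0,t]}=\{x\in X_{[0,t]}\mid x_1=0\}\) for \(t\geq 1\).
\begin{itemize}
\item Then \(\fix(\vb S)_{[0,0]}=X_0\) and \(\fix(\vb S)_{[0,1]}=X_0\times\{0\}\), so your reverse inclusion would assert \(X_0\times X_1\subseteq X_0\times\{0\}\).
\item In fact no causal \(\vb T\), with subsets \(B_j\), is equivalent to \(\vb S\). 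Causality turns the witnesses of \(\vb T\subseteq\vb S\) into \(B_j\subseteq A_j\).
\item So \(B_{[0,0]}\times X_1=B_{[0,1]}\subseteq X_0\times\{0\}\) forces \(B_{[0,0]}=\emptyset\), hence every \(B_{[0,t]}=\emptyset\). This contradicts \(\vb S\subseteq\vb T\).
\end{itemize}
The paper's proof has the same hole. Its causality line identifies \((1_{Y_j}\otimes\discard_{Y_{j'\setminus j}})\circ\fix(\vb S)_{j'}\colon X_{j'}\to Y_j\) with \(B_j^{\rho(\vb S,j')}\colon X_j\to Y_j\). That only type-checks after precomposing with \(1_{X_j}\otimes\codiscard_{X_{j'\setminus j}}\).

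So your fallback is not optional. What can be proved is the statement with \emph{two-sided} causality: monotonicity together with
\[
  F_j=(1_{Y_j}\otimes\discard_{Y_{j'\setminus j}})\circ F_{j'}\circ(1_{X_j}\otimes\codiscard_{X_{j'\setminus j}})\qquad (j\subseteq j').
\]
This is what the paper's computation actually shows, and what the following lemma on stabilised projections actually uses (\(\pi^k_{j_0}(S_k)=S_{j_0}\)). Your \(\fix(\vb S)\) satisfies it at once: for \(k\in\JJ\) with \(k\supseteq k(j)\cup k(j')\), both sides equal \(r^j_k\).

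The uniqueness step must then be redone. As written, precomposing \(F_j\otimes\discard\subseteq G_j\otimes\discard\) with \(1\otimes\codiscard\) only yields \(F_j\otimes(\discard\circ\codiscard)\subseteq G_j\). Concluding \(F_j\subseteq G_j\) would need \(1_I\subseteq\discard\circ\codiscard\), which is not assumed. For one-sided causality, uniqueness genuinely fails: in \(\rel(\FSet)\) with \(\JJ=\{\{1\},\{1,2\}\}\), \(X_1=Y_1=Y_2=\{\ast\}\) and \(X_2=\emptyset\), the causal nets with \(F_{\{1\}}=\emptyset\) and \(F_{\{1\}}=1_{\{\ast\}}\) are observationally equivalent. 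With the two-sided notion the fix is direct. Precompose the witness \((1\otimes\discard)\circ F_{j'}\subseteq G_j\otimes\discard\) itself with \(1\otimes\codiscard\). The left side becomes \(F_j\) and the right side lies below \(G_j\), and antisymmetry finishes.
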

  \begin{proof}
    Fix \(j\in\JJ\). For each \(j'\in\JJ\) with \(j\subseteq j'\), set
    \[
    B_{j}^{j'}
    \coloneqq
    (1_{Y_j}\otimes \discard_{Y_{j'\setminus j}})\circ S_{j'}\circ
    (1_{X_j}\otimes \codiscard_{X_{j'\setminus j}})
    \;\colon\; X_j\to Y_j\, .
    \]
    By monotonicity of \(\vb S\), enlarging \(j'\) can only refine this morphism, so the family
    \(\{B_{j}^{j'}\mid j'\in\JJ,\ j\subseteq j'\}\) stabilises at some \(\rho(\vb S,j)\in\JJ\) with \(j\subseteq \rho(\vb S,j)\).
    Define
    \(
    \fix(\vb S)\coloneqq \{\,B_{j}^{\rho(\vb S,j)}\,\}_{j\in\JJ}.
    \)

    We need to show that \(\fix(\vb S)\) is:
    \begin{enumerate}
      \claimitem{compactness:stab:causal}{causal;}
      \claimitem{compactness:stab:obseq}{observationally equivalent to \(\vb S\);}
      \claimitem{compactness:stab:unique}{the unique such causal net.}
    \end{enumerate}

    \claimparagraph{compactness:stab:causal}{Causality.}
    Let \(j\subseteq j'\).
    Then
    \(
    (1_{Y_j}\otimes \discard_{Y_{j'\setminus j}})\circ \fix(\vb S)_{j'}
    =
    (1_{Y_j}\otimes \discard_{Y_{j'\setminus j}})\circ B_{j'}^{\rho(\vb S,j')}
    =
    B_{j}^{\rho(\vb S,j')}.
    \)
    Since \(\rho(\vb S,j')\) contains \(j'\), it also contains \(j\), and by stabilisation of the family for \(j\) we have
    \(B_{j}^{\rho(\vb S,j')}=B_{j}^{\rho(\vb S,j)}=\fix(\vb S)_j\).
    Thus \(\fix(\vb S)\) is causal.

    \claimparagraph{compactness:stab:obseq}{Observational equivalence.}
    For each \(j\), monotonicity at \(j\subseteq \rho(\vb S,j)\) yields
    \[
    (1_{Y_j}\otimes \discard_{Y_{\rho(\vb S,j)\setminus j}})\circ S_{\rho(\vb S,j)}
    \subseteq
    S_j\otimes \discard_{X_{\rho(\vb S,j)\setminus j}}\, .
    \]
    Precomposing \(1_{X_j}\otimes \codiscard_{X_{\rho(\vb S,j)\setminus j}}\) and using
    \(\discard\circ\codiscard\subseteq 1_I\) yields \(\fix(\vb S)_j\subseteq S_j\), and thus \(\fix(\vb S)\subseteq \vb S\).

    For the converse, using \(1\subseteq \codiscard\circ\discard\) on the extra input factors gives
    \[
    (1_{Y_j}\otimes \discard_{Y_{\rho(\vb S,j)\setminus j}})\circ S_{\rho(\vb S,j)}
    \subseteq
    \fix(\vb S)_j\otimes \discard_{X_{\rho(\vb S,j)\setminus j}}\, ,
    \]
    so that \(\vb S\subseteq \fix(\vb S)\).
    Therefore \(\vb S\) and \(\fix(\vb S)\) are observationally equivalent.

    \claimparagraph{compactness:stab:unique}{Uniqueness.}
    Let \(\vb T\) be a causal net observationally equivalent to \(\vb S\).
    Then \(\vb T\) is observationally equivalent to \(\fix(\vb S)\).
    Since \(\vb T\) and \(\fix(\vb S)\) are causal, approximation implies contextwise comparison, hence
    \(\vb T=\fix(\vb S)\) contextwise.
  \end{proof}

  \begin{lemma}\label{lem:compactness:stabilized-projection}
    Given a set \(\I\) and cofinal \(\JJ\subseteq \Pfin(\I)\),
    and any causal monotone net \(\vb{S}=\{S_j\}_{j\in\JJ}\) in \(\rel(\FSet)\), then for all \(j\in\JJ\), we have that
    \(
    \pi_{j}^{\infty}\bigl(\lim(\vb{S})\bigr)= S_j\, .
    \)
  \end{lemma}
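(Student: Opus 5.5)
The inclusion \(\pi_{j}^{\infty}(\lim(\vb S))\subseteq S_j\) was already observed in the proof of \Cref{thm:compactness:compactness}: \(\lim(\vb S)\) is contained in \((\pi_j^\infty)^{-1}(S_j)\). So the content of the lemma is the reverse inclusion. Given a point \(s\in S_j\), we must extend it to a global point of \(\lim(\vb S)\). Here we regard \(\vb S\) as a net of relations, i.e.\ subsets \(S_j\subseteq\prod_{k\in j}Z_k\) with \(Z_k\coloneqq X_k\times Y_k\) finite. Causality then says exactly that \(\pi^{j'}_j(S_{j'})=S_j\) for all \(j\subseteq j'\). This holds because postcomposing with the discard on the \(Y_{j'\setminus j}\) outputs projects those coordinates away, and tensoring with \(\discard_{X_{j'\setminus j}}\) makes the \(X_{j'\setminus j}\) coordinates arbitrary. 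On the input side we therefore only get that \(S_{j'}\), projected to the \(j\)-coordinates and the \(X\)-coordinates of \(j'\), equals \(S_j\times X_{j'\setminus j}\). In particular every \(s\in S_j\) lifts to some element of \(S_{j'}\) agreeing with \(s\) on \(j\).

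The plan is a compactness argument in the style of the oplax step of \Cref{thm:compactness:compactness}. Fix \(s\in S_j\). For every \(j'\in\JJ\), set
\[
K_{j'}\coloneqq (\pi^\infty_{j})^{-1}(\{s\})\cap(\pi^\infty_{j'})^{-1}(S_{j'})\subseteq\prod_{k\in\I}Z_k ,
\]
where \(\prod_k Z_k\) carries the product of discrete topologies and is compact by Tychonoff. Each \(K_{j'}\) is closed, since it is a preimage of subsets of finite discrete spaces under continuous projections. For the finite intersection property, take finitely many \(j'_1,\dots,j'_n\) and use directedness of \(\JJ\) to pick \(j^*\supseteq j\cup j'_1\cup\dots\cup j'_n\). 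By causality there is \(t\in S_{j^*}\) with \(\pi^{j^*}_j(t)=s\). By causality again, \(\pi^{j^*}_{j'_i}(t)\in S_{j'_i}\) for every \(i\). Extending \(t\) arbitrarily to a global point, which is possible provided each \(Z_k\) is nonempty, gives an element of \(\bigcap_i K_{j'_i}\). Compactness then yields a point \(z\in\bigcap_{j'}K_{j'}\). Such a \(z\) lies in \(\lim(\vb S)\) and satisfies \(\pi^\infty_j(z)=s\), after transporting along \(\shuffle\) as in the definition of \(\Lim\).

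I expect the main obstacle to be a careful bookkeeping of what causality gives when \(j\subseteq j'\). The monotonicity inequality compares \((1\otimes\discard)\circ S_{j'}\) with \(S_j\otimes\discard\). Read relationally, this says the projection of \(S_{j'}\) onto the input coordinates in \(j'\) and the output coordinates in \(j\) equals \(S_j\times X_{j'\setminus j}\). So a point of \(S_j\) lifts to \(S_{j'}\) with the extra input coordinates chosen freely, which is more than we need. I would check this identity explicitly as a preliminary step.

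The other point to check is the degenerate case where some \(X_k\) or \(Y_k\) is empty. If \(\I\) has a coordinate with empty \(Z_k\), then \(\lim(\vb S)=\emptyset\), and we must verify that \(S_j=\emptyset\) as well. This follows from the cover property: choose \(j'\ni k\); then \(S_{j'}=\emptyset\), and causality forces \(S_j\otimes\discard_{X_{j'\setminus j}}=\emptyset\). This implies \(S_j=\emptyset\) when \(X_{j'\setminus j}\neq\emptyset\). When the empty coordinate is on the input side, it also forces \(\prod_{k\in j'}X_k\) to be empty, so I would treat it by the same case split.
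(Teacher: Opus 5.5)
Your main argument is the same as the paper's. You fix a point of \(S_j\), intersect its fibre under \(\pi^\infty_j\) with the preimages \((\pi^\infty_{j'})^{-1}(S_{j'})\), get the finite intersection property from directedness plus the lift supplied by causality, and conclude by Tychonoff. Your bookkeeping is more careful than the paper's. The paper reads \(S_k\) as a subset of \(\prod_{i\in k}X_i\), asserts \(\pi^{k}_{j_0}(S_k)=S_{j_0}\) outright, and never notes that extending to a global point needs nonempty factors.

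The step that does not go through is your final sentence about an empty input coordinate, and no case split can fix it, because the statement is false there. Your corrected reading of causality says that the projection of \(S_{j'}\) onto the \(X_{j'}\)- and \(Y_j\)-coordinates equals \(S_j\times X_{j'\setminus j}\). This lifts \(s\) only when \(X_{j'\setminus j}\neq\emptyset\). Likewise, your opening claim \(\pi^{j'}_j(S_{j'})=S_j\) holds only in that case. Suppose \(X_k=\emptyset\) for some \(k\notin j\). Then for every \(j'\supseteq j\cup\{k\}\), both sides of the causality equation are relations out of the empty set, so that equation holds vacuously. Nothing forces \(S_j=\emptyset\), yet \(\lim(\vb S)=\emptyset\).

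Here is a concrete counterexample:
\begin{itemize}
\item take \(\I=\{0,1\}\) and \(\JJ=\{\{0\},\{0,1\}\}\), which is cofinal;
\item take \(X_0=Y_0=Y_1=\{\ast\}\) and \(X_1=\emptyset\);
\item take \(S_{\{0\}}=\{(\ast,\ast)\}\) and \(S_{\{0,1\}}=\emptyset\).
\end{itemize}
This net is causal, but \(\pi^\infty_{\{0\}}(\lim(\vb S))=\emptyset\neq S_{\{0\}}\).

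So the lemma needs the extra hypothesis that every input object \(X_k\) is nonempty; the paper's proof uses this silently. Under that hypothesis your argument is complete, including your treatment of empty outputs \(Y_k\). One small fix there: choose \(j'\supseteq j\cup\{k\}\) by cofinality, not merely \(j'\ni k\), so that causality actually applies to \(j\subseteq j'\).
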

  \begin{proof}
    The inclusion \(\pi_{j}^{\infty}(\lim(\vb S))\subseteq S_j\) is immediate since
    \(\lim(\vb S)\subseteq(\pi_{j}^{\infty})^{-1}(S_j)\).

    For the reverse inclusion, fix \(j_0\in\JJ\).
    If \(S_{j_0}=\emptyset\) there is nothing to show.
    Otherwise, choose \(x_{j_0}\in S_{j_0}\).
    For each \(k\in\JJ\) with \(k\supseteq j_0\), set
    \[
    C_k\coloneqq(\pi_{k}^{\infty})^{-1}(S_k)\ \cap\ (\pi_{j_0}^{\infty})^{-1}(\{x_{j_0}\})
    \subseteq \prod_{i\in\I} X_i.
    \]
    Since \(\vb S\) is causal, for every \(k\supseteq j_0\) we have \(\pi_{j_0}^{k}(S_k)=S_{j_0}\), so each \(C_k\) is nonempty.

    Fix \(n\in\N\) and a family \((k_m)_{1\le m\le n}\) in \(\JJ\) with \(k_m\supseteq j_0\) for all \(m\).
    Choose \(k^*\in\JJ\) with \(k^*\supseteq \bigcup_{1\le m\le n} k_m\).
    Then \(C_{k^*}\subseteq \bigcap_{1\le m\le n} C_{k_m}\), so \(\{C_k\}_{k\supseteq j_0}\) has the finite intersection property.
    Since \(\prod_{i\in\I}X_i\) is compact Hausdorff, \(\bigcap_{k\supseteq j_0}C_k\neq\emptyset\).
    Choose \(y\) in this intersection.
    Then \(y\in\lim(\vb S)\) and \(\pi_{j_0}^{\infty}(y)=x_{j_0}\),
    so \(x_{j_0}\in \pi_{j_0}^{\infty}(\lim(\vb S))\).
  \end{proof}
  From these two lemmas it follows immediately that:
\end{textAtEnd}
\begin{corollary}[Faithfulness] \label{cor:compactness:faithfulness}
  Given a set  \(\I\) with cofinal  \(\JJ\subseteq \Pfin(\I)\), \(\Lim\) restricts to a faithful poset-enriched discard functor 
  \(\Lim\colon \Obs_{\I, \JJ}(\rel(\FSet)) \rightarrow \rel(\Haus)\).
\end{corollary}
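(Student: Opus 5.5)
The plan is to view $\rel(\FSet)$ as a full subcategory of $\rel(\Haus)$, with finite sets carrying the discrete topology, so that every relation is closed. The functor of \Cref{thm:compactness:compactness} then restricts to $\Obs_{\I,\JJ}(\rel(\FSet))$. Here I would appeal to \Cref{prop:obs:functorial}: the inclusion $\rel(\FSet)\rightarrowtail\rel(\Haus)$ is a discard bicategory functor that reflects the order (both orders are subset inclusion) and is full. Hence it induces a full and faithful functor on behaviours, and composing with $\Lim$ yields a poset-enriched discard functor. It remains to prove faithfulness, i.e.\ that $\Lim([\vb S])=\Lim([\vb T])$ implies $[\vb S]=[\vb T]$.

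For faithfulness I would first normalise representatives. The hom-sets of $\rel(\FSet)$ are finite powersets under inclusion, hence Artinian. $\rel(\FSet)$ is compact closed, and by \Cref{lem:approx:duality} it satisfies $1_X\subseteq\codiscard_X\circ\discard_X$ and $\discard_X\circ\codiscard_X\subseteq 1_I$. Cofinal $\JJ$ is in particular upward-directed. So \Cref{lem:compactness:stabilisation} applies, and each class $[\vb S]$ contains a unique causal net $\fix(\vb S)$. Since $\Lim$ is well defined on classes, $\Lim([\vb S])=\Lim([\fix(\vb S)])$.

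Now suppose $\Lim([\vb S])=\Lim([\vb T])$, and replace both nets by their causal representatives. After undoing the shuffle isomorphism, $\lim(\fix\vb S)=\lim(\fix\vb T)$ as subsets of $\prod_k X_k\times Y_k$. By \Cref{lem:compactness:stabilized-projection}, applied to the causal nets of relations (viewed as subsets of $\prod_{k\in j}X_k\times Y_k$), we get for every $j\in\JJ$
\[
  \fix(\vb S)_j=\pi^\infty_j\bigl(\lim(\fix\vb S)\bigr)=\pi^\infty_j\bigl(\lim(\fix\vb T)\bigr)=\fix(\vb T)_j .
\]
So the causal representatives coincide contextwise, and therefore $[\vb S]=[\fix\vb S]=[\fix\vb T]=[\vb T]$.

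The main obstacle, already handled by \Cref{lem:compactness:stabilized-projection}, is the surjectivity of the projection $\pi^\infty_j$ onto $S_j$. This needs causality together with a finite-intersection/compactness argument, and that is exactly why the stabilisation step to a causal representative must come first. The only remaining care point is bookkeeping: $\lim$ of a relational net lives in the product of the spaces $X_k\times Y_k$, so the projection lemma must be applied to the objects $\{X_k\times Y_k\}_k$, composed with the shuffle isomorphisms.
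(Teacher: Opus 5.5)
Your proposal is correct and follows the paper's argument. The paper also derives faithfulness from \Cref{lem:compactness:stabilisation}, which normalises each class to its unique causal representative $\fix(\vb S)$, together with \Cref{lem:compactness:stabilized-projection}, which recovers each component as $\pi^\infty_j$ of the limit. Reading the relational nets as subsets of $\prod_{k\in j}X_k\times Y_k$, as you do, is the right bookkeeping: the causality that $\fix$ actually produces, and that the projection lemma uses, is the projection form $\pi^{j'}_j(S_{j'})=S_j$.
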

This follows because all descending chains of subsets of finite sets terminate (see   \Cref{lem:compactness:stabilisation} and \Cref{lem:compactness:stabilized-projection} for a detailed argument). However, because every infinite compact Hausdorff space has an infinite strictly descending chain of subsets, \(\FSet\) is the largest \emph{full} subcategory of \(\Haus\) which admits this faithfulness property.

\section{Linear time-invariant systems}
\label{sec:signalflow}
\pratendSetLocal{category=signalflow, end, restate}

Linear time-invariant (LTI) systems are central objects of study in signal processing \cite{oppenheim1989}, control theory \cite{Willems}, and electrical circuits \cite[\S~3.1.3]{lti}. They have received increased attention in the context of categorical semantics, thanks to the work of Bonchi, Zanasi et al. \cite{Bonchi2015,Bonchi2017,Bonchi2021,Zanasi2015InteractingHopfAlgebras} as well as Baez et al. \cite{baez, BaezFong, networks}.

While their categorical semantics is well-understood from the perspective of the frequency domain (which we will call the \(z\)-domain), particularly with the use of affine relations over fields of polynomials \cite{Bonchi2021, baez}, their time-domain interpretation has long eluded categorical treatment. Different approaches exist to capture the ``executions'' of LTI systems, from the behaviours of Willems \cite{Willems} and Oberst et al. \cite[\S~3.1.1]{lti} to the trajectories of Bonchi et al. \cite{Bonchi2021}, but none is compositional.

In this section, we show that \(\Obs_{\Z}\) and \(\Obs_{\N}\) give a categorical semantics for uninitialised and initialised LTI systems in the time-domain. We relate the \(z\)-domain interpretation of \cite{Bonchi2021, baez} to our time-domain semantics via an oplax feedback functor capturing the inverse \(z\)-transform. We moreover show that our notion of uninitialised behaviour, instantiated in this setting, precisely captures the behaviours of Willems \cite{Willems}. We end by discussing the syntactic implications for the theory of signal flow graphs.

\subsection{The z-domain}\label{sec:signalflow:z-domain}
Fix a field \(\K\). Let \(\FAff_{\K}\) denote the SMC of affine transformations between finite-dimensional vector spaces, and let  \(\FAffRel{\K}\) denote the full subcategory of \(\rel(\FAff_{\K})\) whose objects are finite-dimensional vector spaces.
The category of finite-dimensional affine relations,  \(\FAffRel{\K}\),  is a categorical semantics for time-independent open \(\K\)-linear systems \cite{baez, gaa}. The linear relations are interpreted as global constraints;  whereas  elements of \(\K^n\), regarded as affine displacements, are interpreted as  signals propagating through the system according to these constraints.

Let \(\fps\) denote the ring of formal power series, and \(\fls\) its field of fractions, the formal Laurent series (see \Cref{app:polynomials} for a review of the theory of polynomial-type rings).
Following Oberst et al.~\cite[Eq.~(8.2)\ \& Rem.~8.1.1]{lti}, we identify initialised discrete-time signals in the \emph{\(z\)-domain}\footnote{Oberst et al. refer to this as the algebraic domain \cite[P.~3]{lti}.}  with formal power series \(a(z)=\sum_{t=0}^\infty a_t z^t\), so that the signal at time \(t\) is given by the coefficient \(a_t\), where multiplication by \(z\) corresponds to a discrete time-delay.

By extending scalars along \(\fls \supset \fps\), the delay becomes invertible, so that by forgetting initial conditions,
LTI systems in the z-domain may be conveniently represented in terms of  \(\fls\)-linear subspaces of the vector space
\(\fls^{\In}\times\fls^{\Out}\) \cite[Cor.\ \& Def.~8.1.4]{lti}.

Just as linear time-independent systems can be composed in  \(\FAffRel {\K}\), following Baez and Erbele \cite{baez} as well as Bonchi et al.\ \cite[\S~6.2]{Bonchi2017}, uninitialised LTI systems in the \(z\)-domain can be composed in \(\FLinRel{\fls}\). In order to also account for the signals, we moreover need to include affine displacements, which altogether yields the following feedback category:
\begin{lemma}\label{lem:signalflow:z-feedback}
  There is faithful symmetric monoidal functor  \(\FLinRel{\fls} \rightarrowtail \FAffRel{\fls}\)
  and a monoidal natural transformation  \(\partial\colon 1_{\FLinRel{\fls}} \Longrightarrow 1_{\FLinRel{\fls}}\); \(\partial_V \coloneqq \{(\vb v, z\cdot \vb v) \mid \vb v\in V\}\),
  making \(\FAffRel{\fls}\)   into a feedback category.
\end{lemma}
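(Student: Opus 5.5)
The plan is to apply \Cref{prop:state:feedback-compact-closed} with \(\CC \coloneqq \FAffRel{\fls}\), \(\DD \coloneqq \FLinRel{\fls}\), \(\Gamma\) the inclusion, and \(\Delta \coloneqq 1_{\DD}\). The feedback operator is then produced by the construction in the ``delay implies feedback'' direction of that proof, using the monoidal natural transformation \(\partial\).

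\emph{Step one: the inclusion.} Every \(\fls\)-linear relation \(R\subseteq V\times W\) between finite-dimensional spaces is in particular an affine subspace, and composition and tensor of relations are computed set-theoretically in both categories. Hence the inclusion \(\FLinRel{\fls}\rightarrowtail\FAffRel{\fls}\) is a strict symmetric monoidal functor. It is faithful because it is injective on hom-sets. I also need that \(\FAffRel{\fls}\) is compact closed. Since it is the full subcategory of the Cartesian bicategory \(\rel(\FAff_{\fls})\) on finite-dimensional spaces, the cups and caps \(\cup_V=\{(\ast,(v,v))\}\) and \(\cap_V\) are linear relations between such spaces, with \(V^\ast=V\).

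\emph{Step two: \(\partial\) is a monoidal natural transformation.} Each \(\partial_V\) is a linear relation, indeed the graph of the linear automorphism \(v\mapsto z v\). Monoidality follows from \(\partial_{V\oplus W}=\partial_V\oplus\partial_W\) and \(\partial_0=1\). For naturality, take a linear relation \(R\subseteq V\times W\). Since \(R\) is an \(\fls\)-subspace, \((v,w)\in R\) implies \((zv,zw)\in R\), and conversely via \(z^{-1}\). Then
\[
  \partial_W\circ R=\{(v,zw)\mid (v,w)\in R\}=\{(v,w')\mid (zv,w')\in R\}=R\circ\partial_V .
\]
This is the one place where linearity is essential: the square fails for general affine relations, because a translate \(R+c\) is not closed under scaling by \(z\). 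This is exactly why the guard is the linear subcategory.

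\emph{Step three: build \(\fbk\) and check the axioms.} For \(f\colon S\oplus X\to S\oplus Y\) in \(\FAffRel{\fls}\) with \(S\) linear, I define \(\fbk_S(f)\) as the cup/cap trace of \(f\circ(\partial_S\otimes 1_X)\), exactly as in \Cref{prop:state:feedback-compact-closed}. Concretely,
\[
  \fbk_S(f)=\{(x,y)\mid \exists s\colon (zs,x,s,y)\in f\}.
\]
Tightening, vanishing, joining and strength are the standard trace identities in a compact closed category, with \(\partial_{S\otimes T}=\partial_S\otimes\partial_T\) and \(\partial_I=1\). Sliding along \(h\in\FLinRel{\fls}(S,T)\) reduces to naturality of \(\partial\) at \(h\) (with \(\Delta h=h\)) plus a yanking of \(h\) around the loop. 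Sliding is the main obstacle. Although the cited proof only sketches it, it is a direct existential computation: both sides equal
\[
  \{(x,y)\mid \exists s,t\colon (s,t)\in h,\ (zs,x,t',y)\in f\ \dots\}
\]
once \(z\) is commuted past \(h\). I would verify it set-theoretically to avoid relying on the sketch.
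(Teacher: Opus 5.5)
Your proposal is correct and is essentially the argument the paper leaves implicit. The paper states this lemma without proof; the intended justification is the delay-implies-feedback direction of \Cref{prop:state:feedback-compact-closed}, with \(\Gamma\) the inclusion and \(\Delta=1\), where naturality of \(\partial\) holds exactly because linear relations are \(\fls\)-subspaces.

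One small correction concerns your displayed set for sliding, which is ill-typed as written. For \(f\colon T\oplus X\to S\oplus Y\) and \(h\colon S\to T\), both sides equal \(\{(x,y)\mid \exists s,t\colon (s,t)\in h,\ (zt,x,s,y)\in f\}\). To get this from the right-hand side, rewrite it using \((zs,t)\in h \iff (s,z^{-1}t)\in h\).
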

Note that in order to compose LTI systems in the \(z\)-domain using relational composition we are forced to use formal Laurent series, rather than formal power series, since  \( \FAffRel{\fps}\simeq\FAffRel{\fls}\) \cite[Thm.~4.14]{Zanasi2015InteractingHopfAlgebras}. Therefore, this relational categorical semantics for LTI systems in the \(z\)-domain cannot accommodate for initial conditions.

\subsection{The time-domain}\label{sec:signalflow:time-domain}
In the \emph{time-domain}, initialised discrete-time signals are represented by \(\N\)-indexed sequences in \(\K\) \cite[\S~5.2.2]{lti}.
Accordingly, the  \emph{behaviours} or \emph{trajectories} of initialised LTI systems can be represented  by systems of linear constant coefficient \emph{difference equations}, the discrete analogue of differential equations \cite[\S~3.1.1]{lti}.  The solutions to these difference equations are certain \(\poly\)-linear subspaces of
\( (\K^\N)^{\In}\times(\K^\N)^{\Out}\).

The \textbf{inverse \(\mathbf z\)-transform} is the \(\poly\)-linear transformation \(\ZZ^{\circ}_n \colon\fps^n\to (\K^\N)^n\) which extracts coefficients, transforming signals in the \(z\)-domain into the time-domain \cite[Rem.~8.1.1]{lti}.

In order to remain compatible with the categorical semantics for the \(z\)-domain, we abandon initial conditions, and define the delay \(\Delta\colon\K^\Z\to \K^\Z\)  by \((\Delta x)_{k+1}\coloneqq x_{k}\). In this setting we can now define inverse \(z\)-transform as the \(\poly[\K][z,1/z]\)-linear transformation  \(\ZZ^{\circ}_n \colon\fls^n\to (\K^\Z)^n\) which extracts coefficients.

Note that the inverse \(z\)-transform preserves the time-delay so that 
\(\ZZ^{\circ}_n(z\cdot f(z))=\Delta(\ZZ^{\circ}_n(f(z)))\).

For each context \({[\ell,r]} \in \J\) postcomposing  this \(\ZZ^{\circ}_n\) with the projection \mbox{\((\K^\Z)^n\to \prod_{k=\ell}^r \K^n\)}, yields the \(\K\)-linear map
\(\zeta_{[\ell,r]}^n \colon \fls^n \to  \prod_{k=\ell}^r  \K^n\).
This allows us to categorify the inverse \(z\)-transform, making  \(\Obs_\Z(\FAffRel \K)\) a categorical semantics for uninitialised linear time-dependent  systems in the time-domain.
\begin{theoremE}\label{thm:signalflow:z}
  There is an oplax normal feedback functor
  \(\ZZ^*\colon\FAffRel{\fls}\to \Obs_\Z(\FAffRel \K)\)
  given on objects by \(n\mapsto n^\Z\) and on morphisms
  \(S\colon n\to m\) by
  \[
    \ZZ^*(S)\;\coloneqq\;
    \Big[\Big\{
      \bigl\{\bigl(\zeta^n_{j}(\vb x),\zeta^m_{j}(\vb y)\bigr)\ \big|\ (\vb x,\vb y)\in S\bigr\}
      \Big\}_{j\in\J}\Big];
  \]
  where oplax means that
    \(\ZZ^*(T\circ S)\subseteq \ZZ^*(T)\circ \ZZ^*(S)\)
    and normal means that
    \(\ZZ^*(1_n) = 1_{n^\Z}\).
\end{theoremE}
\begin{proofE}
  We need to show that \(\ZZ^*\):
  \begin{enumerate}
    \claimitem{signalflow:z:mono}{produces a monotone net of affine relations;}
    \claimitem{signalflow:z:oplax}{is oplax;}
    \claimitem{signalflow:z:fbk}{is normal and preserves the relevant structure.}
  \end{enumerate}

  \claimparagraph{signalflow:z:mono}{Well-definedness.}
  We show that \(\ZZ^*\) produces a monotone net of affine relations.
  First note that for all \(j\in \J\),  \((\ZZ^*(S))_{j}\) is an \(\K\)-affine subspace; moreover for all  \(n,m\in \N\), \(\zeta^n_{j}\) and \(\zeta^m_{j}\) are \(\K\)-linear maps.
  Furthermore, each \((\ZZ^*(S))_{j}\) is defined via the projection of the same infinite affine subspace, so \(\ZZ^*(S)\) produces a causal, and thus monotone, net of affine relations.

  \claimparagraph{signalflow:z:oplax}{Oplaxness.}
  Now we prove that \(\ZZ^*\) is an oplax functor.
  Fix \(j\in\J\).
  If \(T\circ S=\varnothing\) then \((\ZZ^*(T\circ S))_{j}=\varnothing\) by definition, so the inclusion is vacuous.

  Otherwise, take some \((u,w)\in(\ZZ^*(T\circ S))_{j}\). By definition of \((\ZZ^*(T\circ S))_{j}\), there exists some \((\vb x,\vb z)\in T\circ S\) such that
  \(
    u=\zeta^n_{j}(\vb x)
  \) and \(
    w=\zeta^p_{j}(\vb z).
  \)
  Since \((\vb x,\vb z)\in T\circ S\), by definition of relational composition in \(\FAffRel{\fls}\) there exists some \(\vb y\in\fls^m\) such that
  \(
    (\vb x,\vb y)\in S
  \) and \(
    (\vb y,\vb z)\in T.
  \)
  Since \((\vb x,\vb y)\in S\), by definition of \((\ZZ^*(S))_{j}\) we have
  \[
    \bigl(u,\zeta^m_{j}(\vb y)\bigr)
    =
    \bigl(\zeta^n_{j}(\vb x),\zeta^m_{j}(\vb y)\bigr)
    \in (\ZZ^*(S))_{j}\, .
  \]
  Similarly
  \[
    \bigl(\zeta^m_{j}(\vb y),w\bigr)
    =
    \bigl(\zeta^m_{j}(\vb y),\zeta^p_{j}(\vb z)\bigr)
    \in (\ZZ^*(T))_{j}\, .
  \]
  Thus, by definition of relational composition in \(\FAffRel{\K}\),
  \[
    (u,w)\in (\ZZ^*(T))_{j}\circ (\ZZ^*(S))_{j}\, .
  \]
  Therefore, \(\ZZ^*\) is oplax.

  \claimparagraph{signalflow:z:fbk}{Symmetric monoidal and feedback structure.}
  \(\ZZ^*\) preserves the identity, as well as all of the other symmetric monoidal and feedback structure, since they are defined contextwise.
\end{proofE}

By replacing \(\Z\) with \(\N\) we can therefore interpret  \(\Obs_\N(\FAffRel \K)\) as a categorical semantics for \emph{initialised} linear time-dependent systems in the time-domain.

\paragraph{Compactness.}
Given a  field \(\K\), let \(\Aff_{\K}\) denote the SMC of affine transformations between (possibly infinite) vector spaces and the empty set; moreover, let \(\AffRel {\K}\) denote the full subcategory of \(\rel(\Aff_{\K})\) whose objects are vector spaces.
Using \Cref{thm:compactness:compactness}, and observing that only the \emph{finite} fields are compact, it follows that:
\begin{propositionE}\label{prop:signalflow:lin-compact}
  Given a finite field \(\F_q\), and a cofinal set  \(\JJ\subseteq \Pfin(\I)\) there is a faithful poset-enriched discard functor
  \(\Lim\colon\Obs_{\I, \JJ}(\FAffRel {\F_q}) \to  \AffRel {\F_q}\).
\end{propositionE}
\begin{proofE}
  By \Cref{thm:compactness:compactness} there is a poset-enriched discard functor \(\Lim\colon \Obs_{\I, \JJ}(\rel(\Haus)) \longrightarrow \rel(\Haus)\).
  Since \(\F_q\) is a finite field, any vector space over \(\F_q\) is compact and Hausdorff with respect to the product topology of discrete topologies.
  Moreover since the projections are linear maps, inverse images of linear maps are linear maps, and intersections of linear subspaces are linear subspaces. Therefore, it follows that there is an induced poset-enriched discard functor \( \Obs_{\I,\JJ}(\AffRel {\F_q}) \to \AffRel {\F_q}\).
  Moreover, because the objects of \(\FAffRel {\F_q}\) are compact, by \Cref{cor:compactness:faithfulness}, it follows that this functor is moreover faithful.
\end{proofE}

Therefore, observing that \(\J \subseteq \Pfin(\Z)\) and \(\{[0,t] \ | \ t\in \N \} \subseteq \Pfin(\N)\) are both cofinal it follows that for finite fields \(\F_q\), there are faithful poset-enriched discard functors
\[\Obs_\N(\FAffRel {\F_q}) \to  \AffRel {\F_q}\qquad\text{and}\qquad \Obs_\Z(\FAffRel {\F_q}) \to  \AffRel {\F_q},\]
witnessing a ``global'' behavioural semantics both for initialised and uninitialised linear time-dependent systems.
Moreover, the induced oplax functor \(\Lim\circ \ZZ^*\) is given on morphisms by the \(\F_q\)-linear extension of
\(\ZZ^{\circ}_n\colon\fls[\F_q]^n\to (\F_q^\Z)^n\).

Following Willems, the behaviour of an LTI system is a \(\K\)-subspace of trajectories cut out by a finite system of constant-coefficient difference equations~\cite{Willems}.  Note that the work of Willems was one of the original motivations of Zanasi for developing \(\FAffRel{\rat[\F_q]}\) as a categorical semantics for LTI systems \cite[P.~12]{Zanasi2015InteractingHopfAlgebras}.

The name ``behaviour'' is no coincidence. Over a finite field, which is the setting where the compactness theorem holds,  the induced oplax functor
\(\FAffRel{\rat[\F_q]}\to \AffRel{\F_q}\)
can be seen to categorify this notion in our setting.

We take the formal definition of behaviours from the book of Oberst et al.~\cite[\S~3.1.1]{lti}.
\begin{definition}[Affine behaviour]\label{def:signalflow:affine-behaviour}
  Let \(S\subseteq \rat^\ell\) be a finite-dimensional \(\rat\)-linear subspace.
  Choose \(r\in\N\) and a \(\rat\)-linear map \(A\colon\rat^\ell\to \rat^r\) with \(S=\ker A\).
  Pick some nonzero \(d(z)\in\poly[\K][z,1/z]\) clearing denominators and set
  \(B(z)\coloneqq d(z) A\in \poly[\K][z,1/z]^{ r\times \ell}\).
  Evaluate entrywise to obtain the \(\poly[\K][z,1/z]\)-linear map
  \(
    B^{\rho^\ell}\colon ((\K^\Z)^\ell,\rho^\ell)\longrightarrow ((\K^\Z)^r,\rho^r)
  \)
  given by
  \(
    B^{\rho^\ell}(u)\coloneqq B(\Delta) u
  \).
  Moreover, define \(\beh(S) \coloneqq \ker (B^{\rho^\ell}) \subseteq (\K^\Z)^\ell.\)

  Let \(\gamma\colon\rat\to \fls\) denote the geometric series expansion and define \(\ZZ^{\gamma}_\ell(-) \coloneqq \ZZ^{\circ}_\ell(\gamma(-))\).
  We extend \(\beh\) to \(\rat\)-affine subspaces \(S\subseteq \rat^\ell\) by:
  for \(S=\varnothing\) define \(\beh(S)\coloneqq \varnothing\); otherwise choosing some \(\vb a\in S\), with \(R\coloneqq S-\vb a\):
  \(\beh(S)\coloneqq \beh(R)+\ZZ^{\gamma}_\ell(\vb a)\subseteq (\K^\Z)^\ell\).
\end{definition}
Note that Oberst et al.~\cite[\S~3.1.1]{lti} define this only over linear subspaces, but the extension to affine relations as above is immediately obvious.
Indeed the oplax functor \(\Lim\circ \ZZ^*\) can be seen to categorify this notion of behaviour, albeit only after restricting to the trajectories which are rational.
\begin{theoremE}\label{thm:signalflow:lti-compactness}
  Let \(\F_q\) be a finite field and \(S\subseteq \rat[\F_q]^\ell\) be a finite-dimensional \(\rat[\F_q]\)-affine subspace. Then
  \(
    (\Lim\circ \ZZ^*)(S)
    =
    \beh(S)\cap \ZZ^{\gamma}_\ell(\rat[\F_q]^\ell)
  \).
\end{theoremE}
\begin{proofE}
  We prove the claim for affine subspaces \(S\subseteq \rat[\F_q]^\ell\).  The analogous statement for affine relations
  \(R\subseteq \rat[\F_q]^n\times \rat[\F_q]^m\) follows immediately by viewing \(R\) as an affine subspace of
  \(\rat[\F_q]^{n+m}\) and transporting along the canonical shuffling isomorphism
  \[
    \shuffle_{n,m}\colon  (\F_q^n)^\Z \times (\F_q^m)^\Z \cong  (\F_q^{n+m})^\Z.
  \]

  First, take \(S=\varnothing\), then \((\Lim\circ \ZZ^*)(S)=\varnothing=\beh(S)\), so the claim follows immediately.
  
  Assume otherwise, taking some \(\vb a\in S\). Define \(R\coloneqq S-\vb a\). Therefore,  \(R\subseteq \rat[\F_q]^\ell\) is an \(\rat[\F_q]\)-linear subspace and \(S=R+\vb a\).
  By definition of \(\beh\) on affine subspaces,
  \[
    \beh(S)=\beh(R)+\ZZ^{\gamma}_\ell(\vb a).
  \]
  Since \(\Lim\circ \ZZ^*\) is given on morphisms by \(\ZZ^{\gamma}\), and \(\ZZ^{\gamma}\) is \(\poly[\F_q][z,1/z]\)-linear, we also have
  \[
    (\Lim\circ \ZZ^*)(S)=(\Lim\circ \ZZ^*)(R)+\ZZ^{\gamma}_\ell(\vb a),
  \]
  and \(\ZZ^{\gamma}_\ell(\vb a)\in \ZZ^{\gamma}_\ell(\rat[\F_q]^\ell)\).
  Thus it suffices to prove the following for the linear subspace \(R\):
  \begin{enumerate}
    \claimitem{signalflow:lti:zlinear}{for all \(u\in \fls[\F_q]^\ell\), \(\ZZ^{\gamma}_r(B(z)u)=B(\Delta)\ZZ^{\gamma}_\ell(u)\);}
    \claimitem{signalflow:lti:fwd}{\((\Lim\circ \ZZ^*)(R)\subseteq \beh(R)\);}
    \claimitem{signalflow:lti:bwd}{\(\beh(R)\cap \ZZ^{\gamma}_\ell(\rat[\F_q]^\ell)\subseteq (\Lim\circ \ZZ^*)(R)\).}
  \end{enumerate}

  Choose  some \(r\in\N\) and an \(\rat[\F_q]\)-linear map \(A\colon\rat[\F_q]^\ell\to \rat[\F_q]^r\) with \(R=\ker A\).
  Pick \(d(z)\in \poly[\F_q][z,1/z]\setminus\{0\}\) clearing denominators and set
  \(B(z)\coloneqq d(z)A\in \poly[\F_q][z,1/z]^{r\times \ell}\).
  Then \(\beh(R)=\ker(B(\Delta))\subseteq (\F_q^\Z)^\ell\).

  \claimparagraph{signalflow:lti:zlinear}{Linearity.}
  Since \(\ZZ^{\gamma}\) is \(\poly[\F_q][z,1/z]\)-linear, for each entry \(b(z)\in \poly[\F_q][z,1/z]\) and \(x\in \fls[\F_q]\) we have
  \(
  \ZZ^{\gamma}_1(b(z)x)=b(\Delta)\ZZ^{\gamma}_1(x),
  \)
  and applying this entrywise to \(B(z)\) and componentwise to vectors yields \(\ZZ^{\gamma}_r(B(z)u)=B(\Delta)\ZZ^{\gamma}_\ell(u)\).

  \claimparagraph{signalflow:lti:fwd}{Forward inclusion.}
  Take \(u\in (\Lim\circ \ZZ^*)(R)\). Then there  exists some \(x\in R\) such that \(u=\ZZ^{\gamma}_\ell(x)\).
  Since \(x\in R=\ker(B(z))\), we have that \(B(z)x=0\); thus,
  \[
  0=\ZZ^{\gamma}_r(B(z)x)=B(\Delta)\ZZ^{\gamma}_\ell(x)=B(\Delta)u,
  \]
  therefore \(u\in \ker(B(\Delta))=\beh(R)\).

  \claimparagraph{signalflow:lti:bwd}{Backward inclusion.}
  Take \(u\in \beh(R)\cap \ZZ^{\gamma}_\ell(\rat[\F_q]^\ell)\).
  Choose \(x\in \rat[\F_q]^\ell\) with \(u=\ZZ^{\gamma}_\ell(x)\).
  Since \(u\in \beh(R)=\ker(B(\Delta))\), we have that \(B(\Delta)u=0\), hence
  \(
  0=B(\Delta)\ZZ^{\gamma}_\ell(x)=\ZZ^{\gamma}_r(B(z)x).
  \)
  By injectivity of \(\ZZ^{\gamma}_r\), \(B(z)x=0\), so \(x\in \ker(B(z))=R\).
  Thus \(u=\ZZ^{\gamma}_\ell(x)\in (\Lim\circ \ZZ^*)(R)\).

  Therefore \((\Lim\circ \ZZ^*)(R)=\beh(R)\cap \ZZ^{\gamma}_\ell(\rat[\F_q]^\ell)\).
\end{proofE}

\subsection{Signal flow graphs}\label{sec:signalflow:sfg}
Linear \emph{signal flow graphs} are a syntax for linear time-invariant systems; see for example, the seminal work Shannon~\cite{shannon}, and later of Mason~\cite{Mason1953, Mason1955}.
In this section, we discuss how the \(z\)-domain and time-domain semantics for uninitialised LTI systems produce different equational theories for signal flow graphs; followed briefly by a discussion concerning the syntactic  implications of our initialised time-domain semantics.

First we recall the following equational theory for linear \emph{time-independent} systems:
\begin{theorem}[{Bonchi et al.~\cite{gaa}}]\label{thm:signalflow:gaa}
  Given a field \(\K\), the SMC \(\FAffRel{\K}\) is presented by the following generators, for all \(a \in \K\):
  \begin{align*}
    \Big\llbracket\ \begin{tikzpicture}
	\begin{pgfonlayer}{nodelayer}
		\node [style=none] (19) at (-0.25, 0.5) {};
		\node [style=none] (20) at (1.25, 0.5) {};
		\node [style=none] (21) at (1.25, -0.5) {};
		\node [style=none] (22) at (-0.25, -0.5) {};
		\node [style=bn] (23) at (0.5, 0) {};
		\node [style=none] (24) at (1.25, 0.375) {};
		\node [style=none] (25) at (1.25, -0.375) {};
		\node [style=none] (26) at (-0.25, 0.375) {};
		\node [style=none] (27) at (-0.25, -0.375) {};
		\node [style=none] (28) at (-0.025, 0) {\(\vdotss\)};
		\node [style=none] (37) at (1, 0) {\(\vdotss\)};
	\end{pgfonlayer}
	\begin{pgfonlayer}{edgelayer}
		\draw [style=background] (21.center)
			 to (20.center)
			 to (19.center)
			 to (22.center)
			 to cycle;
		\draw [in=60, out=180] (24.center) to (23);
		\draw [in=180, out=-60] (23) to (25.center);
		\draw [in=360, out=120] (23) to (26.center);
		\draw [in=-120, out=0] (27.center) to (23);
	\end{pgfonlayer}
\end{tikzpicture}\ \Big\rrbracket
    &\coloneqq \{ (\vb v, \vb w) \ | \ \forall j, k\colon v_j = w_k\} ,\\
    \Big\llbracket\ \begin{tikzpicture}
	\begin{pgfonlayer}{nodelayer}
		\node [style=none] (19) at (-0.25, 0.5) {};
		\node [style=none] (20) at (1.25, 0.5) {};
		\node [style=none] (21) at (1.25, -0.5) {};
		\node [style=none] (22) at (-0.25, -0.5) {};
		\node [style=wn] (23) at (0.5, 0) {};
		\node [style=none] (24) at (1.25, 0.375) {};
		\node [style=none] (25) at (1.25, -0.375) {};
		\node [style=none] (26) at (-0.25, 0.375) {};
		\node [style=none] (27) at (-0.25, -0.375) {};
		\node [style=none] (28) at (-0.025, 0) {\(\vdotss\)};
		\node [style=none] (37) at (1, 0) {\(\vdotss\)};
		\node [style=wphase] (38) at (0.5, 0.75) {\(a\)};
	\end{pgfonlayer}
	\begin{pgfonlayer}{edgelayer}
		\draw [style=background] (21.center)
			 to (20.center)
			 to (19.center)
			 to (22.center)
			 to cycle;
		\draw [in=60, out=180] (24.center) to (23);
		\draw [in=180, out=-60] (23) to (25.center);
		\draw [in=360, out=120] (23) to (26.center);
		\draw [in=-120, out=0] (27.center) to (23);
		\draw [style=dwphase] (38) to (23);
	\end{pgfonlayer}
\end{tikzpicture}\ \Big\rrbracket
    &\coloneqq \{(\vb v, \vb w) \ |\ a+ \Sigma\ v_j = \Sigma\ w_j  \},\\
    \Big\llbracket \ \begin{tikzpicture}
	\begin{pgfonlayer}{nodelayer}
		\node [style=none] (0) at (-0.75, 0.375) {};
		\node [style=none] (1) at (0.75, 0.375) {};
		\node [style=none] (2) at (0.75, -0.375) {};
		\node [style=none] (3) at (-0.75, -0.375) {};
		\node [style=rmat] (4) at (0, 0) {\(a\)};
		\node [style=none] (5) at (-0.75, 0) {};
		\node [style=none] (6) at (0.75, 0) {};
	\end{pgfonlayer}
	\begin{pgfonlayer}{edgelayer}
		\draw [style=background] (2.center)
			 to (1.center)
			 to (0.center)
			 to (3.center)
			 to cycle;
		\draw (5.center) to (6.center);
	\end{pgfonlayer}
\end{tikzpicture} \ \Big\rrbracket
    &\coloneqq \{ (b, ab) \ | \ \forall b \in \K \}.
  \end{align*}
  modulo the equations given in \Cref{app:gaa}.
\end{theorem}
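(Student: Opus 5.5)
The plan is to prove \Cref{thm:signalflow:gaa} in the usual two halves for a presentation by generators and relations. First, \emph{soundness}: the interpretation is a strict symmetric monoidal functor out of the free SMC on the generators, and every equation of \Cref{app:gaa} holds in \(\FAffRel{\K}\). Second, \emph{completeness}: two diagrams with the same affine relation are equal modulo the equations. Fullness also has to be checked, namely that every affine relation \(V\subseteq \K^n\times\K^m\) is the denotation of some diagram. This is immediate once normal forms are available.

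Soundness is a finite check, equation by equation. Each generator denotes an affine subspace, and relational composition and products of affine subspaces are again affine. So it is enough to compute both sides of each axiom as subsets of \(\K^n\times\K^m\).

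For completeness I will reduce to the linear case. I will take as given the theorem of Zanasi that the sub-presentation without the constant (the phases \(a\) set to \(0\)) presents \(\FLinRel{\K}\) as the interacting Hopf algebra \(\mathsf{IH}_{\K}\). Every affine relation \(V\) is either empty, or has the form \(V=L+\vb c\) with \(L\) linear and \(\vb c\) a point. I would prove a normal-form lemma. Using the spider fusion equations for the white node with phase, every diagram can be rewritten as \(D\circ(1\otimes \mathbf 1)\), where \(D\) is a diagram in the linear fragment and \(\mathbf 1\colon 0\to 1\) is the single constant state (phase \(1\), arity \((0,1)\)). This works because every phase \(a\) can be split off as \(a\cdot\mathbf 1\) fed into a zero-phase white node, and parallel copies of \(\mathbf 1\) can be merged with the black copy node. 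Semantically, \(D\) denotes a linear relation \(L'\subseteq \K^{1+n}\times\K^m\), and the whole diagram denotes \(\{(\vb v,\vb w)\mid (1,\vb v,\vb w)\in L'\}\).

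The main obstacle is the case analysis on this slice. If the slice is nonempty, then two linear \(D,D'\) give the same slice at \(1\) iff they agree after a canonical transformation: either \(L'\) or \(L'\) intersected with the hyperplane through \(1\) together with the \(0\)-slice. I expect to show that \(D\circ\mathbf 1\) depends only on \(L'\cap(\{1\}\times\cdots)\) plus its linear span. I would use an affine-specific axiom saying that discarding \(\mathbf 1\) is the identity, so the constant is total, and then apply linear completeness to a modified \(L'\). If the slice is empty, then \(L'\) forces the first coordinate to be \(0\). I would prove that the diagram then contains the subdiagram \(\mathbf 1\) followed by the \(0\)-costate, and that the axiom ``\(1=0\) annihilates'' rewrites it to the canonical empty relation disconnected from everything. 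This uses the axiom that this scalar absorbs any diagram. The delicate part is making the bookkeeping of the constant wire rigorous in both cases, so this is where I would spend the effort, following Bonchi et al.~\cite{gaa}.
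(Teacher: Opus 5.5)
The paper gives no proof of this theorem and simply imports it from Bonchi et al.~\cite{gaa}; your outline is essentially their argument. It proves soundness by direct computation, then proves completeness by rewriting every diagram into a single constant state feeding a phase-free (interacting Hopf) diagram, and splits on whether the slice at \(1\) is empty, using the copy/deletion axioms for the constant and the annihilation axiom in the empty case. The step you flag as the main obstacle is actually easy: if a linear subspace \(L'\subseteq \K^{1+n+m}\) has nonempty slice \(A\) at first coordinate \(1\), then \(L'\) must equal the span of \(\{1\}\times A\) (its slice at \(\lambda\neq 0\) is \(\lambda A\), and its slice at \(0\) is the direction space of \(A\)), so two linear diagrams with the same nonempty slice already denote the same linear relation, and no ``canonical transformation'' is needed before applying linear completeness.
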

These generators impose linear constraints on systems; where the label \(a\) on the white node is interpreted as a signal propagating through this linear system.

The question of finding an equational theory for LTI systems, and axiomatizing signal flow graphs, can therefore be reduced to the question of how to add a notion of   \emph{time} to the equational theory of \(\FAffRel{\K}\).

The simplest way to answer this question is to freely add a discrete-time delay generator which commutes only with the  constraints, which by \Cref{prop:state:feedback-compact-closed} is captured by the feedback category \(\St_{\FLinRel \K}(\FAffRel \K)\).
Concretely, this is presented by adding the generator \(\begin{tikzpicture}
	\begin{pgfonlayer}{nodelayer}
		\node [style=none] (34) at (5.25, 0.375) {};
		\node [style=none] (35) at (6.75, 0.375) {};
		\node [style=none] (36) at (6.75, -0.375) {};
		\node [style=none] (37) at (5.25, -0.375) {};
		\node [style=none] (187) at (5.25, 0) {};
		\node [style=rmat] (197) at (6, 0) {\(\partial\)};
		\node [style=none] (198) at (6.75, 0) {};
	\end{pgfonlayer}
	\begin{pgfonlayer}{edgelayer}
		\draw [style=background] (36.center)
			 to (35.center)
			 to (34.center)
			 to (37.center)
			 to cycle;
		\draw (187.center) to (198.center);
	\end{pgfonlayer}
\end{tikzpicture}\) to the equational theory of \(\FAffRel \K\) modulo the equations:
\[\begin{tikzpicture}
	\begin{pgfonlayer}{nodelayer}
		\node [style=none] (75) at (8.25, -1) {};
		\node [style=none] (76) at (10.5, -1) {};
		\node [style=none] (77) at (10.5, 1.25) {};
		\node [style=none] (78) at (8.25, 1.25) {};
		\node [style=bn] (79) at (9.75, 0) {};
		\node [style=none] (82) at (8.25, 0.5) {};
		\node [style=none] (83) at (8.25, -0.5) {};
		\node [style=none] (84) at (10.5, 0.5) {};
		\node [style=none] (85) at (10.5, -0.5) {};
		\node [style=none] (86) at (9, 0.5) {};
		\node [style=none] (87) at (9, -0.5) {};
		\node [style=none] (88) at (10.25, 0) {\(\vdotss\)};
		\node [style=none] (89) at (8.25, 0.5) {};
		\node [style=none] (90) at (8.5, 0) {\(\vdotss\)};
		\node [style=rmat] (92) at (9, 0.5) {\(\partial\)};
		\node [style=rmat] (93) at (9, -0.5) {\(\partial\)};
		\node [style=none] (94) at (11.25, -1) {};
		\node [style=none] (95) at (13.5, -1) {};
		\node [style=none] (96) at (13.5, 1.25) {};
		\node [style=none] (97) at (11.25, 1.25) {};
		\node [style=bn] (98) at (12, 0) {};
		\node [style=none] (99) at (13.5, 0.5) {};
		\node [style=none] (100) at (13.5, -0.5) {};
		\node [style=none] (103) at (12.75, 0.5) {};
		\node [style=none] (104) at (12.75, -0.5) {};
		\node [style=none] (105) at (11.25, 0.5) {};
		\node [style=none] (106) at (11.25, -0.5) {};
		\node [style=none] (107) at (13.25, 0) {\(\vdotss\)};
		\node [style=none] (109) at (11.5, 0) {\(\vdotss\)};
		\node [style=rmat] (111) at (12.75, 0.5) {\(\partial\)};
		\node [style=rmat] (112) at (12.75, -0.5) {\(\partial\)};
		\node [style=none] (113) at (10.875, 0) {\(=\)};
		\node [style=none] (114) at (14.25, -1) {};
		\node [style=none] (115) at (16.5, -1) {};
		\node [style=none] (116) at (16.5, 1.25) {};
		\node [style=none] (117) at (14.25, 1.25) {};
		\node [style=wn] (118) at (15.75, 0) {};
		\node [style=none] (121) at (14.25, 0.5) {};
		\node [style=none] (122) at (14.25, -0.5) {};
		\node [style=none] (123) at (16.5, 0.5) {};
		\node [style=none] (124) at (16.5, -0.5) {};
		\node [style=none] (125) at (15, 0.5) {};
		\node [style=none] (126) at (15, -0.5) {};
		\node [style=none] (127) at (16.25, 0) {\(\vdotss\)};
		\node [style=none] (128) at (14.25, 0.5) {};
		\node [style=none] (129) at (14.5, 0) {\(\vdotss\)};
		\node [style=rmat] (131) at (15, 0.5) {\(\partial\)};
		\node [style=rmat] (132) at (15, -0.5) {\(\partial\)};
		\node [style=none] (133) at (17.25, -1) {};
		\node [style=none] (134) at (19.5, -1) {};
		\node [style=none] (135) at (19.5, 1.25) {};
		\node [style=none] (136) at (17.25, 1.25) {};
		\node [style=wn] (137) at (18, 0) {};
		\node [style=none] (138) at (19.5, 0.5) {};
		\node [style=none] (139) at (19.5, -0.5) {};
		\node [style=none] (142) at (18.75, 0.5) {};
		\node [style=none] (143) at (18.75, -0.5) {};
		\node [style=none] (144) at (17.25, 0.5) {};
		\node [style=none] (145) at (17.25, -0.5) {};
		\node [style=none] (146) at (19.25, 0) {\(\vdotss\)};
		\node [style=none] (148) at (17.5, 0) {\(\vdotss\)};
		\node [style=rmat] (150) at (18.75, 0.5) {\(\partial\)};
		\node [style=rmat] (151) at (18.75, -0.5) {\(\partial\)};
		\node [style=none] (152) at (16.875, 0) {\(=\)};
		\node [style=none] (153) at (13.75, 0) {,};
		\node [style=none] (154) at (19.75, 0) {,};
		\node [style=none] (155) at (20.25, -1) {};
		\node [style=none] (156) at (22, -1) {};
		\node [style=none] (157) at (22, 1.25) {};
		\node [style=none] (158) at (20.25, 1.25) {};
		\node [style=none] (173) at (22.75, -1) {};
		\node [style=none] (174) at (24.5, -1) {};
		\node [style=none] (175) at (24.5, 1.25) {};
		\node [style=none] (176) at (22.75, 1.25) {};
		\node [style=none] (177) at (20.25, 0) {};
		\node [style=none] (178) at (22, 0) {};
		\node [style=none] (179) at (22.75, 0) {};
		\node [style=none] (180) at (24.5, 0) {};
		\node [style=rmat] (181) at (20.75, 0) {\(\partial\)};
		\node [style=rmat] (182) at (21.5, 0) {\(a\)};
		\node [style=rmat] (183) at (23.25, 0) {\(a\)};
		\node [style=rmat] (184) at (24, 0) {\(\partial\)};
		\node [style=none] (185) at (22.375, 0) {\(=\)};
	\end{pgfonlayer}
	\begin{pgfonlayer}{edgelayer}
		\draw [style=background] (77.center)
			 to (76.center)
			 to (75.center)
			 to (78.center)
			 to cycle;
		\draw (79)
			 to [in=360, out=135] (86.center)
			 to (82.center);
		\draw (79)
			 to [in=0, out=-135] (87.center)
			 to (83.center);
		\draw [in=180, out=45] (79) to (84.center);
		\draw [in=180, out=-45] (79) to (85.center);
		\draw [style=background] (96.center)
			 to (95.center)
			 to (94.center)
			 to (97.center)
			 to cycle;
		\draw [in=360, out=135] (98) to (105.center);
		\draw [in=0, out=-135] (98) to (106.center);
		\draw (98)
			 to [in=180, out=45] (103.center)
			 to [in=180, out=0] (99.center);
		\draw (98)
			 to [in=180, out=-45] (104.center)
			 to (100.center);
		\draw [style=background] (116.center)
			 to (115.center)
			 to (114.center)
			 to (117.center)
			 to cycle;
		\draw (118)
			 to [in=360, out=135] (125.center)
			 to (121.center);
		\draw (118)
			 to [in=0, out=-135] (126.center)
			 to (122.center);
		\draw [in=180, out=45] (118) to (123.center);
		\draw [in=180, out=-45] (118) to (124.center);
		\draw [style=background] (135.center)
			 to (134.center)
			 to (133.center)
			 to (136.center)
			 to cycle;
		\draw [in=360, out=135] (137) to (144.center);
		\draw [in=0, out=-135] (137) to (145.center);
		\draw (137)
			 to [in=180, out=45] (142.center)
			 to [in=180, out=0] (138.center);
		\draw (137)
			 to [in=180, out=-45] (143.center)
			 to (139.center);
		\draw [style=background] (157.center)
			 to (156.center)
			 to (155.center)
			 to (158.center)
			 to cycle;
		\draw [style=background] (175.center)
			 to (174.center)
			 to (173.center)
			 to (176.center)
			 to cycle;
		\draw (177.center) to (178.center);
		\draw (179.center) to (180.center);
	\end{pgfonlayer}
\end{tikzpicture}\,.\]
Where the white spider with no label denotes a white spider with label \(0\).

Let \(\rat\) denote the field of fractions of the polynomial ring \(\poly\) (i.e.~the rational functions over \(\K\)).
Just as \(\FAffRel{\fls}\) is a feedback category, so is \(\FAffRel{\rat}\). Moreover,
\begin{lemmaE}\label{lem:signalflow:ext}
  There is a full feedback functor
  \(\St_{\FLinRel{\K}}(\FAffRel \K) \twoheadrightarrow \FAffRel{\rat}\),
  given by identifying the delays and extending scalars along  \(\rat/\K\), and extending scalars along the geometric series embedding \(\fls/\rat\) induces a feedback functor \(\Ext\colon\St_{\FLinRel{\K}}(\FAffRel \K) \to \FAffRel{\fls}\).
\end{lemmaE}
\begin{proofE}
  By \Cref{prop:state:feedback-compact-closed},  \(\St_{\FLinRel{\K}}(\FAffRel{\K})\) is presented by freely adding a ``discrete-time delay'' automorphism to \(\FAffRel{\K}\) which is natural on the subcategory \(\FLinRel{\K}\).
\end{proofE}

Identifying the respective delay maps,  so that \(\begin{tikzpicture}
	\begin{pgfonlayer}{nodelayer}
		\node [style=none] (34) at (5.25, 0.375) {};
		\node [style=none] (35) at (6.75, 0.375) {};
		\node [style=none] (36) at (6.75, -0.375) {};
		\node [style=none] (37) at (5.25, -0.375) {};
		\node [style=none] (187) at (5.25, 0) {};
		\node [style=rmat] (197) at (6, 0) {\(\partial^n\)};
		\node [style=none] (198) at (6.75, 0) {};
	\end{pgfonlayer}
	\begin{pgfonlayer}{edgelayer}
		\draw [style=background] (36.center)
			 to (35.center)
			 to (34.center)
			 to (37.center)
			 to cycle;
		\draw (187.center) to (198.center);
	\end{pgfonlayer}
\end{tikzpicture}=\begin{tikzpicture}
	\begin{pgfonlayer}{nodelayer}
		\node [style=none] (34) at (5.25, 0.375) {};
		\node [style=none] (35) at (6.75, 0.375) {};
		\node [style=none] (36) at (6.75, -0.375) {};
		\node [style=none] (37) at (5.25, -0.375) {};
		\node [style=none] (187) at (5.25, 0) {};
		\node [style=rmat] (197) at (6, 0) {\(z^n\)};
		\node [style=none] (198) at (6.75, 0) {};
	\end{pgfonlayer}
	\begin{pgfonlayer}{edgelayer}
		\draw [style=background] (36.center)
			 to (35.center)
			 to (34.center)
			 to (37.center)
			 to cycle;
		\draw (187.center) to (198.center);
	\end{pgfonlayer}
\end{tikzpicture}\), the equational theory of  \(\FAffRel{\rat}\) is given by quotienting the equational theory of \(\St_{\FLinRel{\K}}(\FAffRel \K)\) by the following equations for all nonzero \(\vb{a} \in \poly\):
\begin{equation}\label{eq:signalflow:right-inverse}
  \begin{tikzpicture}
	\begin{pgfonlayer}{nodelayer}
		\node [style=none] (34) at (5.25, 1.125) {};
		\node [style=none] (37) at (5.25, -1) {};
		\node [style=none] (187) at (5.25, -0.125) {};
		\node [style=none] (188) at (5.875, -0.125) {};
		\node [style=none] (189) at (9.125, -0.125) {};
		\node [style=rmat] (191) at (6.875, 0.75) {\(a_0\)};
		\node [style=rmat] (192) at (6.875, -0.625) {\(a_n\)};
		\node [style=none] (193) at (6.625, 0.25) {};
		\node [style=none] (194) at (6.625, -0.625) {};
		\node [style=none] (195) at (8.375, 0.25) {};
		\node [style=none] (196) at (8.375, -0.625) {};
		\node [style=rmat] (197) at (8.125, -0.625) {\(\partial^n\)};
		\node [style=bn] (198) at (5.875, -0.125) {};
		\node [style=wn] (199) at (9.125, -0.125) {};
		\node [style=none] (200) at (5.875, -0.125) {};
		\node [style=none] (201) at (9.125, -0.125) {};
		\node [style=none] (202) at (6.625, 0.75) {};
		\node [style=none] (203) at (8.375, 0.75) {};
		\node [style=rmat] (204) at (6.875, 0.25) {\(a_1\)};
		\node [style=rmat] (205) at (8.125, 0.25) {\(\partial\)};
		\node [style=none] (206) at (7.5, -0.125) {\(\vdotss\)};
		\node [style=none] (207) at (13.75, 1.125) {};
		\node [style=none] (210) at (13.75, -1) {};
		\node [style=none] (211) at (13.75, -0.125) {};
		\node [style=none] (212) at (13.125, -0.125) {};
		\node [style=none] (213) at (9.875, -0.125) {};
		\node [style=lmat] (215) at (12.125, 0.75) {\(a_0\)};
		\node [style=lmat] (216) at (12.125, -0.625) {\(a_n\)};
		\node [style=none] (217) at (12.375, 0.25) {};
		\node [style=none] (218) at (12.375, -0.625) {};
		\node [style=none] (219) at (10.625, 0.25) {};
		\node [style=none] (220) at (10.625, -0.625) {};
		\node [style=lmat] (221) at (10.875, -0.625) {\(\partial^n\)};
		\node [style=bn] (222) at (13.125, -0.125) {};
		\node [style=wn] (223) at (9.875, -0.125) {};
		\node [style=none] (224) at (13.125, -0.125) {};
		\node [style=none] (225) at (9.875, -0.125) {};
		\node [style=none] (226) at (12.375, 0.75) {};
		\node [style=none] (227) at (10.625, 0.75) {};
		\node [style=lmat] (228) at (12.125, 0.25) {\(a_1\)};
		\node [style=lmat] (229) at (10.875, 0.25) {\(\partial\)};
		\node [style=none] (230) at (11.5, -0.125) {\(\vdotss\)};
		\node [style=none] (231) at (14.25, 0) {\(=\)};
		\node [style=none] (232) at (14.75, 1.125) {};
		\node [style=none] (233) at (14.75, -1) {};
		\node [style=none] (234) at (14.75, -0.125) {};
		\node [style=none] (253) at (16.25, 1.125) {};
		\node [style=none] (254) at (16.25, -1) {};
		\node [style=none] (255) at (16.25, -0.125) {};
	\end{pgfonlayer}
	\begin{pgfonlayer}{edgelayer}
		\draw [style=background] (34.center)
			 to (207.center)
			 to (210.center)
			 to [in=360, out=180] (37.center)
			 to cycle;
		\draw [style=background] (34.center) to (37.center);
		\draw (188.center) to (187.center);
		\draw (188.center)
			 to [in=180, out=60] (193.center)
			 to (195.center)
			 to [in=120, out=0] (189.center);
		\draw (189.center)
			 to [in=0, out=-120] (196.center)
			 to (194.center)
			 to [in=-60, out=180] (188.center);
		\draw (200.center)
			 to [in=180, out=75] (202.center)
			 to (203.center)
			 to [in=105, out=0] (201.center);
		\draw [style=background] (207.center) to (210.center);
		\draw (212.center) to (211.center);
		\draw (212.center)
			 to [in=0, out=120] (217.center)
			 to (219.center)
			 to [in=60, out=180] (213.center);
		\draw (213.center)
			 to [in=180, out=-60] (220.center)
			 to (218.center)
			 to [in=-120, out=0] (212.center);
		\draw (224.center)
			 to [in=0, out=105] (226.center)
			 to (227.center)
			 to [in=75, out=180] (225.center);
		\draw (201.center) to (225.center);
		\draw [style=background, in=360, out=180] (210.center) to (37.center);
		\draw [style=background] (34.center) to (207.center);
		\draw [style=background] (232.center)
			 to (253.center)
			 to (254.center)
			 to [in=360, out=180] (233.center)
			 to cycle;
		\draw [style=background] (232.center) to (233.center);
		\draw [style=background] (253.center) to (254.center);
		\draw [style=background, in=360, out=180] (254.center) to (233.center);
		\draw [style=background] (232.center) to (253.center);
		\draw (234.center) to (255.center);
	\end{pgfonlayer}
\end{tikzpicture},
\end{equation}
\begin{equation}\label{eq:signalflow:left-inverse}
  \begin{tikzpicture}
	\begin{pgfonlayer}{nodelayer}
		\node [style=none] (34) at (5.25, 1.125) {};
		\node [style=none] (37) at (5.25, -1) {};
		\node [style=none] (187) at (5.25, -0.125) {};
		\node [style=none] (188) at (5.875, -0.125) {};
		\node [style=none] (189) at (9.125, -0.125) {};
		\node [style=lmat] (191) at (6.875, 0.75) {\(a_0\)};
		\node [style=lmat] (192) at (6.875, -0.625) {\(a_n\)};
		\node [style=none] (193) at (6.625, 0.25) {};
		\node [style=none] (194) at (6.625, -0.625) {};
		\node [style=none] (195) at (8.375, 0.25) {};
		\node [style=none] (196) at (8.375, -0.625) {};
		\node [style=lmat] (197) at (8.125, -0.625) {\(\partial^n\)};
		\node [style=wn] (198) at (5.875, -0.125) {};
		\node [style=bn] (199) at (9.125, -0.125) {};
		\node [style=none] (200) at (5.875, -0.125) {};
		\node [style=none] (201) at (9.125, -0.125) {};
		\node [style=none] (202) at (6.625, 0.75) {};
		\node [style=none] (203) at (8.375, 0.75) {};
		\node [style=lmat] (204) at (6.875, 0.25) {\(a_1\)};
		\node [style=lmat] (205) at (8.125, 0.25) {\(\partial\)};
		\node [style=none] (206) at (7.5, -0.125) {\(\vdotss\)};
		\node [style=none] (207) at (13.75, 1.125) {};
		\node [style=none] (210) at (13.75, -1) {};
		\node [style=none] (211) at (13.75, -0.125) {};
		\node [style=none] (212) at (13.125, -0.125) {};
		\node [style=none] (213) at (9.875, -0.125) {};
		\node [style=rmat] (215) at (12.125, 0.75) {\(a_0\)};
		\node [style=rmat] (216) at (12.125, -0.625) {\(a_n\)};
		\node [style=none] (217) at (12.375, 0.25) {};
		\node [style=none] (218) at (12.375, -0.625) {};
		\node [style=none] (219) at (10.625, 0.25) {};
		\node [style=none] (220) at (10.625, -0.625) {};
		\node [style=rmat] (221) at (10.875, -0.625) {\(\partial^n\)};
		\node [style=wn] (222) at (13.125, -0.125) {};
		\node [style=bn] (223) at (9.875, -0.125) {};
		\node [style=none] (224) at (13.125, -0.125) {};
		\node [style=none] (225) at (9.875, -0.125) {};
		\node [style=none] (226) at (12.375, 0.75) {};
		\node [style=none] (227) at (10.625, 0.75) {};
		\node [style=rmat] (228) at (12.125, 0.25) {\(a_1\)};
		\node [style=rmat] (229) at (10.875, 0.25) {\(\partial\)};
		\node [style=none] (230) at (11.5, -0.125) {\(\vdotss\)};
		\node [style=none] (231) at (14.25, 0) {\(=\)};
		\node [style=none] (232) at (14.75, 1.125) {};
		\node [style=none] (233) at (14.75, -1) {};
		\node [style=none] (234) at (14.75, -0.125) {};
		\node [style=none] (253) at (16.25, 1.125) {};
		\node [style=none] (254) at (16.25, -1) {};
		\node [style=none] (255) at (16.25, -0.125) {};
	\end{pgfonlayer}
	\begin{pgfonlayer}{edgelayer}
		\draw [style=background] (34.center)
			 to (207.center)
			 to (210.center)
			 to [in=0, out=180] (37.center)
			 to cycle;
		\draw [style=background] (34.center) to (37.center);
		\draw (188.center) to (187.center);
		\draw (188.center)
			 to [in=180, out=60] (193.center)
			 to (195.center)
			 to [in=120, out=0] (189.center);
		\draw (189.center)
			 to [in=0, out=-120] (196.center)
			 to (194.center)
			 to [in=-60, out=180] (188.center);
		\draw (200.center)
			 to [in=180, out=75] (202.center)
			 to (203.center)
			 to [in=105, out=0] (201.center);
		\draw [style=background] (207.center) to (210.center);
		\draw (212.center) to (211.center);
		\draw (212.center)
			 to [in=0, out=120] (217.center)
			 to (219.center)
			 to [in=60, out=180] (213.center);
		\draw (213.center)
			 to [in=180, out=-60] (220.center)
			 to (218.center)
			 to [in=-120, out=0] (212.center);
		\draw (224.center)
			 to [in=0, out=105] (226.center)
			 to (227.center)
			 to [in=75, out=180] (225.center);
		\draw (201.center) to (225.center);
		\draw [style=background, in=0, out=180] (210.center) to (37.center);
		\draw [style=background] (34.center) to (207.center);
		\draw [style=background] (232.center)
			 to (253.center)
			 to (254.center)
			 to [in=360, out=180] (233.center)
			 to cycle;
		\draw [style=background] (232.center) to (233.center);
		\draw [style=background] (253.center) to (254.center);
		\draw [style=background, in=360, out=180] (254.center) to (233.center);
		\draw [style=background] (232.center) to (253.center);
		\draw (234.center) to (255.center);
	\end{pgfonlayer}
\end{tikzpicture}.
\end{equation}
Where the left facing arrows denote the transposes of the right facing arrows. These equations mean that the relation which multiplies its inputs by a nonzero polynomial is inverse to its converse relation.

We can also interpret the ``free'' syntax for LTI systems in our categorical semantics for the time-domain so that the affine translations are interpreted as impulses at time step 0 and the linear subspaces are invariant with respect to time:
\begin{lemmaE}\label{lem:signalflow:unroll}
  Let \(\eta\colon\FAffRel \K \to \St_{\FLinRel \K}(\FAffRel \K)\) be the canonical functor.
  Define \(J\colon\FAffRel \K \to \Obs_\Z(\FAffRel \K)\) by \(J(V)_t\coloneqq V\) for all \(t\in\Z\), and on morphisms by \(J(\emptyset) \coloneqq [\{\emptyset\}_{j \in \J}]\) and
  \[
    J(A+\vb a)
    \coloneqq
    \Big[ \Big\{
      \smbigoplus_{t=\ell}^{r}
      \begin{matrix}
        A+\vb a & \text{if } t=0,\\
        A & \text{if } t\neq 0
  \end{matrix}
    \Big\}_{{[\ell,r]} \in \J}  \Big].
  \]
  There is a unique feedback functor
  \(
    \Unroll\colon\St_{\FLinRel \K}(\FAffRel \K)\to \Obs_\Z(\FAffRel \K)
  \)
  satisfying
  \(\Unroll\!\circ \eta = J\).
\end{lemmaE}
\begin{proofE}
  We proceed by using the universal property of the state construction (see \Cref{prop:state:state-free}).

  By inspection, \(J\) produces monotone nets and preserves identities.
  The only nontrivial part is to show that composition is preserved.
  Take \(f\in \FAffRel \K(\K^n,\K^m)\) and \(g\in \FAffRel \K(\K^m,\K^p)\).
  If \(f\) or \(g\) is empty then
  \(J(g\circ f) = [\{\emptyset\}_{j\in \J}]  =  J(g) \circ J(f)\).
  Similarly if \(g\circ f\) is empty, then \(J(g\circ f)\) is empty at every context \(j\in \J\), hence by \Cref{prop:obs:zero},
  \(J(g\circ f) = [\{\emptyset\}_{j\in \J}]  =  J(g) \circ J(f)\).
  Otherwise, suppose that we have nonempty affine relations
  \(A+\vb{a}\in \FAffRel \K(\K^n,\K^m)\) and \(B+\vb{b}\in \FAffRel \K(\K^m,\K^p)\),
  with nonempty composite \(C+\vb{c}\in \FAffRel \K(\K^n,\K^p)\). Then:
  {\allowdisplaybreaks
  \begin{align*}
    &J(C+\vb{c})\\
    =&
    \left[
      \left\{
        \bigoplus_{k=\ell}^r
        \left(
          \begin{matrix}
            C+\vb{c} & \text{ if } k=0\\
            C        & \text{ if } k\neq 0
          \end{matrix}
        \right)
      \right\}_{{[\ell,r]}\in \J}
    \right]
    \\
    =&
    \left[
      \left\{
        \bigoplus_{k=\ell}^r
        \left(
          \begin{matrix}
            C+\vb{c} & \text{ if } k=0\\
            B\circ A & \text{ if } k\neq 0
          \end{matrix}
        \right)
        \right\}_{{[\ell,r]}\in \J}
    \right]
    \\
    =&
    \left[
      \left\{
        \bigoplus_{k=\ell}^r
        \left(
          \begin{matrix}
            B+\vb{b} & \text{ if } k=0\\
            B        & \text{ if } k\neq 0
          \end{matrix}
        \right)
        \circ
        \left(
          \begin{matrix}
            A+\vb{a} & \text{ if } k=0\\
            A        & \text{ if } k\neq 0
          \end{matrix}
        \right)
      \right\}_{{[\ell,r]}\in\J}
    \right]
    \\
    =&
    \left[
      \left\{
        \bigoplus_{k=\ell}^r
        \left(
          \begin{matrix}
            B+\vb{b} & \text{ if } k=0\\
            B        & \text{ if } k\neq 0
          \end{matrix}
        \right)
      \right\}_{{[\ell,r]} \in \J}
    \right]\\
    &\;\;\circ
    \left[
      \left\{
        \bigoplus_{k=\ell}^r
        \left(
          \begin{matrix}
            A+\vb{a} & \text{ if } k=0\\
            A        & \text{ if } k\neq 0
          \end{matrix}
        \right)
      \right\}_{{[\ell,r]} \in \J}
    \right]
    \\
    =&
    J(B+\vb{b}) \circ J(A+\vb{a}).
  \end{align*}}

  Let \(J|_{\FLinRel\K}\) denote the restriction of \(J\) along the inclusion \(\FLinRel\K \rightarrowtail \FAffRel\K\).
  For shift-invariance, take \(L\in \FLinRel\K(V,W)\) and \([\ell,r]\in\J\); then
  \begin{align*}
  \bigl(\shftZ(J|_{\FLinRel\K}(L))\bigr)_{[\ell,r]}
  =(J(L))_{[\ell-1,r-1]}
  =\smbigoplus_{t=\ell-1}^{r-1}L
  =\smbigoplus_{t=\ell}^{r}L
  =(J(L))_{[\ell,r]}.
  \end{align*}

  Now apply \Cref{prop:state:state-free} with
  \begin{gather*}
  \Gamma\colon\FLinRel\K \rightarrowtail \FAffRel\K,\qquad
  \Delta = 1_{\FLinRel\K},\qquad
  \Gamma' = 1_{\Obs_\Z(\FAffRel\K)},\\
  \Delta'=\shftZ,\qquad
  F=J,\qquad
  F|_{\FLinRel\K}=J|_{\FLinRel\K}.
  \end{gather*}
  The condition \(\Gamma'\circ F|_{\FLinRel\K}=F\circ \Gamma\) is the statement that \(J|_{\FLinRel\K}\) is the restriction of \(J\),
  and the condition \(\Delta'\circ F|_{\FLinRel\K}=F|_{\FLinRel\K}\circ \Delta\) is exactly shift-invariance of \(J|_{\FLinRel\K}\).
  Hence there exists a unique feedback functor
  \[
  \Unroll\colon\St_{\FLinRel \K}(\FAffRel \K)\to \Obs_\Z(\FAffRel \K)
  \]
  such that \(\Unroll\!\circ \eta = J\). By construction, \(\Unroll\) agrees with \(J\) as defined.
\end{proofE}
It is natural to ask if invertibility equations \Cref{eq:signalflow:right-inverse,eq:signalflow:left-inverse} also hold in \(\Unroll(\St_{\FLinRel \K}(\FAffRel \K))\); this is only partly true:

\begin{propositionE}\label{prop:signalflow:lax-monomial}
  Given any polynomial \(z+a \in \poly\) for \(a\neq 0\): 
  \[
  \Unroll\Bigl(\begin{tikzpicture}
	\begin{pgfonlayer}{nodelayer}
		\node [style=none] (34) at (5.25, 1.125) {};
		\node [style=none] (37) at (5.25, -1) {};
		\node [style=none] (187) at (5.25, -0.125) {};
		\node [style=none] (188) at (5.875, -0.125) {};
		\node [style=none] (189) at (9.125, -0.125) {};
		\node [style=lmat] (191) at (6.875, 0.75) {\(a_0\)};
		\node [style=lmat] (192) at (6.875, -0.625) {\(a_n\)};
		\node [style=none] (193) at (6.625, 0.25) {};
		\node [style=none] (194) at (6.625, -0.625) {};
		\node [style=none] (195) at (8.375, 0.25) {};
		\node [style=none] (196) at (8.375, -0.625) {};
		\node [style=lmat] (197) at (8.125, -0.625) {\(\partial^n\)};
		\node [style=wn] (198) at (5.875, -0.125) {};
		\node [style=bn] (199) at (9.125, -0.125) {};
		\node [style=none] (200) at (5.875, -0.125) {};
		\node [style=none] (201) at (9.125, -0.125) {};
		\node [style=none] (202) at (6.625, 0.75) {};
		\node [style=none] (203) at (8.375, 0.75) {};
		\node [style=lmat] (204) at (6.875, 0.25) {\(a_1\)};
		\node [style=lmat] (205) at (8.125, 0.25) {\(\partial\)};
		\node [style=none] (206) at (7.5, -0.125) {\(\vdotss\)};
		\node [style=none] (207) at (13.75, 1.125) {};
		\node [style=none] (210) at (13.75, -1) {};
		\node [style=none] (211) at (13.75, -0.125) {};
		\node [style=none] (212) at (13.125, -0.125) {};
		\node [style=none] (213) at (9.875, -0.125) {};
		\node [style=rmat] (215) at (12.125, 0.75) {\(a_0\)};
		\node [style=rmat] (216) at (12.125, -0.625) {\(a_n\)};
		\node [style=none] (217) at (12.375, 0.25) {};
		\node [style=none] (218) at (12.375, -0.625) {};
		\node [style=none] (219) at (10.625, 0.25) {};
		\node [style=none] (220) at (10.625, -0.625) {};
		\node [style=rmat] (221) at (10.875, -0.625) {\(\partial^n\)};
		\node [style=wn] (222) at (13.125, -0.125) {};
		\node [style=bn] (223) at (9.875, -0.125) {};
		\node [style=none] (224) at (13.125, -0.125) {};
		\node [style=none] (225) at (9.875, -0.125) {};
		\node [style=none] (226) at (12.375, 0.75) {};
		\node [style=none] (227) at (10.625, 0.75) {};
		\node [style=rmat] (228) at (12.125, 0.25) {\(a_1\)};
		\node [style=rmat] (229) at (10.875, 0.25) {\(\partial\)};
		\node [style=none] (230) at (11.5, -0.125) {\(\vdotss\)};
	\end{pgfonlayer}
	\begin{pgfonlayer}{edgelayer}
		\draw [style=background] (34.center)
			 to (207.center)
			 to (210.center)
			 to [in=0, out=180] (37.center)
			 to cycle;
		\draw [style=background] (34.center) to (37.center);
		\draw (188.center) to (187.center);
		\draw (188.center)
			 to [in=180, out=60] (193.center)
			 to (195.center)
			 to [in=120, out=0] (189.center);
		\draw (189.center)
			 to [in=0, out=-120] (196.center)
			 to (194.center)
			 to [in=-60, out=180] (188.center);
		\draw (200.center)
			 to [in=180, out=75] (202.center)
			 to (203.center)
			 to [in=105, out=0] (201.center);
		\draw [style=background] (207.center) to (210.center);
		\draw (212.center) to (211.center);
		\draw (212.center)
			 to [in=0, out=120] (217.center)
			 to (219.center)
			 to [in=60, out=180] (213.center);
		\draw (213.center)
			 to [in=180, out=-60] (220.center)
			 to (218.center)
			 to [in=-120, out=0] (212.center);
		\draw (224.center)
			 to [in=0, out=105] (226.center)
			 to (227.center)
			 to [in=75, out=180] (225.center);
		\draw (201.center) to (225.center);
		\draw [style=background, in=0, out=180] (210.center) to (37.center);
		\draw [style=background] (34.center) to (207.center);
	\end{pgfonlayer}
\end{tikzpicture}\Bigr)
  =
  \Unroll\Bigl(\begin{tikzpicture}
	\begin{pgfonlayer}{nodelayer}
		\node [style=none] (232) at (14.75, 1.125) {};
		\node [style=none] (233) at (14.75, -1) {};
		\node [style=none] (234) at (14.75, -0.125) {};
		\node [style=none] (253) at (16.25, 1.125) {};
		\node [style=none] (254) at (16.25, -1) {};
		\node [style=none] (255) at (16.25, -0.125) {};
	\end{pgfonlayer}
	\begin{pgfonlayer}{edgelayer}
		\draw [style=background] (232.center)
			 to (253.center)
			 to (254.center)
			 to [in=360, out=180] (233.center)
			 to cycle;
		\draw [style=background] (232.center) to (233.center);
		\draw [style=background] (253.center) to (254.center);
		\draw [style=background, in=360, out=180] (254.center) to (233.center);
		\draw [style=background] (232.center) to (253.center);
		\draw (234.center) to (255.center);
	\end{pgfonlayer}
\end{tikzpicture}\Bigr)
  \subsetneq
  \Unroll\Bigl(\begin{tikzpicture}
	\begin{pgfonlayer}{nodelayer}
		\node [style=none] (34) at (5.25, 1.125) {};
		\node [style=none] (37) at (5.25, -1) {};
		\node [style=none] (187) at (5.25, -0.125) {};
		\node [style=none] (188) at (5.875, -0.125) {};
		\node [style=none] (189) at (9.125, -0.125) {};
		\node [style=rmat] (191) at (6.875, 0.75) {\(a_0\)};
		\node [style=rmat] (192) at (6.875, -0.625) {\(a_n\)};
		\node [style=none] (193) at (6.625, 0.25) {};
		\node [style=none] (194) at (6.625, -0.625) {};
		\node [style=none] (195) at (8.375, 0.25) {};
		\node [style=none] (196) at (8.375, -0.625) {};
		\node [style=rmat] (197) at (8.125, -0.625) {\(\partial^n\)};
		\node [style=bn] (198) at (5.875, -0.125) {};
		\node [style=wn] (199) at (9.125, -0.125) {};
		\node [style=none] (200) at (5.875, -0.125) {};
		\node [style=none] (201) at (9.125, -0.125) {};
		\node [style=none] (202) at (6.625, 0.75) {};
		\node [style=none] (203) at (8.375, 0.75) {};
		\node [style=rmat] (204) at (6.875, 0.25) {\(a_1\)};
		\node [style=rmat] (205) at (8.125, 0.25) {\(\partial\)};
		\node [style=none] (206) at (7.5, -0.125) {\(\vdotss\)};
		\node [style=none] (207) at (13.75, 1.125) {};
		\node [style=none] (210) at (13.75, -1) {};
		\node [style=none] (211) at (13.75, -0.125) {};
		\node [style=none] (212) at (13.125, -0.125) {};
		\node [style=none] (213) at (9.875, -0.125) {};
		\node [style=lmat] (215) at (12.125, 0.75) {\(a_0\)};
		\node [style=lmat] (216) at (12.125, -0.625) {\(a_n\)};
		\node [style=none] (217) at (12.375, 0.25) {};
		\node [style=none] (218) at (12.375, -0.625) {};
		\node [style=none] (219) at (10.625, 0.25) {};
		\node [style=none] (220) at (10.625, -0.625) {};
		\node [style=lmat] (221) at (10.875, -0.625) {\(\partial^n\)};
		\node [style=bn] (222) at (13.125, -0.125) {};
		\node [style=wn] (223) at (9.875, -0.125) {};
		\node [style=none] (224) at (13.125, -0.125) {};
		\node [style=none] (225) at (9.875, -0.125) {};
		\node [style=none] (226) at (12.375, 0.75) {};
		\node [style=none] (227) at (10.625, 0.75) {};
		\node [style=lmat] (228) at (12.125, 0.25) {\(a_1\)};
		\node [style=lmat] (229) at (10.875, 0.25) {\(\partial\)};
		\node [style=none] (230) at (11.5, -0.125) {\(\vdotss\)};
	\end{pgfonlayer}
	\begin{pgfonlayer}{edgelayer}
		\draw [style=background] (34.center)
			 to (207.center)
			 to (210.center)
			 to [in=360, out=180] (37.center)
			 to cycle;
		\draw [style=background] (34.center) to (37.center);
		\draw (188.center) to (187.center);
		\draw (188.center)
			 to [in=180, out=60] (193.center)
			 to (195.center)
			 to [in=120, out=0] (189.center);
		\draw (189.center)
			 to [in=0, out=-120] (196.center)
			 to (194.center)
			 to [in=-60, out=180] (188.center);
		\draw (200.center)
			 to [in=180, out=75] (202.center)
			 to (203.center)
			 to [in=105, out=0] (201.center);
		\draw [style=background] (207.center) to (210.center);
		\draw (212.center) to (211.center);
		\draw (212.center)
			 to [in=0, out=120] (217.center)
			 to (219.center)
			 to [in=60, out=180] (213.center);
		\draw (213.center)
			 to [in=180, out=-60] (220.center)
			 to (218.center)
			 to [in=-120, out=0] (212.center);
		\draw (224.center)
			 to [in=0, out=105] (226.center)
			 to (227.center)
			 to [in=75, out=180] (225.center);
		\draw (201.center) to (225.center);
		\draw [style=background, in=360, out=180] (210.center) to (37.center);
		\draw [style=background] (34.center) to (207.center);
	\end{pgfonlayer}
\end{tikzpicture}\Bigr)
  \]
\end{propositionE}
\begin{proofE}
  Given a  polynomial \(a(z)\in \poly\), define the following morphism in \(\St_{\FLinRel \K}(\FAffRel \K)\):
  \[\mult_{a(z)} \coloneqq \begin{tikzpicture}
	\begin{pgfonlayer}{nodelayer}
		\node [style=none] (34) at (5.25, 1.125) {};
		\node [style=none] (35) at (9.875, 1.125) {};
		\node [style=none] (36) at (9.875, -1) {};
		\node [style=none] (37) at (5.25, -1) {};
		\node [style=none] (187) at (5.25, -0.125) {};
		\node [style=none] (188) at (5.875, -0.125) {};
		\node [style=none] (189) at (9.125, -0.125) {};
		\node [style=none] (190) at (9.875, -0.125) {};
		\node [style=rmat] (191) at (6.875, 0.75) {\(a_0\)};
		\node [style=rmat] (192) at (6.875, -0.625) {\(a_n\)};
		\node [style=none] (193) at (6.625, 0.25) {};
		\node [style=none] (194) at (6.625, -0.625) {};
		\node [style=none] (195) at (8.375, 0.25) {};
		\node [style=none] (196) at (8.375, -0.625) {};
		\node [style=rmat] (197) at (8.125, -0.625) {\(\partial^n\)};
		\node [style=bn] (198) at (5.875, -0.125) {};
		\node [style=wn] (199) at (9.125, -0.125) {};
		\node [style=none] (200) at (5.875, -0.125) {};
		\node [style=none] (201) at (9.125, -0.125) {};
		\node [style=none] (202) at (6.625, 0.75) {};
		\node [style=none] (203) at (8.375, 0.75) {};
		\node [style=rmat] (204) at (6.875, 0.25) {\(a_1\)};
		\node [style=rmat] (205) at (8.125, 0.25) {\(\partial\)};
		\node [style=none] (206) at (7.5, -0.125) {\(\vdotss\)};
	\end{pgfonlayer}
	\begin{pgfonlayer}{edgelayer}
		\draw [style=background] (36.center)
			 to (35.center)
			 to (34.center)
			 to (37.center)
			 to cycle;
		\draw (188.center) to (187.center);
		\draw (189.center) to (190.center);
		\draw (188.center)
			 to [in=180, out=60] (193.center)
			 to (195.center)
			 to [in=120, out=0] (189.center);
		\draw (189.center)
			 to [in=0, out=-120] (196.center)
			 to (194.center)
			 to [in=-60, out=180] (188.center);
		\draw (200.center)
			 to [in=180, out=75] (202.center)
			 to (203.center)
			 to [in=105, out=0] (201.center);
	\end{pgfonlayer}
\end{tikzpicture}\, .\]

  Take some nonzero \(a\in \K\). We need to show that \(\ZZ^*(\FAffRel{\fls})\):
  \begin{enumerate}
    \claimitem{signalflow:lax-monomial:left}{satisfies \Cref{eq:signalflow:left-inverse};}
    \claimitem{signalflow:lax-monomial:right}{only satisfies \Cref{eq:signalflow:right-inverse} oplaxly.}
  \end{enumerate}

  \claimparagraph{signalflow:lax-monomial:left}{Left inverse.}
  First, observe that
  \[\begin{tikzpicture}
	\begin{pgfonlayer}{nodelayer}
		\node [style=none] (0) at (0, 2.625) {};
		\node [style=none] (1) at (4.75, 2.625) {};
		\node [style=none] (2) at (0, 1.625) {};
		\node [style=none] (3) at (4.75, 1.625) {};
		\node [style=bn] (4) at (0.75, 2.625) {};
		\node [style=wn] (5) at (2.25, 1.625) {};
		\node [style=rmat] (6) at (1.5, 2.125) {\(a\)};
		\node [style=bn] (7) at (0.75, 1.625) {};
		\node [style=wn] (8) at (2.25, 0.625) {};
		\node [style=rmat] (9) at (1.5, 1.125) {\(a\)};
		\node [style=wn] (10) at (2.25, 2.625) {};
		\node [style=bn] (11) at (1.75, 3.125) {};
		\node [style=none] (12) at (0, 0.625) {};
		\node [style=none] (13) at (4.75, 0.625) {};
		\node [style=none] (14) at (0, -1.625) {};
		\node [style=none] (15) at (4.75, -1.625) {};
		\node [style=bn] (16) at (1, -1.625) {};
		\node [style=wn] (17) at (2.5, -2.625) {};
		\node [style=rmat] (18) at (1.75, -2.125) {\(a\)};
		\node [style=none] (19) at (0, -2.625) {};
		\node [style=none] (20) at (4.75, -2.625) {};
		\node [style=none] (23) at (1.125, -0.625) {};
		\node [style=wn] (24) at (2.5, -1.625) {};
		\node [style=rmat] (25) at (1.75, -1.125) {\(a\)};
		\node [style=bn] (26) at (0.75, 0.625) {};
		\node [style=none] (27) at (1.125, 0.125) {};
		\node [style=bn] (28) at (3.25, -2.625) {};
		\node [style=bn] (29) at (4, -3.125) {};
		\node [style=none] (30) at (0, 3.625) {};
		\node [style=none] (31) at (4.675, 3.625) {};
		\node [style=none] (32) at (4.75, -3.625) {};
		\node [style=none] (33) at (0, -3.625) {};
		\node [style=none] (34) at (1.125, -0.325) {\(\vdotss\)};
		\node [style=wirelable] (35) at (1.75, -0.325) {\(r-\ell\)};
		\node [style=none] (37) at (8.75, 2.625) {};
		\node [style=none] (38) at (5.75, 1.625) {};
		\node [style=none] (39) at (8.75, 1.625) {};
		\node [style=none] (40) at (5.75, 2.625) {};
		\node [style=wn] (41) at (8, 1.625) {};
		\node [style=rmat] (42) at (7.25, 2.125) {\(a\)};
		\node [style=bn] (43) at (6.5, 1.625) {};
		\node [style=wn] (44) at (8, 0.625) {};
		\node [style=rmat] (45) at (7.25, 1.125) {\(a\)};
		\node [style=bn] (46) at (8, 2.625) {};
		\node [style=none] (48) at (5.75, 0.625) {};
		\node [style=none] (49) at (8.75, 0.625) {};
		\node [style=none] (50) at (5.75, -1.625) {};
		\node [style=none] (51) at (8.75, -1.625) {};
		\node [style=bn] (52) at (6.75, -1.625) {};
		\node [style=wn] (53) at (8.25, -2.625) {};
		\node [style=rmat] (54) at (7.5, -2.125) {\(a\)};
		\node [style=none] (55) at (5.75, -2.625) {};
		\node [style=none] (56) at (8.75, -2.625) {};
		\node [style=none] (57) at (6.875, -0.625) {};
		\node [style=wn] (58) at (8.25, -1.625) {};
		\node [style=rmat] (59) at (7.5, -1.125) {\(a\)};
		\node [style=bn] (60) at (6.5, 0.625) {};
		\node [style=none] (61) at (6.875, 0.125) {};
		\node [style=none] (64) at (5.75, 3.625) {};
		\node [style=none] (65) at (8.675, 3.625) {};
		\node [style=none] (66) at (8.75, -3.625) {};
		\node [style=none] (67) at (5.75, -3.625) {};
		\node [style=none] (68) at (6.875, -0.325) {\(\vdotss\)};
		\node [style=wirelable] (69) at (7.5, -0.325) {\(r-\ell\)};
		\node [style=none] (70) at (5.25, 0) {\(=\)};
		\node [style=none] (71) at (-4, 0) {\(\Unroll(\mult_{z+a})_{[\ell,r]}\)};
		\node [style=none] (72) at (-0.75, 0) {\(=\)};
	\end{pgfonlayer}
	\begin{pgfonlayer}{edgelayer}
		\draw [style=background] (32.center)
			 to (33.center)
			 to (30.center)
			 to (31.center)
			 to cycle;
		\draw (0.center) to (1.center);
		\draw (2.center) to (3.center);
		\draw [in=-180, out=-60, looseness=1.50] (4) to (6);
		\draw [in=135, out=0, looseness=1.50] (6) to (5);
		\draw [in=-180, out=-60, looseness=1.50] (7) to (9);
		\draw [in=135, out=0, looseness=1.50] (9) to (8);
		\draw (11) to (10);
		\draw (12.center) to (13.center);
		\draw (14.center) to (15.center);
		\draw [in=-180, out=-60, looseness=1.50] (16) to (18);
		\draw [in=135, out=0, looseness=1.50] (18) to (17);
		\draw (19.center) to (20.center);
		\draw [in=-180, out=-90] (23.center) to (25);
		\draw [in=135, out=0] (25) to (24);
		\draw [in=90, out=-75] (26) to (27.center);
		\draw [in=180, out=-60] (28) to (29);
		\draw [style=background] (66.center)
			 to (67.center)
			 to (64.center)
			 to (65.center)
			 to cycle;
		\draw (38.center) to (39.center);
		\draw [in=-180, out=0, looseness=1.75] (40.center) to (42);
		\draw [in=135, out=0, looseness=1.50] (42) to (41);
		\draw [in=-180, out=-60, looseness=1.50] (43) to (45);
		\draw [in=135, out=0, looseness=1.50] (45) to (44);
		\draw (48.center) to (49.center);
		\draw (50.center) to (51.center);
		\draw [in=-180, out=-60, looseness=1.50] (52) to (54);
		\draw [in=135, out=0, looseness=1.50] (54) to (53);
		\draw (55.center) to (56.center);
		\draw [in=-180, out=-90] (57.center) to (59);
		\draw [in=135, out=0] (59) to (58);
		\draw [in=90, out=-75] (60) to (61.center);
		\draw (46) to (37.center);
	\end{pgfonlayer}
\end{tikzpicture}\,.\]
  Therefore
  {\allowdisplaybreaks
  \begin{align*}
    &(\Unroll(\mult_{z+a}) \circ \Unroll(\mult_{z+a})^*)_{[\ell,r]}\\
    &=\input{gaa/ladder_coladder_0.tikz}\, .
  \end{align*}
  }
  From which it follows that
  \[ (\Unroll(\mult_{z+a}) \circ \Unroll(\mult_{z+a})^*)_{[\ell,r]} \circ (\top_1\oplus 1_{r-\ell}) = \top_1 \oplus 1_{r-\ell}\,,\]
  so that
  \[\Unroll(\mult_{z+a}) \circ \Unroll(\mult_{z+a})^*= 1_{1^\Z}.\]

  \claimparagraph{signalflow:lax-monomial:right}{Right inverse.}
  On the other hand, fixing \(n \coloneqq r-\ell\),
  {\allowdisplaybreaks
  \begin{align*}
  &(\Unroll(\mult_{z+a})^* \circ \Unroll(\mult_{z+a}))_{[\ell,r]}\\
  &=\input{gaa/coladder_ladder_0.tikz}\\
  &={\overset{n}{\cdots}}=\input{gaa/coladder_ladder_1.tikz}\, .
  \end{align*}
  }

  Notice that the final inclusion is strict, which can be checked semantically as the bottom diagram has half the dimension of the one on top.

  Therefore
  \(\Unroll(\mult_{z+a})^* \circ \Unroll(\mult_{z+a})\supsetneq 1_{1^\Z}\).
\end{proofE}

In particular, this means, at least that for algebraically closed \(\K\),  all nonzero polynomials in \(\poly\) have left inverse in \(\Obs_\Z(\FAffRel {\K})\). However, only polynomials of the form \(az^n\) for nonzero \(a \in \K\) and \(n \in \N\) have right inverses, where for all other nonzero polynomials the right inverse equation holds oplaxly.
Nevertheless, in a \emph{relaxed} sense, both semantics are compatible:
\begin{propositionE}\label{prop:signalflow:lax-diagram}
  For any map \(f\) in \(\St_{\FLinRel \K}(\FAffRel \K)\),  we have that 
   \(\ZZ^*(\Ext(f)) \subseteq \Unroll(f)\).
\end{propositionE}
\begin{proofE}
  A simple diagram chase shows that the diagram commutes strictly on affine displacements, linear relations, and the delay.  Therefore, by oplaxness and normality, the desired inclusion holds.
\end{proofE}

The obstruction to invertibility is a consequence of the fact that the solutions to difference equations are determined only up to initial conditions. This can be resolved by working in \(\Obs_\N(\FAffRel{\K})\) and initialising the relevant monotone sequences to begin with the zero subspace.

\section{Conclusion}
\label{sec:conclusion}

Given any discard bicategory \(\CC\) and an upward-directed set of finite contexts \(\JJ \subseteq \Pfin(\I)\), we introduced the discard bicategory of observational behaviours \(\Obs_{\I,\JJ}(\CC)\), in which two processes have the same behaviour when every finite observation of one can be verified by an observation of the other (\Cref{def:obs:obs} and \Cref{thm:obs:obs}). We showed that our construction gives a functorial interpretation of stateful morphism sequences (\Cref{prop:obs:functorial}), propagates inconsistency globally (\Cref{thm:obsfbk:trace-semantics}), and carries a natural time delay (\Cref{prop:obsfbk:delay-natural-initialised} and \Cref{thm:obsfbk:obs-guarded-feedback}). Moreover, in the relational setting, we proved a compactness theorem (\Cref{thm:compactness:compactness}) gluing each compatible family of finite observations into a single infinite relation, yielding a compositional time-domain semantics for signal flow graphs that recovers Willems' behaviours over finite fields (\Cref{thm:signalflow:lti-compactness}).

Several questions remain open. A natural follow up would be generalising our compactness theorem from relations to probabilistic and quantum systems.
In the total probabilistic setting, this should follow from the existence of Kolmogorov products \cite[Def.~4.1]{fritz_infinite_2020}. For partial probabilistic systems, this would yield a generalisation of the Kolmogorov extension theorem to substochastic processes. 
In the quantum setting, Kretschmann and Werner interpret causal quantum channels with memory as infinite channels between quasi-local algebras \cite{kretschmann_quantum_2005}. It would be interesting to recover this result and generalise it to post-selected quantum processes via a quantised version of our compactness theorem.

A second direction for future work is the study of graphical languages for behaviours. We have shown that several intuitive axioms hold, but completeness is still an open problem. 
For uninitialised linear time-invariant behaviours, we expect that some of the axioms of Bonchi et al.~\cite{Bonchi2021} need to be relaxed to inequations.
In the quantum setting, our framework gives a semantics for the ZX calculus with a \emph{natural} delay. A complete axiomatisation in this setting would complement previous approaches to quantum dataflow programming \cite{delayedtracememory, felice_fusion_2024}. 

\section*{Acknowledgements}
This work has been partially funded by the European Union through the MSCA SE
project QCOMICAL and within the framework of \enquote{Plan France 2030}, under
the research projects EPIQ ANR-22-PETQ-0007 and HQI-R\&D ANR-22-PNCQ-0002.

We thank Vladimir Zamdzhiev for their useful feedback on an earlier draft of this
paper, and anonymous reviewers for their helpful comments and suggestions.

\bibliographystyle{plainurl}
\nocite{KatisSabadiniWalters2000_Published}
\bibliography{style/bibliography}

\appendix
\crefalias{section}{appendix}
\crefalias{subsection}{appendix}

\ifappendixproofs
  \section{Graphical affine algebra}\label{app:gaa}
  The following result is due to Bonchi et al.~\cite{gaa}; however, for convenience, we use the  ``spider'' notation, which is given in \cite{gsa}:
\begin{repthm}{thm:signalflow:gaa}
  Given a field \(\K\), the SMC \(\FAffRel{\K}\) is presented by the generators, for all \(a \in \K\):
  \begin{align*}
    \Big\llbracket\ \begin{tikzpicture}
	\begin{pgfonlayer}{nodelayer}
		\node [style=none] (19) at (-0.25, 0.5) {};
		\node [style=none] (20) at (1.25, 0.5) {};
		\node [style=none] (21) at (1.25, -0.5) {};
		\node [style=none] (22) at (-0.25, -0.5) {};
		\node [style=bn] (23) at (0.5, 0) {};
		\node [style=none] (24) at (1.25, 0.375) {};
		\node [style=none] (25) at (1.25, -0.375) {};
		\node [style=none] (26) at (-0.25, 0.375) {};
		\node [style=none] (27) at (-0.25, -0.375) {};
		\node [style=none] (28) at (-0.025, 0) {\(\vdotss\)};
		\node [style=none] (37) at (1, 0) {\(\vdotss\)};
	\end{pgfonlayer}
	\begin{pgfonlayer}{edgelayer}
		\draw [style=background] (21.center)
			 to (20.center)
			 to (19.center)
			 to (22.center)
			 to cycle;
		\draw [in=60, out=180] (24.center) to (23);
		\draw [in=180, out=-60] (23) to (25.center);
		\draw [in=360, out=120] (23) to (26.center);
		\draw [in=-120, out=0] (27.center) to (23);
	\end{pgfonlayer}
\end{tikzpicture}\ \Big\rrbracket
    &\coloneqq \{ (\vb v, \vb w) \ | \ \forall j, k: v_j = w_k\} ,\\
    \Big\llbracket\ \begin{tikzpicture}
	\begin{pgfonlayer}{nodelayer}
		\node [style=none] (19) at (-0.25, 0.5) {};
		\node [style=none] (20) at (1.25, 0.5) {};
		\node [style=none] (21) at (1.25, -0.5) {};
		\node [style=none] (22) at (-0.25, -0.5) {};
		\node [style=wn] (23) at (0.5, 0) {};
		\node [style=none] (24) at (1.25, 0.375) {};
		\node [style=none] (25) at (1.25, -0.375) {};
		\node [style=none] (26) at (-0.25, 0.375) {};
		\node [style=none] (27) at (-0.25, -0.375) {};
		\node [style=none] (28) at (-0.025, 0) {\(\vdotss\)};
		\node [style=none] (37) at (1, 0) {\(\vdotss\)};
		\node [style=wphase] (38) at (0.5, 0.75) {\(a\)};
	\end{pgfonlayer}
	\begin{pgfonlayer}{edgelayer}
		\draw [style=background] (21.center)
			 to (20.center)
			 to (19.center)
			 to (22.center)
			 to cycle;
		\draw [in=60, out=180] (24.center) to (23);
		\draw [in=180, out=-60] (23) to (25.center);
		\draw [in=360, out=120] (23) to (26.center);
		\draw [in=-120, out=0] (27.center) to (23);
		\draw [style=dwphase] (38) to (23);
	\end{pgfonlayer}
\end{tikzpicture}\ \Big\rrbracket
    &\coloneqq \{(\vb v, \vb w) \ |\ a+ \Sigma\ v_j = \Sigma\ w_j  \},\\
    \Big\llbracket \ \begin{tikzpicture}
	\begin{pgfonlayer}{nodelayer}
		\node [style=none] (0) at (-0.75, 0.375) {};
		\node [style=none] (1) at (0.75, 0.375) {};
		\node [style=none] (2) at (0.75, -0.375) {};
		\node [style=none] (3) at (-0.75, -0.375) {};
		\node [style=rmat] (4) at (0, 0) {\(a\)};
		\node [style=none] (5) at (-0.75, 0) {};
		\node [style=none] (6) at (0.75, 0) {};
	\end{pgfonlayer}
	\begin{pgfonlayer}{edgelayer}
		\draw [style=background] (2.center)
			 to (1.center)
			 to (0.center)
			 to (3.center)
			 to cycle;
		\draw (5.center) to (6.center);
	\end{pgfonlayer}
\end{tikzpicture} \ \Big\rrbracket
    &\coloneqq \{ (b, ab) \ | \ \forall b \in \K \},
  \end{align*}
  modulo the following equations, for all \(a,b\in \K\) and nonzero \(c \in \K\):
  \[ \input{gaa/axioms.tikz}\]
\end{repthm}

By reflecting the bicategory structure of \(\FAffRel{\K}\) into its diagrammatic presentation, we find that the following equation holds, which we shall use:
\begin{lemma}
\[\begin{tikzpicture}
	\begin{pgfonlayer}{nodelayer}
		\node [style=none] (47) at (6.25, 0.75) {};
		\node [style=none] (48) at (8.5, 0.75) {};
		\node [style=none] (49) at (8.5, -0.75) {};
		\node [style=none] (50) at (6.25, -0.75) {};
		\node [style=wn] (52) at (7.75, 0) {};
		\node [style=none] (53) at (8.5, 0.5) {};
		\node [style=none] (56) at (8.5, -0.5) {};
		\node [style=wn] (57) at (7, 0) {};
		\node [style=none] (58) at (6.25, 0.5) {};
		\node [style=none] (59) at (6.25, -0.5) {};
		\node [style=none] (60) at (9.5, 0.75) {};
		\node [style=none] (61) at (11.75, 0.75) {};
		\node [style=none] (62) at (11.75, -0.75) {};
		\node [style=none] (63) at (9.5, -0.75) {};
		\node [style=none] (65) at (11.75, 0.5) {};
		\node [style=none] (66) at (11.75, -0.5) {};
		\node [style=none] (68) at (9.5, 0.5) {};
		\node [style=none] (69) at (9.5, -0.5) {};
		\node [style=none] (70) at (9, 0) {\(\subset\)};
	\end{pgfonlayer}
	\begin{pgfonlayer}{edgelayer}
		\draw [style=background] (49.center)
			 to (50.center)
			 to (47.center)
			 to (48.center)
			 to cycle;
		\draw [in=-180, out=60] (52) to (53.center);
		\draw [in=-60, out=180] (56.center) to (52);
		\draw [in=0, out=120] (57) to (58.center);
		\draw [in=-120, out=0] (59.center) to (57);
		\draw (57) to (52);
		\draw [style=background] (62.center)
			 to (63.center)
			 to (60.center)
			 to (61.center)
			 to cycle;
		\draw (68.center) to (65.center);
		\draw (66.center) to (69.center);
	\end{pgfonlayer}
\end{tikzpicture}\]
\end{lemma}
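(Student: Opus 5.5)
The plan is to prove the inequation diagrammatically and check it against the semantics of \Cref{thm:signalflow:gaa}, reading the symbol \(\subset\) literally as the approximation \(2\)-cell of the discard bicategory \(\FAffRel{\K}\) between the two drawn composites. The first step is to compute both sides. By the presentation, an unlabelled white spider denotes the white spider with phase \(0\), whose interpretation is \(\{(\vb v,\vb w) \mid \Sigma\, v_j = \Sigma\, w_j\}\). The left-hand diagram consists of two white spiders, a merge and then a split, joined along a single wire. Spider fusion (one of the axioms in \Cref{app:gaa}) collapses it into a single white spider with two inputs and two outputs. Its semantics is therefore the \(3\)-dimensional subspace
\[
  L \coloneqq \{(v_1,v_2,w_1,w_2)\in \K^4 \mid v_1+v_2=w_1+w_2\}.
\]
The right-hand diagram is \(1_{\K^2}\), the \(2\)-dimensional diagonal
\[
  D \coloneqq \{(v_1,v_2,v_1,v_2)\}.
\]

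The second step is to establish the relation between \(L\) and \(D\) in the enrichment order. I will do this in the same style as \Cref{lem:approx:duality}, as follows.
\begin{enumerate}
  \item Write the fused spider as the composite of the white comonoid counit-type map \(\K^2\to\K\) (summation) followed by its converse \(\K\to\K^2\).
  \item Express the identity on two wires as a snake built from white cups and caps.
  \item Compare the two composites factor by factor, using monotonicity of composition and tensor from \Cref{def:approx:preorder-enrichment}.
  \item Apply lax naturality of the discard to the one factor in which they differ, namely the wire carrying \(v_1-w_1=w_2-v_2\).
\end{enumerate}
This mirrors the paper's remark that the lemma is obtained ``by reflecting the bicategory structure into the diagrammatic presentation''.

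The third step is the main obstacle: getting the orientation of \(\subset\) right. Semantically, \(D\subsetneq L\) as subspaces: every pair satisfying \(v=w\) satisfies \(v_1+v_2=w_1+w_2\), and the two spaces have dimensions \(2\) and \(3\). Strictness follows from this dimension count, or from the witness \((1,0,0,1)\in L\setminus D\). So the content of the lemma is this strict comparison between the merge--split composite and the identity. When writing the final argument, I will state explicitly which composite is the more informative one in the sense of \Cref{def:approx:preorder-enrichment}, so that the diagrammatic symbol matches the enrichment of \(\FAffRel{\K}\). Before committing, I would double-check the intended reading of the drawn symbol against the paper's convention, because a naive subset reading of ``LHS \(\subset\) RHS'' runs opposite to this computation. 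The rest is routine linear algebra.
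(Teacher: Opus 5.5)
Your computation of the two sides is correct. The left side is $L=\{v_1+v_2=w_1+w_2\}$, of dimension $3$, and the right side is the diagonal $D$, of dimension $2$. The gap is the point you defer: you never decide which inclusion you are proving, and it cannot be decided in favour of the printed one. In $\FAffRel{\K}$ the enrichment is plain subset inclusion. This is the same order in which lax naturality $\discard_Y\circ f\subseteq\discard_X$ and \Cref{lem:approx:duality} hold, so no alternative reading of $\subset$ is available.

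The display therefore asserts $L\subseteq D$, and your own witness $(1,0,0,1)\in L\setminus D$ refutes it. No derivation from the discard-bicategory axioms can prove the statement as drawn, and yours does not either. Inserting a discard or codiscard into one factor can only enlarge a composite, so your step~4 necessarily yields $D\subseteq L$. The paper's own justification ("can be checked semantically", plus Paix\~ao et al.) likewise supports only $D\subsetneq L$. As printed, either the direction of $\subset$ or the colour of the spiders is swapped: the \emph{black} merge--split $\{v_1=v_2=w_1=w_2\}$ does sit strictly below the identity.

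So you should commit, and prove that $1_{\K^2}$ lies strictly below the white merge--split. Here is a precise version of your steps 1--4. Let $\varphi\colon\K^2\to\K^2$ be $(v_1,v_2)\mapsto(v_1+v_2,\,v_1)$, an invertible linear map built from a black copy and a white merge. Then
\[
  D=\varphi^{-1}\circ(1_\K\otimes 1_\K)\circ\varphi
  \quad\text{and}\quad
  L=\varphi^{-1}\circ\bigl(1_\K\otimes(\codiscard_\K\circ\discard_\K)\bigr)\circ\varphi .
\]
Now $1_\K\subseteq\codiscard_\K\circ\discard_\K$ by \Cref{lem:approx:duality}, and composition and tensor are monotone by \Cref{def:approx:preorder-enrichment}, so $D\subseteq L$. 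Strictness follows from the dimension count.

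Your step~2, rewriting the identity as a snake, contributes nothing. The essential move is exhibiting both sides as one context that differs in a single slot. The paper only appeals to semantics and a citation; this route gives an actual derivation from the discard-bicategory structure, but of the converse of the displayed inclusion.
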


This lemma can be checked semantically; moreover, it follows from the bicategorical axiomatisation of \(\rel(\FVect)\) of Paix\~ao et al.~\cite{Paixo2022}.

  \section{Rings of polynomials}\label{app:polynomials}
  In this Appendix we recall some basic algebraic notions involving rings of polynomials. In order to establish notation, we start with the following definition:

\begin{definition}
  Given a field \(\K\), the \textbf{polynomial ring} \(\poly\) is a principal ideal domain (PID).
\end{definition}

This can be extended to a larger PID by localising \(z\)

\begin{definition}
  The PID \(\poly[\K][z,1/z]\) given by adding an inverse to the polynomial indeterminate \(z\) in \(\poly\) is called the PID of \textbf{Laurent polynomials}  over \(\K\) in \(z\).
\end{definition}
\begin{lemma}
  There is an embedding of PIDs \[\poly \rightarrowtail \poly[\K][z,1/z].\]
\end{lemma}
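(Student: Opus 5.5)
The statement is the embedding \(\poly \rightarrowtail \poly[\K][z,1/z]\). My plan is to exhibit it as the canonical localisation map, \(\poly \to S^{-1}\poly\) with \(S = \{1, z, z^2, \dots\}\), and then show that this map is an injective ring homomorphism.

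First I fix a concrete model of the Laurent polynomials. I take \(\poly[\K][z,1/z]\) to be the \(\K\)-vector space of finitely supported sequences \((a_k)_{k\in\Z}\), written \(\sum_k a_k z^k\), with the convolution product \(\bigl(\sum_k a_k z^k\bigr)\bigl(\sum_m b_m z^m\bigr) = \sum_n \bigl(\sum_{k+m=n} a_k b_m\bigr) z^n\). Since both sequences have finite support, each inner sum is finite. In this model \(z\) has inverse \(z^{-1}\), and the ring is commutative because the convolution formula is symmetric. It agrees with \(S^{-1}\poly\) through the map sending \(p/z^n\) to the shift of the coefficients of \(p\) by \(-n\).

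Next I define \(\iota\colon \poly \to \poly[\K][z,1/z]\) by \(\sum_{k\ge 0} a_k z^k \mapsto \sum_{k\in\Z} a_k z^k\), setting \(a_k = 0\) for \(k<0\). Additivity and \(\iota(1)=1\) are immediate. Multiplicativity holds because the Cauchy product in \(\poly\) is the restriction of the convolution formula to sequences supported on \(\N\), and \(\N\) is closed under addition. Injectivity holds because \(\iota\) does not change any coefficient.

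Finally, the statement calls this an embedding of PIDs, so I also have to check that the target is a PID. It is an integral domain because it sits inside the field \(\fls\), or directly because lowest-degree coefficients multiply. For principality I use the correspondence between ideals of a localisation and ideals of the original ring. Given an ideal \(\mathfrak a \subseteq \poly[\K][z,1/z]\), the contraction \(\mathfrak a \cap \poly\) is an ideal of \(\poly\), hence equal to \((p)\). Every element of \(\mathfrak a\) becomes a polynomial in \(\mathfrak a\) after multiplying by some power of the unit \(z\). Therefore \(\mathfrak a = (\mathfrak a\cap\poly)\,\poly[\K][z,1/z] = (\iota(p))\). This last step is the only one that needs an argument; the rest is bookkeeping with coefficients.
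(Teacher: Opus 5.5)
Your proof is correct and takes the same approach as the paper. The paper gives no argument of its own: it defines \(\poly[\K][z,1/z]\) as \(\poly\) with \(z\) inverted, so the intended embedding is the canonical localisation map, which is exactly the map \(\iota\) you construct. Your check that the target is a PID, by writing each ideal as the extension of its contraction to \(\poly\), supplies a detail the paper folds into its definition.
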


The polynomial ring can also be extended to a larger PID:

\begin{definition}
  The PID of \textbf{formal power series} \(\fps\) has elements given by infinite sequences in \(\K\), denoted
  \[f(z)\coloneqq \sum_{j=0}^{\infty} f_j \cdot z^j \]
  The additive, Abelian group structure is given pointwise as for polynomials:
  \[
    {f}(z)+{g}(z) \coloneqq \sum_{j=0}^{\infty} (f_j+g_j) \cdot z^j,
  \]
  \[- {f}(z) \coloneqq \sum_{j=0}^{\infty} (-f_j)\cdot z^j  ,\quad\text{and}\quad
  0\coloneqq 0\cdot z^0.\]
  The multiplicative monoid structure is given by:
  \[
    {f}(z) {g}(z) \coloneqq \sum_{j=0}^{\infty}  \sum_{k=0}^{\infty} (f_j \cdot g_k) \cdot z^{j+k},
    \ \text{and}\
    1\coloneqq 1\cdot z^0
  \]
\end{definition}
\begin{lemma}
  There is an embedding of PIDs \(\poly \rightarrowtail \fps\).
\end{lemma}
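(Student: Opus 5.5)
The plan is to take the evident map $\iota\colon \poly \to \fps$ sending a polynomial $p(z)=\sum_{j=0}^{d} p_j z^j$ to the formal power series whose coefficient sequence is $(p_0,\dots,p_d,0,0,\dots)$. In other words, we set $\iota(p)_j \coloneqq p_j$ for $j\le d$ and $\iota(p)_j\coloneqq 0$ for $j>d$. This is well-defined because a polynomial is exactly a finitely supported coefficient sequence. Padding a representative with zero coefficients does not change the resulting sequence.

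First I will check that $\iota$ is a unital ring homomorphism. Additivity and preservation of negation and $0$ hold because both rings use coefficientwise addition, and the sum of finitely supported sequences is computed the same way in both. Preservation of $1$ is immediate, since $1 = 1\cdot z^0$ in both rings. For multiplication, the product in $\fps$ is given by the Cauchy formula $(fg)_n=\sum_{j+k=n} f_j g_k$. For finitely supported inputs this sum has only finitely many nonzero terms and agrees with the usual polynomial product coefficient by coefficient. Injectivity is then immediate: if $\iota(p)=0$, every coefficient of $p$ vanishes, so $p=0$.

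Next I will verify that both rings are indeed PIDs, so that the map is an embedding in the category the lemma refers to. For $\poly$ this is the standard Euclidean-division argument. For $\fps$, I will observe that a series with nonzero constant term is a unit, since its inverse can be solved for recursively coefficient by coefficient. Hence every nonzero $f$ factors as $u\,z^{\ord f}$ with $u$ a unit. It follows that every nonzero ideal is of the form $(z^m)$, where $m$ is the minimal order of its elements. Therefore $\fps$ is a discrete valuation ring, and in particular a PID.

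Nothing here is a genuine obstacle. The only step needing any care is the multiplicativity check, and that reduces to noting that the Cauchy sum is finite on polynomials and coincides with polynomial multiplication.
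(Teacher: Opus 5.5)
Your proof is correct. The paper states this lemma without proof as a standard fact, even building the PID property of \(\fps\) into its definition. Your argument is the standard justification: the coefficientwise inclusion, the check that the Cauchy product restricts to polynomial multiplication, and the discrete-valuation-ring structure of \(\fps\). The one point you leave implicit is that \(\fps\) is an integral domain, which a PID must be. This follows at once from your factorisation \(f = u\,z^{\ord f}\), since the coefficient of \(z^{\ord f + \ord g}\) in \(fg\) is \(f_{\ord f}\,g_{\ord g} \neq 0\).
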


The polynomial ring can be extended to a field:

\begin{definition}
  The field of fractions of polynomials or Laurent polynomials in \(z\),  \(\rat\) is called the field of \textbf{rational functions}  over \(\K\) in \(z\).
\end{definition}
\begin{lemma}
  There is an embedding of PIDs \(\poly[\K][z,1/z] \rightarrowtail \rat\).
\end{lemma}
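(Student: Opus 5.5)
The plan is to obtain the embedding from the universal property of localisation. By definition, \(\poly[\K][z,1/z]\) is obtained from \(\poly\) by adjoining an inverse to \(z\); concretely it is the localisation of \(\poly\) at the multiplicative set \(\{z^n \mid n\in\N\}\). The embedding \(\poly\rightarrowtail\rat\) into the field of fractions sends \(z\) to a nonzero element of a field, hence to a unit. So by the universal property of localisation there is a unique ring homomorphism
\[
\iota\colon \poly[\K][z,1/z]\to\rat
\quad\text{extending}\quad \poly\rightarrowtail\rat .
\]
Explicitly, \(\iota\) sends a Laurent polynomial \(p(z)/z^n\), with \(p\in\poly\) and \(n\in\N\), to the fraction \(p/z^n\).

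The remaining step is injectivity. Suppose \(\iota(p/z^n)=0\) in \(\rat\). Since \(z^n\) is a unit in \(\rat\), this gives \(p=0\) in \(\rat\). Because \(\poly\) is an integral domain, it embeds in its field of fractions, so \(p=0\) already in \(\poly\), and therefore \(p/z^n=0\). Thus \(\iota\) is injective, and so it is an embedding of rings. Both rings are PIDs, as recorded in the preceding definitions, so this is an embedding of PIDs in the sense used in the appendix: an injective ring homomorphism between PIDs. No compatibility with ideals is needed beyond this.

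An alternative check is that \(\rat\) is also the field of fractions of \(\poly[\K][z,1/z]\). Any fraction of Laurent polynomials can be written as a fraction of ordinary polynomials by clearing powers of \(z\) from numerator and denominator. This shows that the phrase ``field of fractions of polynomials or Laurent polynomials'' in the definition of \(\rat\) is unambiguous, and the embedding is then the canonical map from a domain into its fraction field.

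I do not expect any real obstacle. The only point needing care is that the universal property is applied to the right multiplicative set, and that injectivity uses that \(\poly\) has no zero divisors.
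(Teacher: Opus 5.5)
Your proof is correct. The paper gives no proof of this lemma: it treats it as immediate from defining \(\rat\) as the field of fractions of polynomials \emph{or} Laurent polynomials, so the intended argument is just the canonical embedding of an integral domain into its fraction field, which is your ``alternative check''; your localisation argument is a more explicit version of the same fact and also justifies that the two readings of that definition agree.
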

By taking the \(z\)-adic completion, this can be extended to an even bigger field, which is a bit more involved to define:

\begin{definition}
  The field of \textbf{formal Laurent series} \(\fls\)  is the \(z\) adic completion of \(\poly\), \(\poly[\K][z,1/z]\), \(\fps\) or \(\rat\),  which is more complicated to describe concretely.  \(\fls\) has elements which are biinfinite sequences  \({f}(z)\in \K^\Z\) of elements in \(\K\), bounded below by some \(m\in \Z\), denoted:

  \[f(z)\coloneqq \sum_{j=m}^{\infty} f_j \cdot z^j \]

  The largest \(j\) such that for all \(k\in \N\) \(f_{j-k-1}=0\), if it exists,  is called the \textbf{order} of \( {f}(z)\), denoted \(\ord( {f}(z))\); otherwise, set \(\ord( {f}(z))\coloneqq 0\).

  The additive group structure and the multiplicative monoid structure are essentially the same as for formal power series.

  The multiplicative inverse is a bit more complicated. Take some nonzero \(f(z)\).
  Then we can factor out the lowest order term:
  \[
    {f}(z) = f_{\ord({f}(z))} z^{\ord({f}(z))} \left(1 + \sum_{j=1}^{\infty} \frac{f_{j+\ord({f}(z))}}{f_{\ord({f}(z))}} z^j \right)
  \]

  The inverse of \(1 + \sum_{j=1}^{\infty} \frac{f_{j+\ord({f}(z))}}{f_{\ord({f}(z))}} z^j \) is therefore a geometric series \({g}(z)\coloneqq \sum_{j=0}^{\infty} g_j\cdot z^j\) where the coefficients are given by the recursive formula:
  \[
    g_0 = 1, \quad\text{and for all }n\ge 0,\quad g_{n+1} = -\sum_{k=1}^{n+1} \frac{f_{k+\ord(f)}}{f_{\ord(f)}} g_{n+1-k}
  \]

  This gives us the inverse \({f}^{-1}(z) \coloneqq \frac{z^{-\ord({f}(z))}}{f_{\ord({f}(z))}} {g}(z)\).
\end{definition}

\begin{lemma}
  There is a field extension \(\gamma:\rat\rightarrowtail \fls\), called the \textbf{geometric series embedding}, sending \(f(z)/g(z) \in \rat \) to \(f(z) \cdot  g^{-1}(z) \in \fls\).
\end{lemma}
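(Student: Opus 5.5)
We prove the claim through the universal property of the field of fractions. The only non-formal input is that \(\fls\) really is a field, and we check that first.

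The first step is to show that \(\fls\) is a field, using the multiplicative inverse described in its definition. Products of Laurent series are well defined because each coefficient is a finite sum: both factors are bounded below, so only finitely many pairs of exponents contribute to any given power of \(z\). This makes \(\fls\) a commutative ring containing \(\fps\) as a subring. For a nonzero \(f(z)\), we factor it as \(c z^{m} (1+h(z))\) with \(c\neq 0\) and \(h\in z\fps\). It then suffices to invert \(1+h\). We check by induction on \(n\) that the recursively defined \(g\) satisfies \((1+h)g = 1\). The coefficient of \(z^{n+1}\) in \((1+h)g\) is \(g_{n+1}+\sum_{k=1}^{n+1} h_k g_{n+1-k}\), where \(h_k = f_{k+m}/c\). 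This coefficient vanishes exactly by the defining recursion for \(g_{n+1}\), and the constant coefficient is \(g_0=1\). Hence \(c^{-1}z^{-m}g\) is a two-sided inverse of \(f\). I expect this coefficient check to be the only step with real content. Everything else is formal.

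The second step is to set up the map. The composite of the embeddings \(\poly\rightarrowtail\fps\rightarrowtail\fls\) is an injective ring homomorphism \(\iota\colon\poly\to\fls\) into a field. So \(\iota\) sends every nonzero polynomial \(g\) to a unit \(g^{-1}\). Since \(\rat\) is the field of fractions of the integral domain \(\poly\), its universal property gives a unique ring homomorphism \(\gamma\colon\rat\to\fls\) with \(\gamma(f/g)=\iota(f)\,\iota(g)^{-1}\).

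To make this concrete, we also verify well-definedness directly. Suppose \(f/g=f'/g'\), so that \(fg'=f'g\) in \(\poly\). Applying \(\iota\) and multiplying by \(\iota(g)^{-1}\iota(g')^{-1}\) gives \(fg^{-1}=f'g'^{-1}\) in \(\fls\). Additivity and multiplicativity then follow from the usual fraction formulas, because \(\iota\) is a ring map. If \(\rat\) is instead presented as fractions of Laurent polynomials, the same argument applies: \(\iota\) extends to \(\poly[\K][z,1/z]\) by sending \(1/z\) to the unit \(z^{-1}\in\fls\).

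Finally, \(\gamma\) is a ring homomorphism between fields that sends \(1\) to \(1\). Its kernel is therefore an ideal not containing \(1\), so it is zero, and \(\gamma\) is injective. This makes \(\gamma\) a field extension \(\rat\rightarrowtail\fls\).

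We can also confirm that it deserves the name ``geometric series embedding''. For example, \(\gamma(1/(1-az))=\sum_{j\ge 0}a^j z^j\), obtained by the recursion with \(h=-az\), which gives \(g_{n+1}=a\,g_n\).
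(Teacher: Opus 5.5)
Your proposal is correct and follows the route the paper implicitly relies on. The paper gives no separate proof of this lemma; it rests on the explicit inverse formula in its definition of formal Laurent series together with the universal property of the field of fractions of the polynomial ring, which is exactly your argument, with your coefficient check of the recursion supplying the one verification the paper leaves unstated.
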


Putting this all together we have the lattice of PIDs:
\[\xymatrix@C=1em@R=1em{
  \K  \ar@{ >->}[r]  & \poly \ar@{ >->}[r] \ar@{ >->}[d] &  \poly[\K][z,1/z]  \ar@{ >->}[r]  &  \rat  \ar@{ >->}[d]\\
  & \fps\ar@{ >->}[rr] & & \fls
  }\]

  \section[Proofs of Section 2]{Proofs of \Cref{sec:approx}}
  \IfFileExists{\jobname-pratendapprox.tex}{\printProofs[approx]}{}

  \section[Proofs of Section 3]{Proofs of \Cref{sec:state}}
  \IfFileExists{\jobname-pratendstate.tex}{\printProofs[state]}{}

  \section[Proofs of Section 4]{Proofs of \Cref{sec:obs}}
  \IfFileExists{\jobname-pratendobs.tex}{\printProofs[obs]}{}

  \section[Proofs of Section 5]{Proofs of \Cref{sec:compactness}}
  \IfFileExists{\jobname-pratendcompactness.tex}{\printProofs[compactness]}{}

  \section[Proofs of Section 6]{Proofs of \Cref{sec:signalflow}}
  \IfFileExists{\jobname-pratendsignalflow.tex}{\printProofs[signalflow]}{}
\fi

\end{document}